\documentclass[a4paper, 11pt]{article}
\usepackage[utf8]{inputenc}
\usepackage[english]{babel}
\usepackage{amssymb}
\usepackage{amsthm}
\usepackage{mathrsfs}
\usepackage{graphicx}
\usepackage{epstopdf}
\usepackage{commath}
\usepackage{babelbib}
\usepackage{hyperref}
\usepackage[top=2.5cm, left=2.5cm, right=2.5cm, bottom=2.5cm]{geometry}

\theoremstyle{plain}
\newtheorem{thm}{Theorem}[section]
\newtheorem{lem}[thm]{Lemma}

\newtheorem{cor}[thm]{Corollary}

\theoremstyle{definition}
\newtheorem{defn}[thm]{Definition}
\newtheorem{exmp}[thm]{Example}

\theoremstyle{remark}
\newtheorem{rem}[thm]{Remark}

\newtheorem*{out}{Outline}

\title{\textsc{The Corolla Polynomial for spontaneously broken Gauge Theories}}
\author{David Prinz\footnote{\href{mailto:prinz@physik.hu-berlin.de}{prinz@physik.hu-berlin.de}},\\Institut für Mathematik und Institut für Physik,\\Humboldt-Universität zu Berlin}
\date{\today}

\begin{document}

\maketitle

\begin{abstract}
In \cite{Kreimer_Yeats,Kreimer_Sars_Suijlekom,Sars} the Corolla Polynomial \( \mathcal C (\Gamma) \in \mathbb C [a_{h_1}, \ldots, a_{h_{\left \vert \Gamma^{[1/2]} \right \vert}}] \) was introduced as a graph polynomial in half-edge variables \( \set{a_h}_{h \in \Gamma^{[1/2]}} \) over a 3-regular scalar quantum field theory (QFT) Feynman graph \( \Gamma \). It allows for a covariant quantization of pure Yang-Mills theory without the need for introducing ghost fields, clarifies the relation between quantum gauge theory and scalar QFT with cubic interaction and translates back the problem of renormalizing quantum gauge theory to the problem of renormalizing scalar QFT with cubic interaction (which is super renormalizable in 4 dimensions of spacetime). Furthermore, it is, as we believe, useful for computer calculations. In \cite{Prinz} on which this paper is based\footnote{For reasons of clarity and comprehensibility \cite{Prinz} will not be given explicitly as a reference every time below, even if most passages are literally quoted.} the formulation of \cite{Kreimer_Yeats,Kreimer_Sars_Suijlekom,Sars} gets slightly altered in a fashion specialized in the case of the Feynman gauge. It is then formulated as a graph polynomial \( \mathcal C ( \Gamma ) \in \mathbb C [a_{h_{1 \pm}}, \ldots, a_{h_{\left \vert \Gamma^{[1/2]} \right \vert} \vphantom{h}_\pm}, b_{h_1}, \ldots, b_{h_{\left \vert \Gamma^{[1/2]} \right \vert}}] \) in three different types of half-edge variables \( \set{a_{h_+} , a_{h_-} , b_h}_{h \in \Gamma^{[1/2]}} \). This formulation is also suitable for the generalization to the case of spontaneously broken gauge theories (in particular all bosons from the Standard Model), as was first worked out in \cite{Prinz} and gets reviewed here.
\end{abstract}

\section{Introduction}

\subsection{Motivation}

One of the curiosities of non-abelian gauge theories is the observation that gauge bosons in the quantized theory do not only possess the two experimentally verified transversal degrees of freedom. In addition, there is also a longitudinal one. In quantum electrodynamics photons also possess a longitudinal degree of freedom. The Ward–Takahashi identity then ensures that this longitudinal degree of freedom is cancelled with the corresponding timelike degree of freedom. Therefore, the complete photon amplitude is purely transversal. In non-abelian gauge theories this problem was solved by the introduction of unphysical particles\footnote{In the sense of contradicting the Spin-Statistic theorem, i.e. being scalar particles (having spin 0) but obeying Fermi-Dirac-statistics \cite{Kreimer,Peskin_Schroeder}.}, existing only in closed loops, the so-called ghosts. Then, the Slavnov-Taylor identity ensures that all gauge boson amplitudes are purely transversal. Although it was shown that this could be done in a self-consistent way, it remained unsatisfying since there was no convenient argument despite adjusting the theory to experimental facts. However, this question is solved in the definition of the Corolla polynomial \cite{Kreimer_Sars_Suijlekom} by the introduction of cycle homology and allows, as such, a covariant quantization of gauge fields without the need of introducing ghost fields \cite{Kreimer_Sars_Suijlekom,Kreimer,Peskin_Schroeder,Itzykson_Zuber}.

Furthermore, the introduction of the Corolla polynomial also clarifies the relation between scalar QFT with cubic interaction and quantum gauge theory. This is done using the parametric representation with its two Kirchhoff or Symanzik polynomials and the creation of a Corolla differential out of the Corolla polynomial making implicit use of graph homology \cite{Kreimer_Sars_Suijlekom}. In particular, the parametric integrand for all gauge theory Feynman graphs, which can be created from a 3-regular scalar QFT Feynman graph \(\Gamma\) by shrinking edges, can be obtained whilst acting with the Corolla differential \(\mathcal D_{\text{QCD}} (\Gamma)\) on the parametric integrand \(I (\Gamma)\) such that the parametric integrand for gauge theory \(\tilde{I}_{\mathcal F} (\Gamma)\) \cite{Kreimer_Sars_Suijlekom,Sars} reads\footnote{Actually the different quantum gauge theory graph contributions created by the Corolla differential are hidden in the parametric integrand \(\tilde{I}_{\mathcal F} (\Gamma)\) in its regular part and residues along the Schwinger parameters, cf. \mbox{Subsection \ref{ssec:corolla_polynomial_pure_yang-mills}} (mainly \thmref{thm:gauge_theory_amplitude_pure_yang-mills}).} \cite{Kreimer_Sars_Suijlekom}
\begin{equation}
	\tilde{I}_{\mathcal F} (\Gamma) = \mathcal D_{\text{QCD}} (\Gamma) I (\Gamma) \, .
\end{equation}

It is then possible to receive the renormalized quantum gauge theory amplitude by replacing \(I (\Gamma)\) by its renormalized analogue \(I^R (\Gamma)\), i.e. the problem of renormalizing gauge theory gets translated back to the renormalization of scalar QFT with cubic interaction (which is super renormalizable in 4 dimensions of spacetime). This is a simple consequence of Leibniz's rule for differentiation under the integral sign \cite{Kreimer_Sars_Suijlekom,Kreimer}.

The aim of this paper is now to review the generalization worked out in \cite{Prinz} to the approach of \cite{Kreimer_Yeats,Kreimer_Sars_Suijlekom,Sars} to include the gauge bosons of the electroweak sector of the Standard Model (cf. \mbox{Subsection \ref{ssec:corolla_polynomial_gauge_bosons_electroweak}}) as well as its scalar particles (cf. \mbox{Subsection \ref{ssec:corolla_polynomial_scalar_particles}}). This was achieved by first working out the combinatorics of labeling a 3-regular scalar QFT Feynman graph with labels of the gauge bosons of the electroweak sector of the Standard Model and then by working out the additional tensor structures arising from the inclusion of the Feynman rules for the scalar particles of the electroweak sector of the Standard Model \cite{Prinz,Romao_Silva}.

\subsection{Organization of this Paper}

This paper is organized as follows: First we set up the notations and conventions in \mbox{Section \ref{sec:notation_and_convention}}. Then, we introduce the necessary graph theoretic notions in \mbox{Section \ref{sec:graph_theoretic_notions}}, thereby allowing us to define the parametric representation in a slightly different way which suits our purposes in \mbox{Section \ref{sec:parametric_representations_of_scalar_quantum_field_theories}}. With this knowledge, we are then able to introduce the several Corolla Polynomials and Differentials in \mbox{Section \ref{sec:corolla_polynomial_and_differential}}. We then end, in \mbox{Section \ref{sec:conclusion}}, with a short conclusion.

\section{Notations and Conventions} \label{sec:notation_and_convention}

\subsection{Einstein summation Convention} \label{ssec:einstein_summation_convention}
Important in this paper and also in \cite{Kreimer_Sars_Suijlekom,Sars,Prinz} is that Einstein summation convention is even assumed (if not indicated otherwise) whenever two Lorentz indices are the same --- independent of if they correspond to the space of covariant or contravariant vectors:
\begin{equation}
	\eta^{\mu \nu} \eta^{\nu \rho} := \eta^{\mu \nu_1} \eta_{\nu_1 \nu_2} \eta^{\nu_2 \rho} = \eta^{\mu \rho}
\end{equation}
This at first sight inconvenient looking expression allows to define the Corolla polynomial in a more elegant way, cf. \remref{rem:feynman_gauge}.

\subsection{Feynman Rules, chosen Gauge and Dimension of Spacetime}
We use the definitions for the Feynman rules given in \cite{Romao_Silva} with all appearing signs chosen positive, i.e. \(\eta_G = \eta_s = \eta_e = \eta = \eta_Z = 1\). Furthermore, we use the Feynman gauge throughout this paper --- this allows a more compact notation and avoids unnecessary applications of the Leibniz rule, i.e. \(\xi_G = \xi_A = \xi_W = \xi_Z = 1\), cf. \remref{rem:feynman_gauge}. The relevant Feynman rules with this conventions are explicitly listed in \cite[Appendix A]{Prinz}. Furthermore, we work in a 4-dimensional spacetime for concreteness, even though the Corolla polynomial is not limited to that choice.

\subsection{Feynman Graphs with oriented Edges} \label{ssec:oriented_edges}
In the Standard Model some particle types have oriented edges (fermions, ghosts, \(W^\pm\)-particles, \(\varphi^\pm\)-particles). We work in this paper with unoriented edges, where a graph with unoriented edges is understood as the sum of all graphs with all possible edge orientations times their corresponding symmetry factor (cf. \defnref{defn:symmetry_factor}).

\subsection{Coupling Constants}
In order to avoid confusion between Euler's number and the electric charge, we underline all coupling constants.

\section{Graph theoretic Notions} \label{sec:graph_theoretic_notions}
A QFT is characterized by its Lagrangian density which dictates the sets \(\mathcal R_V\) and \(\mathcal R_E\) of all possible vertex- and edge-types, respectively \cite{Kreimer}. This states all allowed particle interactions and particle types out of which Feynman graphs can be built of. Now, we provide all necessary graph theoretic notions:

\vspace{\baselineskip}

\begin{defn}[Feynman graph, in parts literally quoted from {\cite[Definition 1]{Suijlekom}}] \label{defn:Feynman_graph}
A Feynman graph \(\Gamma\) is characterized by a set of vertices \(\Gamma^{[0]}\) and a set of edges \(\Gamma^{[1]} = \Gamma^{[1]}_{\text{ext}} \cup \Gamma^{[1]}_{\text{int}}\) whose elements are part of \(\mathcal R_V\) and \(\mathcal R_E\), respectively, and a set of maps
\begin{equation}
	\partial_j : \quad \Gamma^{[1]} \mapsto \Gamma^{[0]} \cup \set{1, 2, \ldots, N} \, , \qquad j \in \set{0, 1} \, ,
\end{equation}
respecting the vertex and edge types given by \(\mathcal R _V\) and \(\mathcal R _E\). Furthermore, the case \(\partial_0\) and \(\partial_1\) being both in \(\set{1, 2, \ldots, N}\) is excluded. The set \(\set{1, 2, \ldots, N}\) labels external edges of \(\Gamma\), such that \(\sum_{j=0}^1 \operatorname{card} \partial_j^{-1} (v) = 1\) for all \(v \in \set{1, 2, \ldots, N}\). The set of external edges is therefore defined as \(\Gamma^{[1]}_{\text{ext}} := \bigcup_{j = 0}^1 \partial_j \set{1, 2, \ldots, N}\) and the set of internal edges as its complement with respect to the set of edges of \(\Gamma\), i.e. \(\Gamma^{[1]}_{\text{int}} := \Gamma^{[1]} \setminus \Gamma^{[1]}_{\text{ext}}\). Therefore, external edges can be labeled by \(e_1, e_2, \ldots, e_N \in \Gamma^{[1]}_{\text{ext}}\), with \(e_k := \bigcup_{j=0}^1 \partial_j (k)\).

Furthermore, we omit scalar graphs with edges which form self-loops (so-called tadpoles). This is possible without loss of generality since their amplitudes vanish during the renormalization process\footnote{Speaking in Hopf-algebraic language, the graphs with tadpoles create an Hopf ideal \(I_\text{tad}\) in the Hopf-algebra \(H_{\text{FG}}\) of Feynman graphs and we can effectively work in the quotient space \(H_{\text{FG}} / I_\text{tad}\) \cite{Kreimer_Sars_Suijlekom}.}. In the gauge theory amplitudes created by the Corolla polynomial, amplitudes from gauge theory graphs with tadpoles will show up again by the use of graph-homology, cf. \mbox{Section \ref{sec:corolla_polynomial_and_differential}} and \cite{Kreimer_Sars_Suijlekom}.
\end{defn}

\vspace{\baselineskip}

The two Symanzik polynomials (\defnref{defn:first_symanzik_polynomial} and \defnref{defn:second_symanzik_polynomial}) are polynomials in edge-variables \(\set{A_e}_{e \in \Gamma^{[1]}}\), whereas the Corolla polynomial (\defnref{defn:corolla_polynomial}) is a polynomial in half-edge-variables \(\set{a_{h_+}, a_{h_-} , b_h}_{h \in \Gamma^{[1/2]}}\) which are defined as follows:

\vspace{\baselineskip}

\begin{defn}[Half-edge \cite{Kreimer_Sars_Suijlekom}] \label{defn:half-edge}
Let \(\Gamma\) be a Feynman graph, \(\Gamma^{[0]}\) the set of its vertices, \(\Gamma^{[1]}\) the set of its edges and \(n(v) \subset \Gamma^{[1]}\) the set of edges adjacent to the vertex \(v\). Then the tuple
\begin{equation}
	h := (v, e) \, , \qquad v \in \Gamma^{[0]} \, , \; e \in n (v) \, ,
\end{equation}
is called a half-edge of \(\Gamma\). The set of all half-edges of \(\Gamma\) is denoted by \(\Gamma^{[1/2]}\). Note that each internal edge defines two half-edges in a unique way, since we do not allow tadpoles (cf. \defnref{defn:Feynman_graph}).
\end{defn}

\vspace{\baselineskip}

Next, we define automorphisms of a Feynman graph \(\Gamma\) and its symmetry factor \(\operatorname{sym} (\Gamma)\):

\vspace{\baselineskip}

\begin{defn}[Automorphisms and symmetry factors of a Feynman graph, in parts literally quoted from {\cite[Definition 2]{Suijlekom}}] \label{defn:symmetry_factor}
Let \(\Gamma\) be a Feynman graph. An automorphism \(g\) of \(\Gamma\) is given by an automorphism \(g^{[0]}\) of \(\Gamma^{[0]}\) and an automorphism \(g^{[1]}\) of \(\Gamma^{[1]}\) that is the identity on \(\Gamma^{[1]}_{\text{ext}}\) and fulfilling for all \(e \in \Gamma^{[1]}\)
\begin{equation}
	\bigcup_{j=0}^1 g^{[0]} \left ( \partial_j (e) \right ) = \bigcup_{j=0}^1 \partial_j \left ( g^{[1]} (e) \right ) \, .
\end{equation}
Additionally, we require \(g^{[0]}\) and \(g^{[1]}\) to respect the vertex and edge types given by the sets \(\mathcal R_V\) and \(\mathcal R_E\), respectively.

The automorphism group of \(\Gamma\) is denoted by \(\operatorname{aut} (\Gamma)\) and consists of all such automorphisms of \(\Gamma\). The order of the automorphism group of \(\Gamma\) is called the symmetry factor of \(\Gamma\) and denoted by \(\operatorname{sym} (\Gamma)\), i.e.
\begin{equation}
	\operatorname{sym} (\Gamma) := \left \vert \operatorname{aut} (\Gamma) \right \vert \, .
\end{equation}
\end{defn}

\vspace{\baselineskip}

\begin{defn}[Paths and cycles \cite{Diestel}] \label{defn:paths_and_cycles}
Let \(\Gamma\) be a graph, \(\Gamma^{[0]}\) its vertex set, \(\Gamma^{[1]}\) its edge set and \(\gamma\) a subgraph of \(\Gamma\). Then:
\begin{enumerate}
\item \(\gamma\) is called a path (in \(\Gamma\)) if it is non-empty with vertex set \(\gamma^{[0]} = \set{v_1, v_2, \ldots, v_v} (\subset \Gamma^{[0]})\) and edge set \(\gamma^{[1]} = \set{v_1 v_2, v_2 v_3, \ldots, v_{v-1} v_v} (\subset \Gamma^{[1]})\) (where no repeated vertices are allowed, i.e. \(v_i \neq v_j\) for \(i \neq j\)). In particular, a path connects its two endpoints in a unique way and every internal vertex of the path has precisely two edges attached to it. The set of all paths of \(\Gamma\) is denoted by \(\mathscr P (\Gamma)\) and paths therein by \(P\).
\item \(\gamma\) is called a cycle in mathematics or a loop in physics\footnote{Be aware that a loop in mathematics is what is called a self-loop or a tadpole in physics. We use the terms cycle and loop interchangeably in the above defined sense, depending if the context is more motivated from a mathematical or a physical point of view.} (of \(\Gamma\)) if it is non-empty with vertex set \(\gamma^{[0]} = \set{v_1, v_2, \ldots, v_v} (\subset \Gamma^{[0]})\) and edge set \(\gamma^{[1]} = \set{v_1 v_2, v_2 v_3, \ldots, v_{v-1} v_v, v_v v_1} (\subset \Gamma^{[1]})\) (where no repeated vertices are allowed, i.e. \(v_i \neq v_j\) for \(i \neq j\)). In particular, a cycle can be created from the union of two disjoint paths having the same endpoints. The set of all cycles of a graph \(\Gamma\) is denoted by \(\mathscr C (\Gamma) \), a basis of cycles of \(\Gamma\) by \(\mathfrak C (\Gamma) \) and cycles therein by \(C\).
\end{enumerate}
\end{defn}

\vspace{\baselineskip}

\begin{defn}[Trees, (\(n\))-forests and spanning \(n\)-forests \cite{Diestel}] \label{defn:trees_forests_spanning_n-trees}
Let \(\Gamma\) be a graph and \(\gamma\) a subgraph of \(\Gamma\). Then:

\begin{enumerate}
\item \(\gamma\) is called a forest if it is non-empty and does not contain any cycles. Sets of forests are denoted by \(\mathscr F \) and forests therein by \(F\). If \(F\) has \(n\) connected components, then \(F\) is also called an \(n\)-forest. Sets of \(n\)-forests are denoted by \(\mathscr F_n\).
\item If a forest \(F\) is connected (i.e. a 1-forest) it is also called a tree. Trees are denoted by \(T\) and sets of trees by \(\mathscr T := \mathscr F_1\).
\item If a \mbox{\(n\)-}\-for\-est or a tree is a subgraph of \(\Gamma\) and covers all vertices of \(\Gamma\), then it is called a spanning \mbox{\(n\)-}\-for\-est or a spanning tree of \(\Gamma\), respectively. The sets of all spanning \mbox{\(n\)-}\-for\-ests and spanning trees of a graph \(\Gamma\) are denoted by \(\mathscr F_n (\Gamma)\) and \(\mathscr T (\Gamma)\), respectively. In this paper we are only interested in spanning trees and spanning 2-forests of scalar QFT Feynman graphs.
\end{enumerate}
\end{defn}

\vspace{\baselineskip}

\begin{defn}[\(k\)-Factor \cite{Diestel}] \label{defn:factor}
Let \(\Gamma\) be a graph and \(\gamma\) a subgraph of \(\Gamma\). Then:

\begin{enumerate}
\item \(\gamma\) is called a factor of \(\Gamma\), if \(\gamma\) covers all vertices of \(\Gamma\). The set of all factors of \(\Gamma\) is denoted by \(\mathfrak F (\Gamma)\) and factors therein by \(\mathfrak f\)\footnote{Please note the notational difference between a forest of a graph and a factor of a graph.}.
\item If a factor \( \mathfrak f \in \mathfrak F (\Gamma)\) is \(k\)-regular, i.e. every vertex of \( \mathfrak f \) has valence \(k\), then \( \mathfrak f \) is called a \(k\)-factor of \(\Gamma\). The set of \(k\)-factors for a graph \(\Gamma\) is denoted by \(\mathfrak F_k (\Gamma)\).
\item In this paper we are only interested in 2-factors of scalar QFT Feynman graphs. Therefore, we extend the definition of a \(k\)-factor to the case of graphs with external edges such that external edges also contribute to the valence of the external vertices. Thus, we consider disjoint unions of cycles and paths whose endpoints are external vertices, such that all vertices of the scalar QFT Feynman graph are covered.
\end{enumerate}
\end{defn}

\vspace{\baselineskip}

\begin{defn}[Incidence matrix, in parts literally quoted from {\cite[Page 7]{Kreimer_Sars_Suijlekom}}] \label{defn:incidence_matrix}
We choose an (arbitrary) ordering for the edges of our graph \(\Gamma\). Then we define the incidence matrix \(\varepsilon (\Gamma)\) of a graph \(\Gamma\) componentwise as follows:
\begin{equation}
	\varepsilon_{ve} (\Gamma) = \begin{cases} +1 & \text{if the vertex \(v\) is the endpoint of the edge \(e\)} \\ -1 & \text{if the vertex \(v\) is the starting point of the edge \(e\)} \\ 0 & \text{if \(e\) is not incident to the vertex \(v\)} \end{cases}
\end{equation}
\end{defn}

\vspace{\baselineskip}

\begin{defn}[Assigning 4-vectors to a Feynman graph, in parts literally quoted from  {\cite[Page 7]{Kreimer_Sars_Suijlekom}}] \label{defn:4-vectors}
We assign a 4-vector \(\xi_e^{\prime \mu}\) to each half-edge \(h \in \Gamma^{[1/2]}\) of a Feynman graph \(\Gamma\) in the following way: First we choose a basis of loops \(\mathfrak C_\Gamma \subset \mathscr C_\Gamma\) for \(\Gamma\) and we choose for each \(C \in \mathfrak C_\Gamma\) an orientation \(\varepsilon_{ve}^C\) (where \(\varepsilon_{ve}^C\) is defined in such a way that \(\varepsilon_{v e_1}^C = - \varepsilon_{v e_2}^C\) with \(e_1\) and \(e_2\) being two edges adjacent to the vertex \(v\) and inside the loop \(C\)). Then we assign to each half-edge \(h \equiv (v,e)\) the 4-vector\footnote{The notion of the half-edge \(h\) is here only important to clarify the orientation of the 4-vector \(\xi_e^{\prime \mu}\).}
\begin{equation}
	\varepsilon_{ve} \xi_e^{\prime \mu} := \varepsilon_{ve} \xi_e^\mu + \sum_{C \in \mathfrak C_\Gamma} \sum_{e \in C^{[1]}} \varepsilon_{ve}^C k_C^\mu \, ,
\end{equation}
where the \(\set{\xi_e^\mu}_{e \in \Gamma^{[1]}}\) are independent in the sense that momentum conservation is not assumed until the end of all calculations (cf. \remref{rem:standard_second_Symanzik_polynomial}), and the \(\set{k_C^\mu}_{C \in \mathfrak C_\Gamma}\) are the loop momenta which are to be integrated out.
\end{defn}

\vspace{\baselineskip}

\begin{defn}[Genus of a graph, in parts literally quoted from  {\cite[Page 4]{Kreimer_Sars_Suijlekom}}]
Let \(\Gamma\) be a graph and \(\mathcal M_k\) an oriented 2-dimensional manifold of genus \(k\). Then \(\Gamma\) is said to be \(k\)-compatible, if it can be drawn on \(\mathcal M_k\) without self-intersections. Furthermore, \(\Gamma\) is said to be of genus \(k\), if it can be drawn on \(\mathcal M_k\) without self intersections, but not on \(\mathcal M_l\) with \(l<k\). Planar graphs are of genus \(0\).
\end{defn}

\vspace{\baselineskip}

\begin{defn}[Orientation of a \(3\)-regular graph, in parts literally quoted from  {\cite[Page 4]{Kreimer_Sars_Suijlekom}}] \label{defn:orientation_of_a_3-regular_graph}
Let \(\Gamma\) be a \(3\)-regular \(k\)-compatible Feynman graph, drawn on an oriented 2-dimensional manifold \(\mathcal M_k\) of genus \(k\). Then \(\Gamma\) inherits an orientation by \(\mathcal M_k\) which defines to each half-edge \(h\) incident to a vertex \(v\) a unique successor \(h_+\) and predecessor \(h_-\).
\end{defn}

\section{Parametric Representation of scalar Quantum Field Theories} \label{sec:parametric_representations_of_scalar_quantum_field_theories}

The parametric representation for scalar QFTs can be obtained by the use of the so-called Schwinger trick\footnote{A similar result can be obtained using the so-called Feynman trick \cite{Kreimer}.} \cite{Kreimer_Sars_Suijlekom,Sars,Kreimer,Itzykson_Zuber}:
\begin{equation}
\frac{1}{x} = \int_{\mathbb R_+} e^{-Ax} \dif A
\end{equation}
Using this trick, the product of propagators in the numerator from the standard Feynman rules can be turned into a sum of an exponential function (where an Euclidean spacetime is assumed\footnote{This can be obtained from the Minkowski spacetime using the so-called Wick rotation \cite{Kreimer,Peskin_Schroeder,Itzykson_Zuber}.} and all appearing constants are collected in \(\alpha\)) \cite{Sars,Kreimer,Itzykson_Zuber}:
\begin{equation} \label{eqn:product_propagators}
\begin{split}
	\alpha \prod_{e \in \Gamma^{[1]}_{\text{int}}}  \frac{1}{\left ( \xi_e^{\prime 2} + m_e^2 \right )} = & \alpha \prod_{e \in \Gamma^{[1]}_{\text{int}}} \int_{\mathbb R_+} e^{- A_e \left ( \xi_e^{\prime 2} + m_e^2 \right )} \dif A_e \\ = & \alpha \int_{\mathbb R_+^{\left \vert \Gamma^{[1]}_{\text{int}} \right \vert}} e^{- \left ( \sum_{e \in \Gamma^{[1]}_{\text{int}}} A_e \left ( \xi_e^{\prime 2} + m_e^2 \right ) \right )} \left ( \prod_{e \in \Gamma^{[1]}_{\text{int}}} \dif A_e \right )
\end{split}
\end{equation}

\begin{rem}
For our purposes in defining the Corolla polynomial in \mbox{Section \ref{sec:corolla_polynomial_and_differential}} we alter the standard definition of the parametric representation of a scalar QFT in two ways: First, we will also include Schwinger variables \(A_e\) for external edges and secondly we assign 4-vectors \(\xi_e^{\prime \mu}\) to each edge \(e\) of \(\Gamma\) which we define to consist of the sum of independent variables \(\xi_e^\mu\) and the corresponding loop momenta \(k_C^\mu\) for \(C\) a loop in the basis of loops \(\mathfrak C_\Gamma\) of the Feynman graph \(\Gamma\) (cf. \defnref{defn:4-vectors}).
\end{rem}

\vspace{\baselineskip}

Therefore, we define the following simplified notation:

\vspace{\baselineskip}

\begin{defn}[Abbreviations \cite{Kreimer_Sars_Suijlekom}] We denote: \label{defn:abbreviations}
\begin{enumerate}
\item The simplex of our parametric integration domain by \(\sigma\), i.e. \[ \sigma := \mathbb R_+^{\left \vert \Gamma^{[1]} \right \vert} \, . \]
\item The measure of our extended parametric space by \(\dif \underline{A}_\Gamma\), i.e. \[ \dif \underline{A}_\Gamma := \bigwedge_{e \in \Gamma^{[1]}} \dif A_e \, . \]
\item The space of all loop momenta by \(\rho\) (recall \(\left \vert \mathfrak C_\Gamma \right \vert\) to be the dimension of the basis of loops of \(\Gamma\) from \defnref{defn:paths_and_cycles} and \defnref{defn:4-vectors}), i.e.\footnote{Note that the dimension is here actually the dimension of the basis of loops of \(\Gamma\) times the dimension of spacetime, but we're working in a 4-dimensional spacetime throughout this paper \cite{Kreimer,Peskin_Schroeder,Itzykson_Zuber}.} \[ \rho := \mathbb R^{4 \left \vert \mathfrak C_\Gamma \right \vert} \, . \]
\item The measure of the loop momenta integral by \(\dif \underline{k}_\Gamma\) (recall that we choose a basis of loops \(\mathfrak C_\Gamma \subset \mathscr C_\Gamma\) of \(\left \vert \mathscr C_\Gamma \right \vert\) independent loops of \(\Gamma\) from \defnref{defn:4-vectors}, and \(\operatorname{d}^4 \! k_C\) being the usual Lorentz invariant loop momentum measure for the loop \(C\) \cite{Kreimer,Peskin_Schroeder,Itzykson_Zuber}), i.e.\footnote{The power 4 in the measure \(\operatorname{d}^4 \! k_C\) is actually the dimension of spacetime, but again we're working in a 4-dimensional spacetime throughout this paper \cite{Kreimer,Peskin_Schroeder,Itzykson_Zuber}.} \[ \dif \underline{k}_\Gamma := \bigwedge_{C \in \mathfrak C_\Gamma} \operatorname{d}^4 \! k_C \, . \]
\item The extended universal quadric by \(\underline{Q}_\Gamma\) (where the case \(m_e = 0\) for some \(e \in \Gamma^{[1]}\) is allowed), i.e. \[ \underline{Q}_\Gamma := \sum_{e \in \Gamma^{[1]}} \left ( \xi_e^{\prime 2} + m_e^2 \right ) A_e \, . \]
\item The reduced universal quadric by \(\underline{q}_\Gamma\) (where again the case \(m_e = 0\) for some \(e \in \Gamma^{[1]}\) is allowed), i.e. \[ \underline{q}_\Gamma := \left ( \sum_{e \in \Gamma^{[1]}_{\text{ext}}} \xi_e^2 A_e \right ) + \left ( \sum_{e \in \Gamma^{[1]}} m_e^2 A_e \right ) \, . \]
\item The product over the inverse external propagators by \(P_\Gamma\), i.e. \[ P_\Gamma := \prod_{e \in \Gamma^{[1]}_\text{ext}} \left ( \xi_e^2 + m_e^2 \right ) \, . \]
\item The differential form concerning the parametric space by \(I (\Gamma)\), i.e. \[ I (\Gamma) := \left ( \alpha P_\Gamma \int_\rho e^{- \underline{Q}_\Gamma} \prod_{v \in \Gamma^{[0]}} \delta^{(4)} \left ( \sum_{e \in \Gamma^{[1]}} \varepsilon_{ve} k_e^\mu \right ) \dif \underline{k}_\Gamma \right ) \dif \underline{A}_\Gamma \, . \]
\end{enumerate}
\end{defn}

\vspace{\baselineskip}

Then, \mbox{Equation \eqref{eqn:product_propagators}} reads
\begin{equation}
	\alpha \prod_{e \in \Gamma^{[1]}_{\text{int}}}  \frac{1}{\left ( \xi_e^{\prime 2} + m_e^2 \right )} = \alpha P_\Gamma \int_\sigma e^{- \underline{Q}_\Gamma} \dif \underline{A}_\Gamma \, .
\end{equation}

One of the advantages of passing to the parametric space is that now the loop momentum integrals can be carried out after changing the order of integration\footnote{The change of the integration order for Minkowski spacetime or massless particles (i.e. ill-defined integral expressions) is formally only allowed if regulators \(i \epsilon\) are introduced before in each such propagators and whose limits to 0 are understood to be taken after the integrations are carried out \cite{Kreimer,Itzykson_Zuber}.}. In doing so, the two so-called Kirchhoff- or Symanzik-polynomials \(\psi_\Gamma\) and \(\phi_\Gamma\)\footnote{We will slightly alter the standard definition of the second Symanzik polynomial for our purposes and denote it by \(\varphi_\Gamma\), cf. \defnref{defn:second_symanzik_polynomial}.} come into play:

\vspace{\baselineskip}

\begin{defn}[First Symanzik polynomial \cite{Kreimer_Sars_Suijlekom,Sars,Kreimer}] \label{defn:first_symanzik_polynomial}
Let \(\Gamma\) be a scalar QFT Feynman graph and \(\mathscr T (\Gamma)\) the set of its spanning trees \(T\). Then we define the first Symanzik polynomial as (external edges are excluded from the product)
\begin{equation}
	\psi_\Gamma := \sum_{T \in \mathscr T (\Gamma)} \prod_{e \notin T} A_e \, .
\end{equation}
\end{defn}

\vspace{\baselineskip}

\begin{exmp} \label{exmp:1L-first_symanzik_polynomial}
We consider the one-loop self-energy graph
\begin{equation}
\Gamma := 
\vcenter{\hbox{\includegraphics[height=2.5cm]{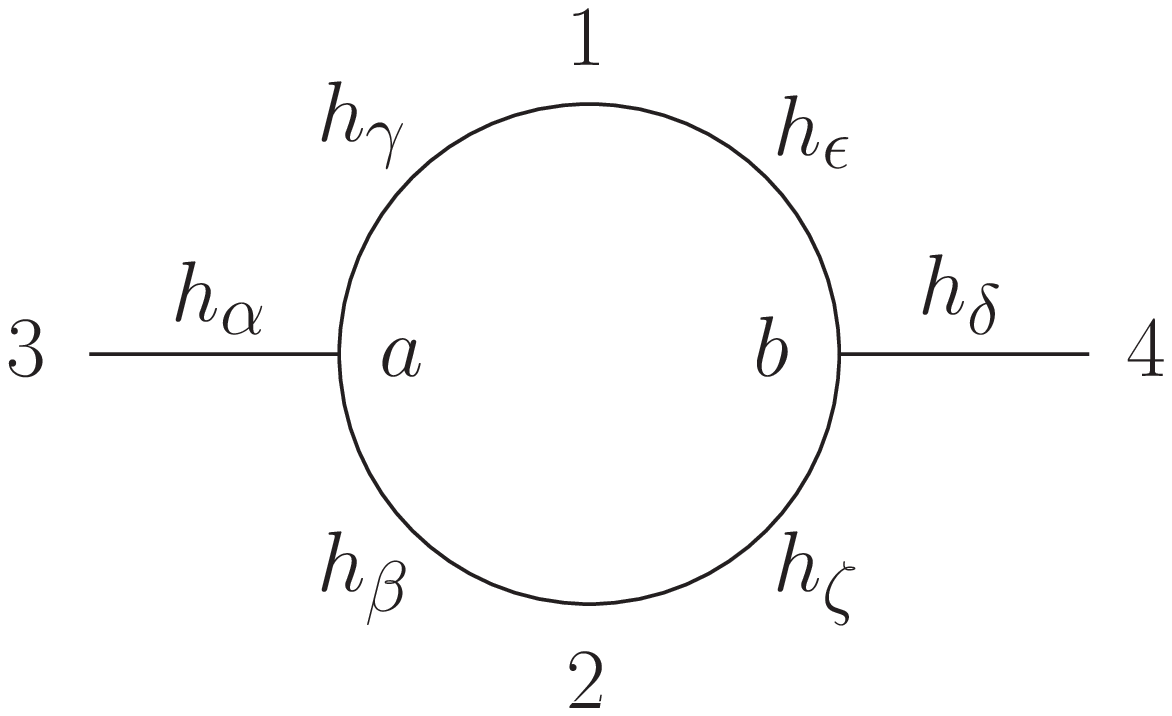}}} \, .
\end{equation}
We have \(\mathscr T (\Gamma) = \set{1, 2}\), and so
\begin{equation}
	\psi_\Gamma = A_1 + A_2 \, .
\end{equation}
\end{exmp}

\vspace{\baselineskip}

\begin{defn}[Second Symanzik polynomial, non-standard definition \cite{Kreimer_Sars_Suijlekom,Sars,Kreimer}] \label{defn:second_symanzik_polynomial}
Let \(\Gamma\) be a scalar QFT Feynman graph, \(\mathscr F_2\) the set of its spanning 2-forests \(F\), which consist of the two components \(F^{(1)}\) and \(F^{(2)}\) (i.e. \(F = \bigcup_{i=1}^2 F^{(i)}\)), and let \(\varepsilon (\Gamma)\) be its incidence matrix (cf. \defnref{defn:incidence_matrix}). Then we define the second Symanzik polynomial as (again, external edges are excluded from the sum and the product)
\begin{equation}
	\varphi_\Gamma := \sum_{F \in \mathscr F_2 (\Gamma)} \left ( \sum_{e \notin F} \tau_F (e) \xi_e^\mu \right ) ^2 \prod_{e \notin F} A_e \, ,
\end{equation}
with\footnote{Note that if we fix the two components \(F^{(1)}\) and \(F^{(2)}\) of a spanning 2-forest \(F \in \mathscr F_2 \left ( \Gamma \right ) \) and apply momentum conservation as described in \remref{rem:standard_second_Symanzik_polynomial}, the standard second Symanzik polynomial becomes independent of that choice since the resulting expressions are squares.}
\begin{equation}
	\tau_F (e) := - \sum_{v \in {F^{(1)}}^{[0]}} \varepsilon_{v e} \left ( \Gamma \right ) \equiv \begin{cases} +1 & \text{if \(e\) is oriented from \(F^{(1)}\) to \(F^{(2)}\)} \\ -1 & \text{if \(e\) is oriented from \(F^{(2)}\) to \(F^{(1)}\)} \end{cases} \, .
\end{equation}
\end{defn}

\vspace{\baselineskip}

\begin{rem} \label{rem:standard_second_Symanzik_polynomial}
Note that the usual expression \(\phi_\Gamma\) for the second Symanzik polynomial of a Feynman graph \(\Gamma\) can be obtained by setting the \(\set{\xi_e^\mu}_{e \in \Gamma^{[1]}}\) in accordance with the physical momenta \(\set{p_e^\mu}_{e \in \Gamma^{[1]}}\) flowing through the graph (without the loop momenta \(\set{k_C^\mu}_{C \in \mathfrak C_\Gamma}\)) \cite[Page 9]{Kreimer_Sars_Suijlekom}, i.e.
\begin{equation}
	Q : \quad \xi_e^\mu \mapsto q_e^\mu, \qquad \forall e \in \Gamma^{[1]} \, .
\end{equation}
\end{rem}

\vspace{\baselineskip}

\begin{exmp} \label{exmp:1L-second_symanzik_polynomial}
Again, consider the one-loop self-energy graph with incoming external momenta \(p^{\mu_3}\) and \(p^{\mu_4}\)
\begin{equation}
\Gamma := 
\vcenter{\hbox{\includegraphics[height=2.5cm]{1L_Sc_labeled.eps}}} \, .
\end{equation}
We have \(\mathscr F_2 \left ( \Gamma \right ) = \set{\set{a, b}}\), and so
\begin{equation}
	\varphi_\Gamma = \left ( \xi_1^\mu - \xi_2^\mu \right )^2 A_1 A_2 \, .
\end{equation}
\end{exmp}

\vspace{\baselineskip}

\begin{thm}[Parametric integrand with non-standard second Symanzik polynomial \cite{Kreimer_Sars_Suijlekom}] \label{thm:parametric_integrand_non-standard}
Integrating out the loop momenta in the parametric representation gives rise to the two Symanzik polynomials\footnote{The square of \(\psi_\Gamma\) in the numerator is actually a \(d/2\), with \(d\) being the dimension of spacetime, but once more we're working in a 4-dimensional spacetime throughout this paper \cite{Sars,Kreimer}.}:
\begin{equation}
	\int_\rho e^{- \underline{Q}_\Gamma} \prod_{v \in \Gamma^{[0]}} \delta^{(4)} \left ( \sum_{e \in \Gamma^{[1]}} \varepsilon_{ve} k_e^\mu \right ) \dif \underline{k}_\Gamma = \frac{e^{- \frac{\varphi_\Gamma}{\psi_\Gamma} - \underline{q}_\Gamma}}{\psi_\Gamma^2}
\end{equation}

In particular, the parametric integrand \(I (\Gamma)\) with non-standard second Symanzik polynomial can be written as
\begin{equation}
\begin{split}
	I (\Gamma) = & \left ( \alpha P_\Gamma \int_\rho e^{- \underline{Q}_\Gamma} \prod_{v \in \Gamma^{[0]}} \delta^{(4)} \left ( \sum_{e \in \Gamma^{[1]}} \varepsilon_{ve} k_e^\mu \right ) \dif \underline{k}_\Gamma \right ) \dif \underline{A}_\Gamma \\ = & \left ( \alpha P_\Gamma \frac{e^{- \frac{\varphi_\Gamma}{\psi_\Gamma} - \underline{q}_\Gamma}}{\psi_\Gamma^2} \right ) \dif \underline{A}_\Gamma
\end{split}
\end{equation}

\end{thm}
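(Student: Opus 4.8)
\emph{Proof strategy.} The plan is to evaluate the loop-momentum integral on the left-hand side explicitly as a multivariate Gaussian integral and to identify the resulting rational function by means of Kirchhoff's matrix-tree theorem and its all-minors generalisation. First I would substitute the definition of the extended universal quadric \( \underline{Q}_\Gamma = \sum_{e \in \Gamma^{[1]}} \left ( \xi_e^{\prime 2} + m_e^2 \right ) A_e \) and expand each \( \xi_e^{\prime \mu} \) via \defnref{defn:4-vectors} into its \( \xi_e^\mu \)-part and its loop-momentum part \( \sum_{C \in \mathfrak C_\Gamma} \varepsilon_{ve}^C k_C^\mu \). The mass terms \( \sum_{e \in \Gamma^{[1]}} m_e^2 A_e \) together with the external-edge terms \( \sum_{e \in \Gamma^{[1]}_{\text{ext}}} \xi_e^2 A_e \) (external edges lying in no loop of \( \mathfrak C_\Gamma \), so that \( \xi_e^{\prime \mu} = \xi_e^\mu \) there) do not depend on the loop momenta and assemble precisely into \( \underline{q}_\Gamma \), which factors out of the integral as \( e^{-\underline{q}_\Gamma} \). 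What remains in the exponent, \( \sum_{e \in \Gamma^{[1]}_{\text{int}}} A_e \xi_e^{\prime 2} \), is a quadratic-affine function of the loop momenta, so the vertex \( \delta \)-functions (encoding momentum conservation) together with the measure \( \dif \underline{k}_\Gamma \) present what is left as an honest Gaussian over the \( 4 \left \vert \mathfrak C_\Gamma \right \vert \) loop-momentum components.

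Next I would carry out this Gaussian integration. Expanding the squares gives \( \sum_{e \in \Gamma^{[1]}_{\text{int}}} A_e \xi_e^{\prime 2} = k^{\mathrm{T}} M_\Gamma k + 2 \, J_\Gamma \cdot k + \sum_{e \in \Gamma^{[1]}_{\text{int}}} A_e \xi_e^2 \), where \( M_\Gamma \) is the symmetric positive-definite \( \left \vert \mathfrak C_\Gamma \right \vert \times \left \vert \mathfrak C_\Gamma \right \vert \) matrix with entries \( \left ( M_\Gamma \right )_{C C'} = \sum_{e \in C^{[1]} \cap {C'}^{[1]}} \varepsilon_{ve}^C \varepsilon_{ve}^{C'} A_e \), and \( J_\Gamma \) is the linear term built from the \( \set{\xi_e^\mu}_{e \in \Gamma^{[1]}_{\text{int}}} \). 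Completing the square and applying the standard Euclidean Gaussian integral formula --- the numerical prefactors, in particular the powers of \( \pi \), being absorbed into \( \alpha \), and the exponent \( 2 \) on \( \det M_\Gamma \) being the \( d/2 \) of the \( 4 \)-dimensional spacetime --- gives
\begin{equation}
	\int_\rho e^{- \underline{Q}_\Gamma} \prod_{v \in \Gamma^{[0]}} \delta^{(4)} \left ( \sum_{e \in \Gamma^{[1]}} \varepsilon_{ve} k_e^\mu \right ) \dif \underline{k}_\Gamma = \frac{1}{\left ( \det M_\Gamma \right )^2} \, e^{\, J_\Gamma^{\mathrm{T}} M_\Gamma^{-1} J_\Gamma \, - \sum_{e \in \Gamma^{[1]}_{\text{int}}} A_e \xi_e^2 \, - \, \underline{q}_\Gamma} \, .
\end{equation}

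The step I expect to be the main obstacle is the combinatorial identification of the two remaining pieces. By Kirchhoff's matrix-tree theorem applied to the weighted cycle (period) matrix \( M_\Gamma \) one has \( \det M_\Gamma = \psi_\Gamma \), the first Symanzik polynomial of \defnref{defn:first_symanzik_polynomial}; this is seen by writing \( M_\Gamma \) as a product of (the cycle-part of) the incidence matrix with the diagonal matrix of the \( A_e \) and applying Cauchy--Binet, so that the surviving minors are indexed exactly by the complements of spanning trees. By the all-minors generalisation of the matrix-tree theorem, the Schur-complement combination \( \sum_{e \in \Gamma^{[1]}_{\text{int}}} A_e \xi_e^2 - J_\Gamma^{\mathrm{T}} M_\Gamma^{-1} J_\Gamma \) equals \( \varphi_\Gamma / \psi_\Gamma \), with the spanning \( 2 \)-forests \( F = F^{(1)} \cup F^{(2)} \) and the signs \( \tau_F(e) \) of \defnref{defn:second_symanzik_polynomial} emerging precisely as the index sets and the cofactor signs in the adjugate expansion of \( M_\Gamma^{-1} \). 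Particular care is needed in the bookkeeping of the external Schwinger parameters and in keeping the \( \set{\xi_e^\mu} \) formally independent (cf. \remref{rem:standard_second_Symanzik_polynomial}) --- this is exactly where the non-standard conventions of \defnref{defn:second_symanzik_polynomial} are designed to make the identification come out cleanly. Substituting both identities into the displayed equation gives \( e^{-\varphi_\Gamma/\psi_\Gamma - \underline{q}_\Gamma}/\psi_\Gamma^2 \), and multiplying by \( \alpha P_\Gamma \, \dif \underline{A}_\Gamma \) according to the definition of \( I(\Gamma) \) in \defnref{defn:abbreviations} then yields the stated closed form of the parametric integrand.
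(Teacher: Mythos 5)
Your proposal is correct and takes essentially the same route as the paper: the paper does not spell out an argument at all but defers to the standard Schwinger-parametric computation in the cited references (Itzykson--Zuber for the Gaussian integration and the matrix-tree identifications, Kreimer--Sars--van Suijlekom for the non-standard \(\varphi_\Gamma\) convention), and your Gaussian-integration-plus-Cauchy--Binet/all-minors argument is precisely that computation, so you are in effect supplying the details the paper outsources. The one caveat is that the identification of the Schur complement with \(\varphi_\Gamma / \psi_\Gamma\) --- which you rightly single out as the main step --- is asserted via the all-minors matrix-tree theorem rather than carried out, but this is no less detail than the paper itself provides.
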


\begin{proof}
We refer to \cite[Pages 294 -- 299]{Itzykson_Zuber} for a proof of the parametric integrand with standard second Symanzik polynomial and to \cite[Page 10]{Kreimer_Sars_Suijlekom} for the notational difference concerning the non-standard second Symanzik polynomial.
\end{proof}

\begin{exmp} \label{exmp:1L-parametric_integrand}
Continuing \exref{exmp:1L-first_symanzik_polynomial} and \exref{exmp:1L-second_symanzik_polynomial} for the one-loop self-energy graph
\begin{equation}
\Gamma := 
\vcenter{\hbox{\includegraphics[height=2.5cm]{1L_Sc_labeled.eps}}} \, ,
\end{equation}
we have
\begin{equation}
	I (\Gamma) = \left ( \left ( \xi_3^2 + m_3^2 \right ) \left ( \xi_4^2 + m_4^2 \right ) \frac{e^{- \frac{\left ( \xi_1^\mu - \xi_2^\mu \right ) ^2 A_1 A_2}{A_1 + A_2} - \xi_3^2 A_3 - \xi_4^2 A_4 - \sum_{e = 1}^4 m_e^2 A_e}}{\left ( A_1 + A_2 \right )^2} \right ) \dif A_1 \wedge \dif A_2 \, .
\end{equation}
\end{exmp}

\vspace{\baselineskip}

\begin{defn}[Parametric integrand for gauge theory amplitudes with non-standard second Symanzik polynomial, in parts literally quoted from {\cite[Page 10]{Kreimer_Sars_Suijlekom}}]
In the following we're in particular interested in gauge theory amplitudes. They can be represented in the parametric space by a slightly generalization of \(I (\Gamma)\) via
\begin{subequations}
\begin{equation}
	\tilde{I}_{\mathcal F} (\Gamma) := \mathcal F I (\Gamma) \, ,
\end{equation}
with
\begin{equation}
	\mathcal F := \frac{\mathcal F_N (\set{A_e}_{e \in \Gamma^{[1]}})}{\mathcal F_D (\set{A_e}_{e \in \Gamma^{[1]}})} \, ,
\end{equation}
\end{subequations}
a rational function in the Schwinger parameters \(A_e\) and possible matrix structure. The rational function \(F\) can be created by acting with suitable differential operators on the parametric integrand \(I (\Gamma)\) and possibly multiply it with the required matrices and the contraction of their respective Lorentz indices \cite{Kreimer_Sars_Suijlekom,Sars}.
\end{defn}

\vspace{\baselineskip}

\begin{rem}
Before \(I (\Gamma)\) (or \(\tilde{I}_{\mathcal F} (\Gamma)\)) can be integrated against the simplex \(\sigma\) to yield the Feynman amplitude \(\mathcal I^R (\Gamma)\) (or \(\tilde{\mathcal I}^R_{\mathcal F} (\Gamma)\)) of \(\Gamma\) it needs to be renormalized first (cf. \cite{Kreimer_Sars_Suijlekom,Sars,Kreimer,Peskin_Schroeder,Itzykson_Zuber,Suijlekom,Brown_Kreimer} for the huge topic of renormalization). The corresponding renormalized differential form \(I^R (\Gamma)\) (or \(\tilde{I}_{\mathcal F}^R (\Gamma)\)) can be obtained from \(I (\Gamma)\) (or \(\tilde{I}_{\mathcal F} (\Gamma)\)) by the use of the Forest formula \cite{Kreimer_Sars_Suijlekom,Sars,Kreimer,Itzykson_Zuber,Brown_Kreimer}.
\end{rem}

\section{Corolla Polynomial and Differential} \label{sec:corolla_polynomial_and_differential}

\subsection{Pure Yang-Mills Theory} \label{ssec:corolla_polynomial_pure_yang-mills}

Now we define the Corolla polynomial for pure Yang-Mills theory:

\vspace{\baselineskip}

\begin{defn}[Corolla polynomial, in parts literally quoted from {{\cite[Definition 1]{Kreimer_Yeats} and \cite[Page 27]{Kreimer_Sars_Suijlekom}}}, cf. \cite{Sars}] \label{defn:corolla_polynomial}
Let \(\Gamma\) be a 3-regular scalar QFT Feynman graph. Then:
\begin{enumerate}
\item Associate to each half-edge \(h\) of \(\Gamma\) variables \(a_{h_+}\), \(a_{h_-}\) and \(b_h\).
\item Recall from \defnref{defn:half-edge}: For a vertex \(v\) of \(\Gamma\) we denote the set of the three half-edges incident to \(v\) by \(n(v)\).
\item Recall from \defnref{defn:orientation_of_a_3-regular_graph}: For a vertex \(v\) of \(\Gamma\) and \(h \in n(v)\) a half-edge of \(\Gamma\), we denote according to the orientation of \(\Gamma\) its successor by \(h_+\) and its predecessor by \(h_-\).
\item We denote the edge \(e\) of which \(h\) is part of by \(e(h)\).
\item Recall from \defnref{defn:paths_and_cycles}: We denote the set of all cycles of \(\Gamma\) by \(\mathscr C_\Gamma\).
\item For \(C \in \mathscr C_\Gamma\) a cycle and \(v\) a vertex in \(C\), since \(\Gamma\) is 3-regular, there is a unique half-edge of \(\Gamma\) incident to \(v\) and not in \(C\). We denote this half-edge by \(h(C,v)\).
\item Assign to \(\Gamma\) a factor \(\operatorname{color} (\Gamma)\).
\item We denote the combination of half-edge variables which will create in the Corolla differential the Feynman rules for the 3-valent gluon vertex \(v\) (and also the 4-valent gluon vertex as residues in Schwinger parameters if applied twice with special combinations of half-edge variables due to the Leibniz rule, cf. \thmref{thm:gauge_theory_amplitude_pure_yang-mills} and \cite{Kreimer_Sars_Suijlekom,Sars}) as \[\mathscr V_v := \sum_{h \in n(v)} b_h \left ( a_{h_{+}} + a_{h_{-}} \right )\, . \]
\item We denote the combination of half-edge variables which will create in the Corolla differential the Feynman rules for the closed ghost loop \(C_j\) as (recall that we work with unoriented ghost edges which means that we sum over both orientations, cf. \mbox{Subsection \ref{ssec:oriented_edges}}) \[\mathscr G_{C_j} := \sum_{k \in \set{+,-}} \prod_{v \in C_j^{[0]}} a_{h(C_j,v)_{k}} \, . \]
\end{enumerate}
Then we can define the various summands of the Corolla polynomial for pure Yang-Mills theory by
\begin{subequations}
\begin{equation}
	\mathcal C ^0 (\Gamma) := \prod_{v \in \Gamma^{[0]}} \mathscr V_v
\end{equation}
and for \(i \geq 1\) by
\begin{equation} \label{eqn:C^i}
	\mathcal C ^i (\Gamma) := \sum_{\substack{C_1, C_2, \ldots, C_i \in \mathscr{C}_\Gamma,\\C_j \text{ pairwise disjoint}}} \left [ \left ( \prod_{j = 1}^i \mathscr G_{C_j} \right ) \left ( \prod_{\substack{v \in \Gamma^{[0]},\\v \notin \bigcup_{k=1}^i C_k^{[0]}}} \mathscr V_v \right ) \right ] \, ,
\end{equation}
where we define \(\prod_{v \in \emptyset} \mathscr V_v := 1\), i.e. if \(\set{v \in \Gamma^{[0]} \vert v \notin \bigcup_{k=1}^i C_k} = \emptyset\) (all vertices are attached to edges that will be turned into a ghost edges). Finally, we introduce the Corolla polynomial as the alternating sum over its summands:
\begin{equation}
	\mathcal C (\Gamma) := \sum_{i=1}^\infty (-1)^i \mathcal C ^i (\Gamma)
\end{equation}
\end{subequations}
Furthermore, we define
\begin{equation}
	\mathcal C_{\text{QCD}} (\Gamma) := i^{\left \vert \Gamma^{[1]} \right \vert} \underline{g_s}^{\left \vert \Gamma^{[0]} \right \vert} \operatorname{color} (\Gamma) \mathcal C (\Gamma) \, .
\end{equation}
\end{defn}

\vspace{\baselineskip}

\begin{rem} \label{rem:feynman_gauge}
Note that our definition of the Corolla polynomial in the half-edge variables \( \set{a_{h_+} , a_{h_-} , b_h}_{h \in \Gamma^{[1/2]}} \) differs from the ones given in \cite{Kreimer_Yeats,Kreimer_Sars_Suijlekom,Sars}, but we will also define different differential operators \(\set{\mathscr A_{h_+} , \mathscr A_{h_-}}_{h \in \Gamma^{[1/2]}}\) and metric tensors \(\set{\mathscr B_h}_{h \in \Gamma^{[1/2]}}\) such that the Corolla differentials \(\mathcal D (\Gamma)\) coincide again (despite from factors of \(1/4^i\) which were missing in each summand \(\mathcal D^i (\Gamma)\) in \cite{Kreimer_Yeats,Kreimer_Sars_Suijlekom}, cf. \cite{Sars} and the fact that our definition does not create 4-valent ghost vertex contributions from the start which need to be eliminated in the formulation of \cite{Kreimer_Yeats,Kreimer_Sars_Suijlekom,Sars} while working in a linear covariant gauge otherwise, cf. \cite{Kreimer_Sars_Suijlekom}). Furthermore, the above formulation, originating from \cite{Prinz}, is only valid for the Feynman gauge (which is in particular linear). This simplifies the numerator of gauge boson propagators to metric tensors, cf. \cite{Kreimer_Sars_Suijlekom} for the general case. Together with our extended definition of the Einstein summation convention in \mbox{Subsection \ref{ssec:einstein_summation_convention}} we can effectively work with scalar QFT propagators since all relevant quantum gauge theory tensor structures are created from the corresponding vertex Feynman rules.
\end{rem}

\vspace{\baselineskip}

\begin{rem}
\(\mathcal C (\Gamma)\) is a polynomial since \(\mathcal C ^i (\Gamma) = 0\) for all \(i > \vert \mathfrak C_\Gamma \vert\) \cite{Kreimer_Yeats,Kreimer_Sars_Suijlekom}.
\end{rem}

\vspace{\baselineskip}

\begin{exmp} \label{exmp:1L_Sc_Corolla_polynomial}
Consider the one-loop self-energy graph
\begin{equation}
\Gamma := 
\vcenter{\hbox{\includegraphics[height=2.5cm]{1L_Sc_labeled.eps}}}
\end{equation}
with the six half-edges \(\set{h_{\alpha} := (a, 3), h_{\beta} = (a, 2), h_{\gamma} = (a, 1), h_{\delta} = (b, 4), h_{\epsilon} = (b, 1), h_{\zeta} = (b, 2)}\).

Then, we have
\begin{subequations}
\begin{equation}
\begin{split}
	\mathcal C^0 (\Gamma) = & \left ( b_{h_{\alpha}} \left ( a_{h_{\alpha +}} + a_{h_{\alpha -}} \right ) + b_{h_{\beta}} \left ( a_{h_{\beta +}} + a_{h_{\beta -}} \right ) + b_{h_{\gamma}} \left ( a_{h_{\gamma +}} + a_{h_{\gamma -}} \right ) \right ) \\ & \times \left ( b_{h_{\delta}} \left ( a_{h_{\delta +}} + a_{h_{\delta -}} \right ) + b_{h_{\epsilon}} \left ( a_{h_{\epsilon +}} + a_{h_{\epsilon -}} \right ) + b_{h_{\zeta}} \left ( a_{h_{\zeta +}} + a_{h_{\zeta -}} \right )  \right ) \, ,
\end{split}
\end{equation}
\begin{equation}
	\mathcal C^1 (\Gamma) = a_{h_{\alpha +}} a_{h_{\delta +}} + a_{h_{\alpha -}} a_{h_{\delta -}}
\end{equation}
and
\begin{equation}
	\mathcal C^i (\Gamma) = 0 \, , \qquad \forall i > 1 \, .
\end{equation}
\end{subequations}
So, in total we get:
\begin{equation}
\begin{split}
	\mathcal C (\Gamma) = & \mathcal C^0 (\Gamma) - \mathcal C^1 (\Gamma) \\ = & \left ( b_{h_{\alpha}} \left ( a_{h_{\alpha +}} + a_{h_{\alpha -}} \right ) + b_{h_{\beta}} \left ( a_{h_{\beta +}} + a_{h_{\beta -}} \right ) + b_{h_{\gamma}} \left ( a_{h_{\gamma +}} + a_{h_{\gamma -}} \right ) \right ) \\ & \phantom{ ( } \times \left ( b_{h_{\delta}} \left ( a_{h_{\delta +}} + a_{h_{\delta -}} \right ) + b_{h_{\epsilon}} \left ( a_{h_{\epsilon +}} + a_{h_{\epsilon -}} \right ) + b_{h_{\zeta}} \left ( a_{h_{\zeta +}} + a_{h_{\zeta -}} \right )  \right ) \\ & - a_{h_{\alpha +}} a_{h_{\delta +}} - a_{h_{\alpha -}} a_{h_{\delta -}}
\end{split}
\end{equation}
\end{exmp}

\vspace{\baselineskip}

\begin{defn}[Corolla differential, in parts literally quoted from {\cite[Page 29]{Kreimer_Sars_Suijlekom}}] \label{defn:corolla_differential}
Let \(\Gamma\) be a 3-regular scalar QFT Feynman graph. Then:
\begin{enumerate}
\item Assign to each external and internal edge a variable \(A_e\) and a 4-vector \(\xi_e^{\prime \mu}\) (as in \defnref{defn:4-vectors}) and to each edge \(e\) a Lorentz index \(\mu_{e}\).
\item Define for each half-edge \(h_k\) the following differential operator (where \(k \in \set{\pm}\)): \[ \mathscr A_{h_{k}} := - k \varepsilon_{h_k} \frac{1}{2 A_{e(h_k)}} \frac{\partial}{\partial \xi_{e(h_k) \mu_{e(h)}}} \]
\item Define for each half-edge \(h\) the following metric tensor: \[ \mathscr B_{h} := \eta^{\mu_{e \left ( h_+ \right )} \mu_{e \left ( h_- \right )}} \]
\end{enumerate}
Then, the summands of the Corolla differential \(\mathcal D^i (\Gamma)\) are defined via the summands of the Corolla polynomial \(\mathcal C^i (\Gamma)\) by replacing each half-edge variable \(a_{h_k}\) by the corresponding differential operator \(\mathscr A _{h_k}\) (denoted by \(a_{h_k} \mapsto \mathscr A _{h_k}\)) and each half-edge variable \(b_{h}\) by the corresponding metric tensor \(\mathscr B _h\) (denoted by \(b_h \mapsto \mathscr B _h\)):
\begin{subequations}
\begin{equation}
	\mathcal D^i (\Gamma) := \eval{\mathcal C^i (\Gamma)}_{\substack{a_{h_k} \mapsto \mathscr A _{h_k} , \, \forall h_k \in \Gamma^{[1/2]}\\b_{h} \mapsto \mathscr B _{h} , \, \forall h \in \Gamma^{[1/2]}}} \, ,
\end{equation}
and similarly the Corolla differential \(\mathcal D (\Gamma)\) is defined like the Corolla polynomial \(\mathcal C (\Gamma)\) as
\begin{equation}
	\mathcal D (\Gamma) := \sum_{i=1}^\infty (-1)^i \mathcal D ^i (\Gamma) \equiv \eval{\mathcal C (\Gamma)}_{\substack{a_{h_k} \mapsto \mathscr A _{h_k} , \, \forall h_k \in \Gamma^{[1/2]}\\b_{h} \mapsto \mathscr B _{h} , \, \forall h \in \Gamma^{[1/2]}}} \, .
\end{equation}
\end{subequations}
Likewise, the Corolla differentials \(\mathcal D_{\text{QCD}} (\Gamma)\) and \(\mathcal D_{\text{EW}} (\Gamma)\) are defined via the replacement of the half-edge variables \(a_{h_k}\) by the corresponding differential operators \(\mathscr A_{h_k}\) and the half-edge variables \(b_h\) by the corresponding metric tensors \(\mathscr B_h\) in their corresponding Corolla polynomials \(\mathcal C_{\text{QCD}} (\Gamma)\) and \(\mathcal C_{\text{EW}} (\Gamma)\) (the latter one will be defined in \defnref{defn:corolla_polynomial-electroweak_scalar_bosons}), respectively.
\end{defn}

\vspace{\baselineskip}

\begin{rem}
\begin{enumerate}
\item Expressions of the form \(\mathcal D (\Gamma) I (\Gamma)\) are to be understood as\footnote{Where the \(\set{q_e^\mu}_{e \in \Gamma^{[1]}}\) are the physical 4-momenta of the scalar QFT Feynman graph \(\Gamma\), i.e. the physical momenta flowing through the graph (without the loop momenta \(\set{k_C^\mu}_{C \in \mathfrak C_\Gamma}\))} \[ \mathcal D (\Gamma) I (\Gamma) := \eval{ \left ( \mathcal D (\Gamma) I (\Gamma) \right )}_{\xi_e^\mu \mapsto q_e^\mu , \, \forall e \in \Gamma^{[1]}} \, , \] such that the Corolla differential acts on the parametric integrand with non-standard second Symanzik polynomial (cf. \defnref{defn:second_symanzik_polynomial} and \thmref{thm:parametric_integrand_non-standard}) and after the differentiation the standard second Symanzik polynomial is obtained by setting the \(\set{\xi_e^\mu}_{e \in \Gamma^{[1]}}\) in accordance with the external momenta (cf. \remref{rem:standard_second_Symanzik_polynomial}). Note that due to the Leibniz rule we get also contributions from differentiating the inverse external propagators \(P_\Gamma\) (cf. 7. from \defnref{defn:abbreviations}) but since the differential operators for the external edges are of order 1, these contributions vanish again whilst setting the \(\set{\xi_e^\mu}_{e \in \Gamma^{[1]}}\) in accordance with the external momenta (again, cf. \remref{rem:standard_second_Symanzik_polynomial}).
\item We choose the 4-vectors \(\set{\xi_e^\mu}_{e \in \Gamma^{[1]}}\) assigned to each edge of the graph independently, i.e. we do not assume momentum conservation until the Corolla differential acted on the scalar integrand (cf. \defnref{defn:4-vectors}). Therefore, they can be seen as ``temporary dummy variables''. After applying the Corolla differential, we assume momentum conservation and they acquire the meaning of physical 4-momentum vectors of the momenta flowing through the graph (without the loop momenta \(\set{k_C^\mu}_{C \in \mathfrak C_\Gamma}\)) and are then denoted as \(\set{q_e^\mu}_{e \in \Gamma^{[1]}}\).
\item \(\mathcal D ^i (\Gamma)\) creates, whilst acting on the corresponding parametric scalar QFT integrand \(I (\Gamma)\), all possible pure Yang-Mills theory Feynman graphs with \(\left \vert \mathfrak C_\Gamma \right \vert\) loops and \(i\) ghost loops. The alternating sum in the Corolla polynomial \(\mathcal C (\Gamma)\) and hence also in the Corolla differential \(\mathcal D (\Gamma)\) takes care of the minus sign for each closed ghost loop (cf. \thmref{thm:gauge_theory_amplitude_pure_yang-mills}).
\end{enumerate}
\end{rem}

\vspace{\baselineskip}

\begin{thm}[\cite{Kreimer_Sars_Suijlekom}] \label{thm:gauge_theory_amplitude_pure_yang-mills}
Using the above definitions, the renormalized amplitude \(\tilde{\mathcal I}^R_{\mathcal F} (\Gamma)\) of all gauge theory graphs of pure Yang-Mills theory which can be achieved from \(\Gamma\) via graph and cycle homology \cite{Kreimer_Sars_Suijlekom} is then obtained via the Corolla differential \(\mathcal D_{\text{\emph{QCD}}} (\Gamma)\) (cf. \defnref{defn:corolla_differential}) and the renormalized parametric integrand with non-standard second Symanzik polynomial \(I^R (\Gamma)\) (cf. \thmref{thm:parametric_integrand_non-standard} and again \cite{Kreimer_Sars_Suijlekom,Sars,Kreimer,Peskin_Schroeder,Itzykson_Zuber,Suijlekom,Brown_Kreimer} for the huge topic of renormalization)
\begin{equation}
	\tilde{I}_{\mathcal F}^R (\Gamma) = \mathcal D_{\text{\emph{QCD}}} (\Gamma) I^R (\Gamma)
\end{equation}
by (where the hat over a variable \(\widehat{A_e}\) means that we omit this particular variable \(A_e\))
\begin{equation}
	\tilde{\mathcal I}^R_{\mathcal F} (\Gamma) = \frac{1}{\operatorname{sym} (\Gamma)} \sum_{k=0}^\infty \sum_{\set{e_1, \ldots, e_k} \subset \Gamma^{[1]}_{\text{\emph{int}}}} \int \dif \underline{A}_{\Gamma \setminus \set{e_1, \ldots, e_k}} \underset{A_1, \ldots, \widehat{A_{e_1}}, \ldots, \widehat{A_{e_k}}, \ldots, A_{e_{\left \vert \Gamma^{[1]} \right \vert}} = 0}{\operatorname{Reg}} \underset{A_{e_1}, \ldots, A_{e_k} = 0}{\operatorname{Res}} \tilde{I}_{\mathcal F}^R (\Gamma) \, .
\end{equation}
\end{thm}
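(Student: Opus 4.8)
The plan is to reduce the claim to the corresponding statement of \cite{Kreimer_Sars_Suijlekom} (with the normalization corrections of \cite{Sars}) by first matching the conventions of \defnref{defn:corolla_polynomial} and \defnref{defn:corolla_differential} to the ones used there, and then tracing the three structural ingredients --- the differential action, the residue/regular-part extraction, and the symmetry factor --- through that argument.

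I would begin with a lemma establishing the claim of \remref{rem:feynman_gauge}: summand by summand in $i$, the operator $\mathcal D^i_{\text{QCD}}(\Gamma)$ assembled here from the operators $\mathscr A_{h_\pm}$ and the metric tensors $\mathscr B_h$ agrees, after acting on $I(\Gamma)$, with the $i$-th summand of the Corolla differential of \cite{Kreimer_Sars_Suijlekom,Sars}, up to the $1/4^i$ normalization missing there and the fact that the present formulation produces no spurious $4$-valent ghost vertices. Concretely, $\mathscr V_v$ read through $a_{h_\pm}\mapsto\mathscr A_{h_\pm}$ and $b_h\mapsto\mathscr B_h$ reproduces the $3$-valent gluon vertex Feynman rule in the Feynman gauge (using the extended Einstein convention of \mbox{Subsection \ref{ssec:einstein_summation_convention}}): $\partial/\partial\xi_{e(h)\mu}$ acting on the exponential $e^{-\varphi_\Gamma/\psi_\Gamma-\underline q_\Gamma}$ brings down the oriented internal momentum $\varepsilon_{h}\xi_{e(h)}^{\prime\mu}$ carried by the half-edge $h$, while the $1/(2A_{e(h)})$ prefactor supplies the matching parametric weight; likewise $\mathscr G_{C_j}$ contracts, via cycle homology, the internal momenta along the cycle $C_j$ into the longitudinal structure of a closed ghost loop, and the alternating sign $(-1)^i$ reinstates the Fermi statistics for the $i$ ghost loops. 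This part is pure bookkeeping against the Feynman rules listed in \cite[Appendix A]{Prinz}.

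Next I would analyze the pole structure of $\tilde I^R_{\mathcal F}(\Gamma)=\mathcal D_{\text{QCD}}(\Gamma)I^R(\Gamma)$ in the Schwinger parameters. A monomial of $\mathcal C^i(\Gamma)$ in which an internal edge $e$ enters through both of its half-edges --- which happens exactly when $e$ lies on one of the ghost cycles $C_j$, or when the two endpoints of $e$ both contribute a $\mathscr V$-factor that selects the half-edge on $e$ --- yields, after the two operators $\mathscr A$ have acted and after applying the Leibniz rule to $e^{-\varphi_\Gamma/\psi_\Gamma}$, a pole in $A_e$, whereas monomials with no doubled internal edge stay regular. For a subset $\set{e_1,\ldots,e_k}\subset\Gamma^{[1]}_{\text{int}}$ the operation $\operatorname{Res}_{A_{e_1},\ldots,A_{e_k}=0}$ then extracts precisely the contribution of the gauge-theory graph $\Gamma/\set{e_1,\ldots,e_k}$ obtained from $\Gamma$ by graph and cycle homology (shrinking $e_1,\ldots,e_k$ fuses the adjacent $3$-valent vertices into the higher-valent gluon vertices, as in \cite{Kreimer_Sars_Suijlekom}), while $\operatorname{Reg}$ in the remaining parameters discards the residual poles there, so that each such graph is counted once and only once across the double sum over $k$ and over subsets. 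Assembling these contributions, invoking \thmref{thm:parametric_integrand_non-standard} to identify each residue with the parametric integrand of the shrunk graph, and using the Forest formula together with Leibniz's rule for differentiation under the integral sign to commute renormalization past the Corolla differential, reproduces $\tilde{\mathcal I}^R_{\mathcal F}(\Gamma)$; the prefactor $1/\operatorname{sym}(\Gamma)$ is fixed by the orbit--stabilizer relation between $\operatorname{aut}(\Gamma)$ and the automorphism groups of the graphs $\Gamma/\set{e_1,\ldots,e_k}$, again as in \cite{Kreimer_Sars_Suijlekom}.

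The main obstacle is the cycle-homology step: showing that, after the operators $\mathscr A$ act, each ghost-loop monomial $\mathscr G_{C_j}$ contracts the internal momenta along $C_j$ into exactly the longitudinal projector that removes the unphysical polarizations from the gluon propagators threading $C_j$ --- this is the heart of the ghost-free covariant quantization of \cite{Kreimer_Sars_Suijlekom}, and I would invoke it rather than re-derive it. A secondary difficulty is the combinatorics of the $\operatorname{Reg}$/$\operatorname{Res}$ decomposition: one must verify that the sieve over those subsets $\set{e_1,\ldots,e_k}$ of internal edges whose parameters the monomials of $\mathcal D_{\text{QCD}}(\Gamma)$ can drive to poles neither over- nor under-counts any gauge-theory graph reachable from $\Gamma$.
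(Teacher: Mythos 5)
The paper gives no argument of its own for this theorem---its ``proof'' is a single citation to \cite[Page 31]{Kreimer_Sars_Suijlekom}---and your proposal takes essentially the same route, reducing the claim to that reference (with the normalization fixes of \cite{Sars}) after checking that the modified half-edge variables, operators \(\mathscr A_{h_\pm}\), and tensors \(\mathscr B_h\) reproduce the Corolla differential of the cited work in the Feynman gauge. Your outline of the pole structure in the Schwinger parameters, the \(\operatorname{Reg}/\operatorname{Res}\) sieve over subsets of shrunken internal edges, and the symmetry-factor bookkeeping is a faithful summary of the cited argument, so the two approaches coincide.
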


\begin{proof}
We refer to \cite[Page 31]{Kreimer_Sars_Suijlekom} for a proof.
\end{proof}

\vspace{\baselineskip}

\begin{exmp}
We continue with \exref{exmp:1L_Sc_Corolla_polynomial}:

Again, we have
\begin{equation}
\Gamma := 
\vcenter{\hbox{\includegraphics[height=2.5cm]{1L_Sc_labeled.eps}}}
\end{equation}
and
\begin{equation}
\begin{split}
	\mathcal C (\Gamma) = & \mathcal C^0 (\Gamma) - \mathcal C^1 (\Gamma) \\ = & \left ( b_{h_{\alpha}} \left ( a_{h_{\alpha +}} + a_{h_{\alpha -}} \right ) + b_{h_{\beta}} \left ( a_{h_{\beta +}} + a_{h_{\beta -}} \right ) + b_{h_{\gamma}} \left ( a_{h_{\gamma +}} + a_{h_{\gamma -}} \right ) \right ) \\ & \phantom{ ( } \times \left ( b_{h_{\delta}} \left ( a_{h_{\delta +}} + a_{h_{\delta -}} \right ) + b_{h_{\epsilon}} \left ( a_{h_{\epsilon +}} + a_{h_{\epsilon -}} \right ) + b_{h_{\zeta}} \left ( a_{h_{\zeta +}} + a_{h_{\zeta -}} \right )  \right ) \\ & - a_{h_{\alpha +}} a_{h_{\delta +}} - a_{h_{\alpha -}} a_{h_{\delta -}}
\end{split}
\end{equation}

Choosing an anti-clockwise orientation for our embedding oriented manifold, external momenta \(p^{\mu_3}\) and \(p^{\mu_4}\) incoming and also an anti-clockwise orientation for the loop momenta, we receive:
{\allowdisplaybreaks
\begin{subequations}
\begin{alignat}{2}
	& \mathscr A _{h_{\alpha +}} := \frac{1}{2 A_2} \frac{\partial}{\partial \xi_{2 \mu_3}} \, ,   \qquad && \mathscr A _{h_{\alpha -}} := \frac{1}{2 A_1} \frac{\partial}{\partial \xi_{1 \mu_3}} \, , \\
	& \mathscr A _{h_{\beta +}} := - \frac{1}{2 A_1} \frac{\partial}{\partial \xi_{1 \mu_2}} \, , \qquad && \mathscr A _{h_{\beta -}} := \frac{1}{2 A_3} \frac{\partial}{\partial \xi_{3 \mu_2}} \, , \\
	& \mathscr A _{h_{\gamma +}} := - \frac{1}{2 A_3} \frac{\partial}{\partial \xi_{3 \mu_1}} \, , \qquad && \mathscr A _{h_{\gamma -}} := - \frac{1}{2 A_2} \frac{\partial}{\partial \xi_{2 \mu_1}} \, , \\
	& \mathscr A _{h_{\delta +}} := \frac{1}{2 A_1} \frac{\partial}{\partial \xi_{1 \mu_4}} \, ,   \qquad && \mathscr A _{h_{\delta -}} := \frac{1}{2 A_2} \frac{\partial}{\partial \xi_{2 \mu_4}} \, , \\
	& \mathscr A _{h_{\epsilon +}} := - \frac{1}{2 A_2} \frac{\partial}{\partial \xi_{2 \mu_1}} \, , \qquad && \mathscr A _{h_{\epsilon -}} := \frac{1}{2 A_4} \frac{\partial}{\partial \xi_{4 \mu_1}} \, , \\
	& \mathscr A _{h_{\zeta +}} := - \frac{1}{2 A_4} \frac{\partial}{\partial \xi_{4 \mu_2}} \, , \qquad && \mathscr A _{h_{\zeta -}} := - \frac{1}{2 A_1} \frac{\partial}{\partial \xi_{1 \mu_2}} \, ,
\end{alignat}
\begin{align}
	& \mathscr B _{h_{\alpha}} := \eta^{\mu_2 \mu_1} \, , \\
	& \mathscr B _{h_{\beta}} := \eta^{\mu_1 \mu_3} \, , \\
	& \mathscr B _{h_{\gamma}} := \eta^{\mu_3 \mu_2} \, , \\
	& \mathscr B _{h_{\delta}} := \eta^{\mu_1 \mu_2} \, , \\
	& \mathscr B _{h_{\epsilon}} := \eta^{\mu_2 \mu_4} \, , \\
	& \mathscr B _{h_{\zeta}} := \eta^{\mu_4 \mu_1}
\end{align}
\end{subequations}
}

and
\begin{equation}
\begin{split}
	\mathcal D (\Gamma) = & \left ( \mathscr B_{h_{\alpha}} \left ( \mathscr A_{h_{\alpha +}} + \mathscr A_{h_{\alpha -}} \right ) + \mathscr B_{h_{\beta}} \left ( \mathscr A_{h_{\beta +}} + \mathscr A_{h_{\beta -}} \right ) + \mathscr B_{h_{\gamma}} \left ( \mathscr A_{h_{\gamma +}} + \mathscr A_{h_{\gamma -}} \right ) \right ) \\ & \phantom{ ( } \times \left ( \mathscr B_{h_{\delta}} \left ( \mathscr A_{h_{\delta +}} + \mathscr A_{h_{\delta -}} \right ) + \mathscr B_{h_{\epsilon}} \left ( \mathscr A_{h_{\epsilon +}} + \mathscr A_{h_{\epsilon -}} \right ) + \mathscr B_{h_{\zeta}} \left ( \mathscr A_{h_{\zeta +}} + \mathscr A_{h_{\zeta -}} \right )  \right ) \\ & - \mathscr A_{h_{\alpha +}} \mathscr A_{h_{\delta +}} - \mathscr A_{h_{\alpha -}} \mathscr A_{h_{\delta -}} \\ = & \left ( \eta^{\mu_2 \mu_1} \left ( \frac{1}{2 A_2} \frac{\partial}{\partial \xi_{2 \mu_3}} + \frac{1}{2 A_1} \frac{\partial}{\partial \xi_{1 \mu_3}} \right ) + \eta^{\mu_1 \mu_3} \left ( - \frac{1}{2 A_1} \frac{\partial}{\partial \xi_{1 \mu_2}} + \frac{1}{2 A_3} \frac{\partial}{\partial \xi_{3 \mu_2}} \right ) \right . \\ & \phantom{ ( } \left . + \eta^{\mu_3 \mu_2} \left ( - \frac{1}{2 A_3} \frac{\partial}{\partial \xi_{3 \mu_1}} - \frac{1}{2 A_2} \frac{\partial}{\partial \xi_{2 \mu_1}} \right ) \right ) \times \left ( \eta^{\mu_1 \mu_2} \left ( \frac{1}{2 A_1} \frac{\partial}{\partial \xi_{1 \mu_4}} + \frac{1}{2 A_2} \frac{\partial}{\partial \xi_{2 \mu_4}} \right ) \right . \\ & \phantom{ ( } \left . + \eta^{\mu_2 \mu_4} \left ( - \frac{1}{2 A_2} \frac{\partial}{\partial \xi_{2 \mu_1}} + \frac{1}{2 A_4} \frac{\partial}{\partial \xi_{4 \mu_1}} \right ) + \eta^{\mu_4 \mu_1} \left ( - \frac{1}{2 A_4} \frac{\partial}{\partial \xi_{4 \mu_2}} - \frac{1}{2 A_1} \frac{\partial}{\partial \xi_{1 \mu_2}} \right ) \right ) \\ & - \frac{1}{4 A_2 A_1} \frac{\partial}{\partial \xi_{2 \mu_3}} \frac{\partial}{\partial \xi_{1 \mu_4}} - \frac{1}{4 A_1 A_2} \frac{\partial}{\partial \xi_{1 \mu_3}} \frac{\partial}{\partial \xi_{2 \mu_4}}
\, .
\end{split}
\end{equation}

Acting with this differential operator on the scalar QFT parametric integrand with non-standard second Symanzik polynomial \(I (\Gamma)\) (cf. \exref{exmp:1L-parametric_integrand}), we receive \cite[Example 5.15]{Sars}
\begin{equation}
\begin{split}
	\tilde{I}_{\mathcal F} (\Gamma) = & \mathcal D_{\text{QCD}} (\Gamma) I (\Gamma) \\ = & \underline{g_s}^2 f^{c_3 c_2 c_1} f^{c_4 c_2 c_1} \left ( \left ( q^{\mu_3} q^{\mu_4} \left ( 2 A_1^2 + 2 A_2^2 + 12 A_1 A_2 \right ) \right . \right . \\ & \left . \left . \phantom{ \underline{g_s}^2 f^{a_3 a_2 a_1 } f^{a_4 a_2 a_1} \left ( \left ( \right . \right .} - q^2 \eta^{\mu_3 \mu_4} \left ( 5 A_1^2 + 5 A_2^2 + 8 A_1 A_2 \right ) \right ) \frac{1}{\psi_\Gamma^2} 8 + \eta^{\mu_3 \mu_4} \frac{1}{\psi_\Gamma} \right ) I (\Gamma) \, .
\end{split}
\end{equation}

Then, we obtain (after renormalizing the parametric integrand \(I (\Gamma)\)) the renormalized Feynman amplitude with renormalization point \(\mu\) \cite[Example 5.15]{Sars}
\begin{equation}
	\tilde{\mathcal I}^R_{\mathcal F} (\Gamma) = \frac{10}{3} \underline{g_s}^2 f^{c_3 c_2 c_1} f^{c_4 c_2 c_1} \left ( -q^{\mu_3} q^{\mu_4} + q^2 \eta^{\mu_3 \mu_4} \right ) \ln \frac{q^2}{\mu^2}
\end{equation}
which is transversal, as desired.
\end{exmp}

\vspace{\baselineskip}

\begin{rem}
The full \(m\)-loop scattering amplitude can be obtained by applying the Corolla polynomial componentwise to the combinatorial Green's function for connected graphs\footnote{We emphasize that we really mean 1-connected graphs, not only 2-connected (as they are called in mathematics) or 1 PI (as they are called in physics) graphs.}, with loop order \(m\) \cite{Kreimer_Sars_Suijlekom}.
\end{rem}

\subsection{Inclusion of the Gauge Bosons of the Electroweak Sector} \label{ssec:corolla_polynomial_gauge_bosons_electroweak}

The inclusion of the gauge bosons and their corresponding ghosts of the electroweak sector of the Standard Model shows some interesting properties and is therefore the next step to adapt the Corolla polynomial to all bosons of the electroweak sector of the Standard Model. Concretely, we have to add the massive gauge bosons \(W^\pm\) and \(Z\) with their corresponding ghosts \(c^\pm\) and \(c_Z\) and the photon \(A\) with its corresponding ghost \(c_A\).

\vspace{\baselineskip}

\begin{out}
First, recall that we have \(W^\pm\)-particle conservation which implies that every vertex of the electroweak sector of the Standard Model has to consist of a positive even number of \(W^\pm\)-particles, in particular every 3-valent vertex must consist of two \(W^\pm\)-particles. This is precisely the definition of a 2-factor, extended to external edges of a scalar QFT Feynman graph. Therefore, we define two nested sums over a 3-regular scalar QFT Feynman graph \(\Gamma\) in the following way: The first sum runs over all 2-factors of \(\Gamma\), thus creating all possible ways to attach valid \(W^\pm\)-particle labelings to \(\Gamma\). Then, the second sum creates all possible ways to attach \(Z\)-labels or \(A\)-labels to the unlabeled edges of the \(W^\pm\)-labeled graphs. The full details are explained in the proof after \thmref{thm:corolla_polynomial-electroweak_gauge_bosons}.
\end{out}

\vspace{\baselineskip}

Therefore, we define:

\vspace{\baselineskip}

\begin{defn} \label{defn:F_2_P_Z}
	Let \(\Gamma\) be a 3-regular graph. Then:
\begin{enumerate}
\item Let \(\wp \left ( \Gamma^{[1]} \right )\) be the power set of all external edges and internal edges of \(\Gamma\) with all adjacent vertices added. In particular, we define\footnote{Recall \defnref{defn:factor} for the definition of a 2-factor of a scalar QFT Feynman graph.} \(\mathscr P_Z \left ( \mathfrak f \right ) := \wp \left ( \Gamma^{[1]} \setminus \mathfrak f^{[1]} \right )\), i.e. the power set of all external edges and internal edges of \(\Gamma\) which are not in the set \(\mathfrak f\). We denote the elements of \( \mathscr P_Z \left (\mathfrak f \right ) \) by \(P_{Z}\). The edges in each set \(P_Z\) in the set of sets \(\mathscr P_Z \left ( \mathfrak f \right )\) are labeled by a \(Z\)-label and the edges in \(\Gamma \setminus \left ( \mathfrak f \cup P_Z \right )\) are labeled by an \(A\)-label.
\item Let \(\left \vert P_Z^{[0]} \right \vert\) and \(\left \vert \Gamma^{[0]} \setminus P_Z^{[0]} \right \vert\) denote the number of vertices in \(P_Z\) and \(\Gamma \setminus P_Z\), respectively.
\item Let \(\operatorname{iso} \left ( \Gamma_{\text{labeled}} \right )\) be the number of labeled graphs (via the sets \(\mathfrak f\) and \(P_Z\)) in the set \(\set{P_Z \in \mathscr P_Z (\mathfrak f) \vert \mathfrak f \in \mathfrak F_2 \left ( \Gamma \right ) }\) isomorphic to \(\Gamma_{\text{labeled}}\).
\end{enumerate}
\end{defn}

\vspace{\baselineskip}

\begin{exmp} \label{exmp:1L_sc-PW_PZ}
In particular, for the one-loop self-energy graph
\begin{equation}
	\Gamma := \vcenter{\hbox{\includegraphics[height=2.5cm]{1L_Sc_labeled.eps}}}
\end{equation}
we have:
\begin{equation}
\begin{split}
	\mathfrak F_2 \left (
\underbrace{\vcenter{\hbox{\includegraphics[height=2cm]{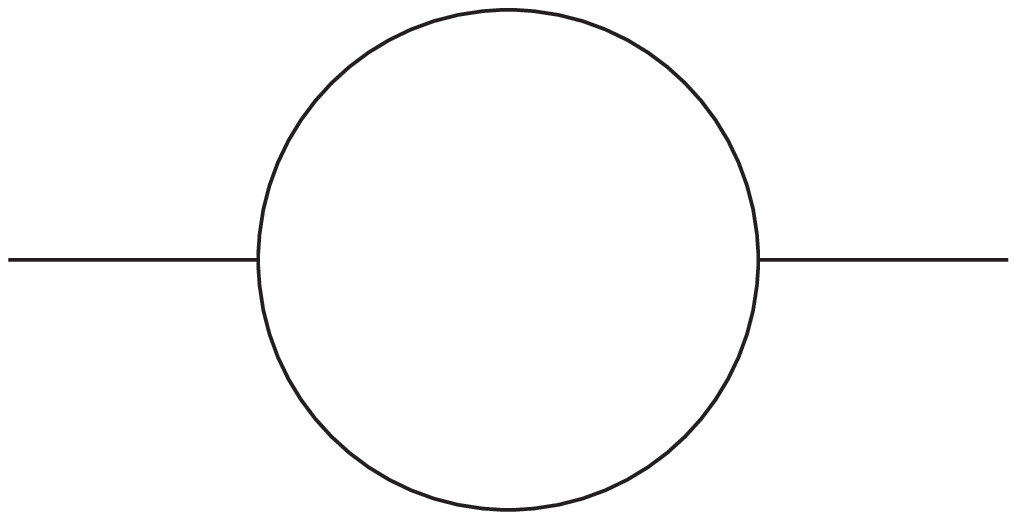}}}}_{=: \Gamma}
\right )
= & \left \{
\underbrace{\vcenter{\hbox{\includegraphics[height=2cm]{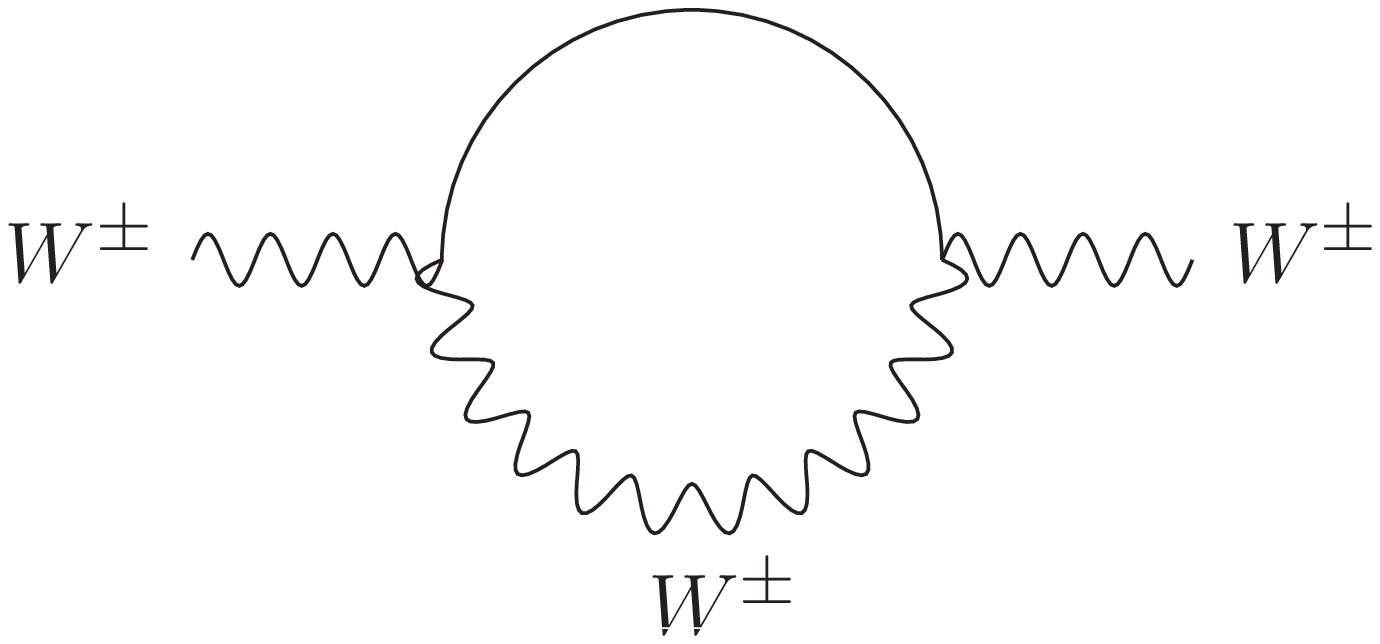}}}}_{=: \mathfrak f^{(1)} (\Gamma)} , \right . \\
& \phantom{ \{ } \left . \underbrace{\vcenter{\hbox{\includegraphics[height=2cm]{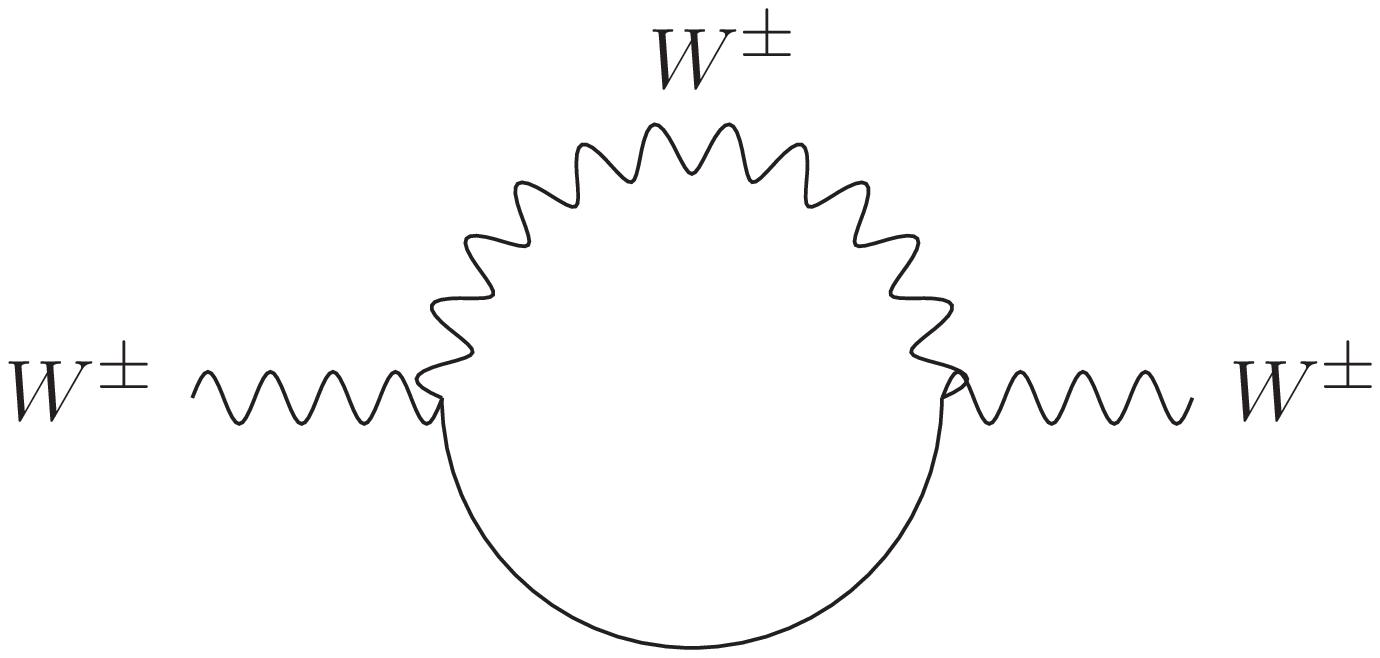}}}}_{=: \mathfrak f^{(2)} (\Gamma)} ,
\underbrace{\vcenter{\hbox{\includegraphics[height=2cm]{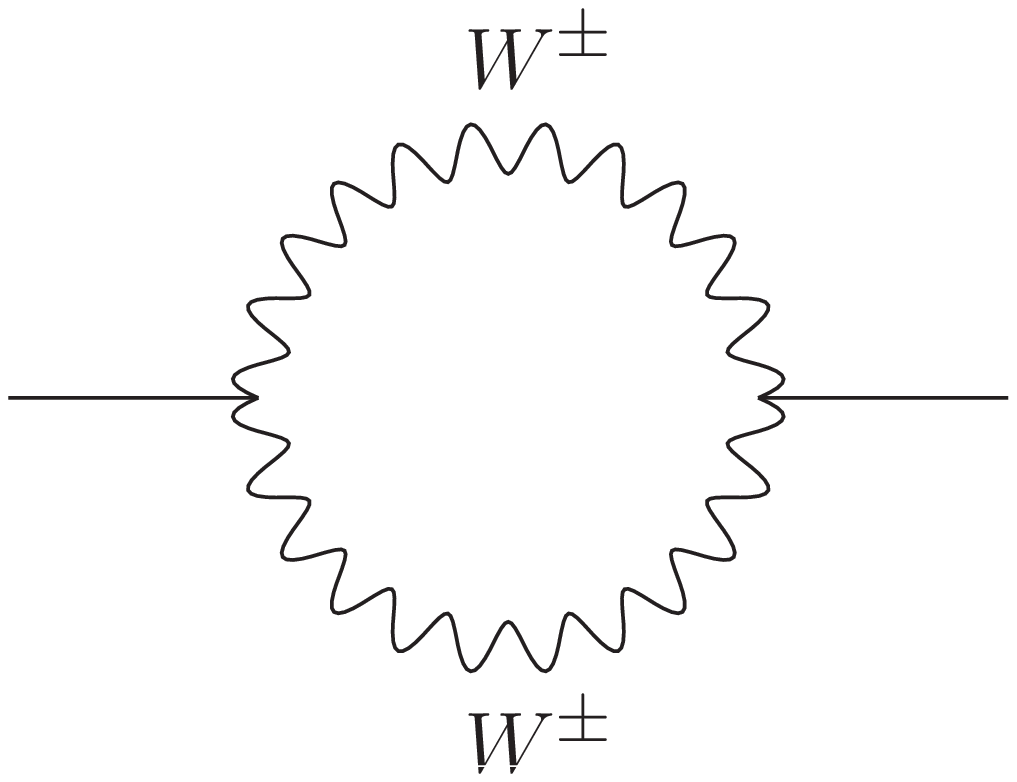}}}}_{=: \mathfrak f^{(3)} (\Gamma)}
\right \}
\end{split}
\end{equation}
and
\begin{subequations}
\begin{equation}
	\mathscr P_Z \left (
\underbrace{\vcenter{\hbox{\includegraphics[height=2cm]{1L_Sc_PW1.eps}}}}_{=: \mathfrak f^{(1)}}
\right )
= \left \{
\underbrace{\vcenter{\hbox{\includegraphics[height=2cm]{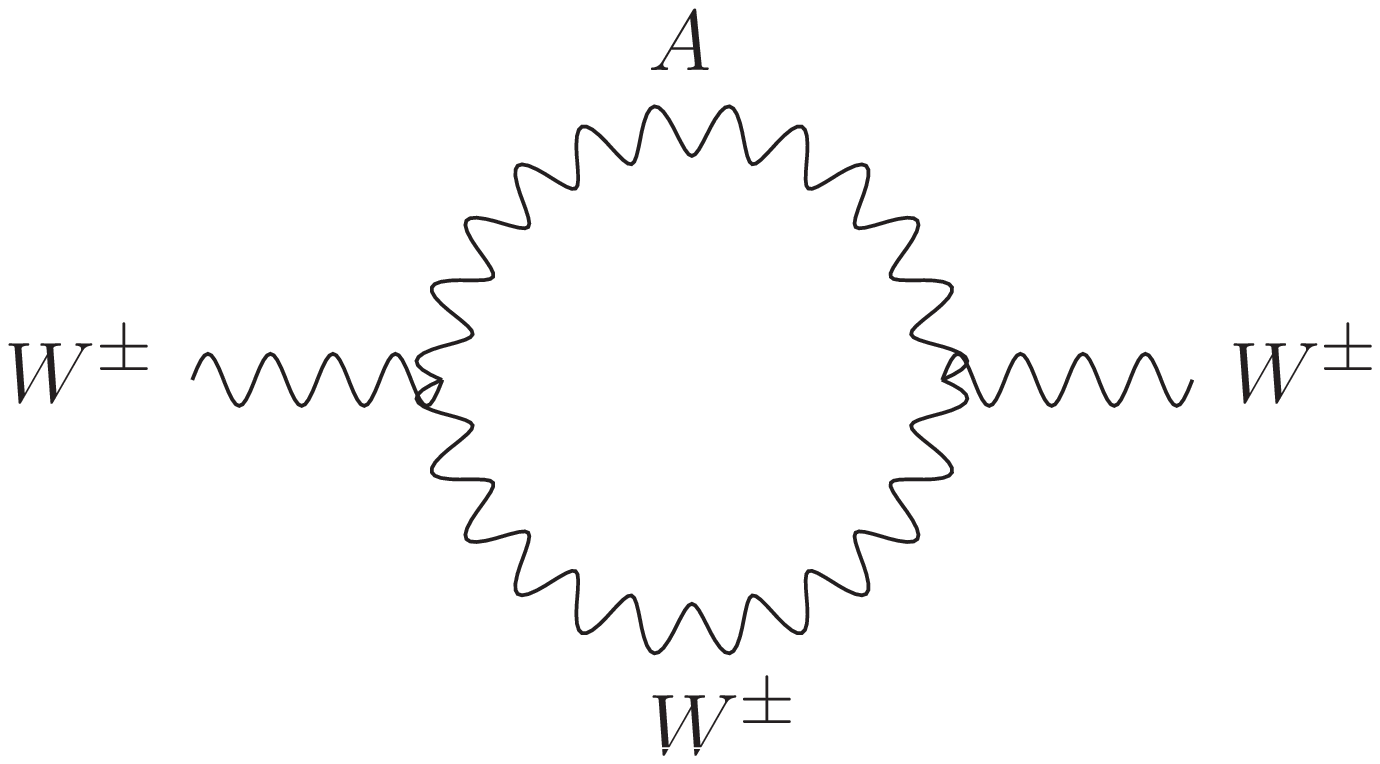}}}}_{=: P_Z^{(1)} \left ( \mathfrak f^{(1)} \right )} ,
\underbrace{\vcenter{\hbox{\includegraphics[height=2cm]{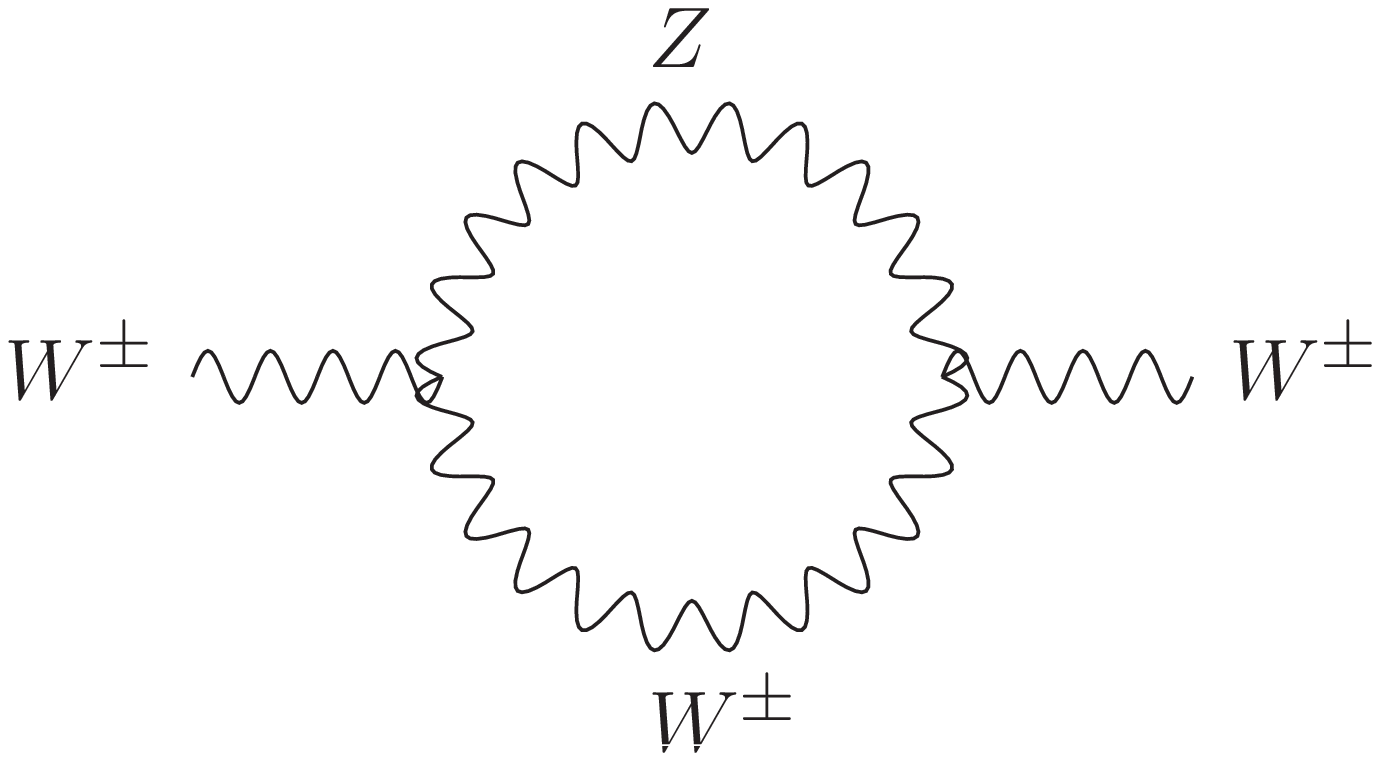}}}}_{=: P_Z^{(2)} \left ( \mathfrak f^{(1)} \right )}
\right \}
\end{equation}
\begin{equation}
	\mathscr P_Z \left (
\underbrace{\vcenter{\hbox{\includegraphics[height=2cm]{1L_Sc_PW2.eps}}}}_{=: \mathfrak f^{(2)}}
\right )
= \left \{
\underbrace{\vcenter{\hbox{\includegraphics[height=2cm]{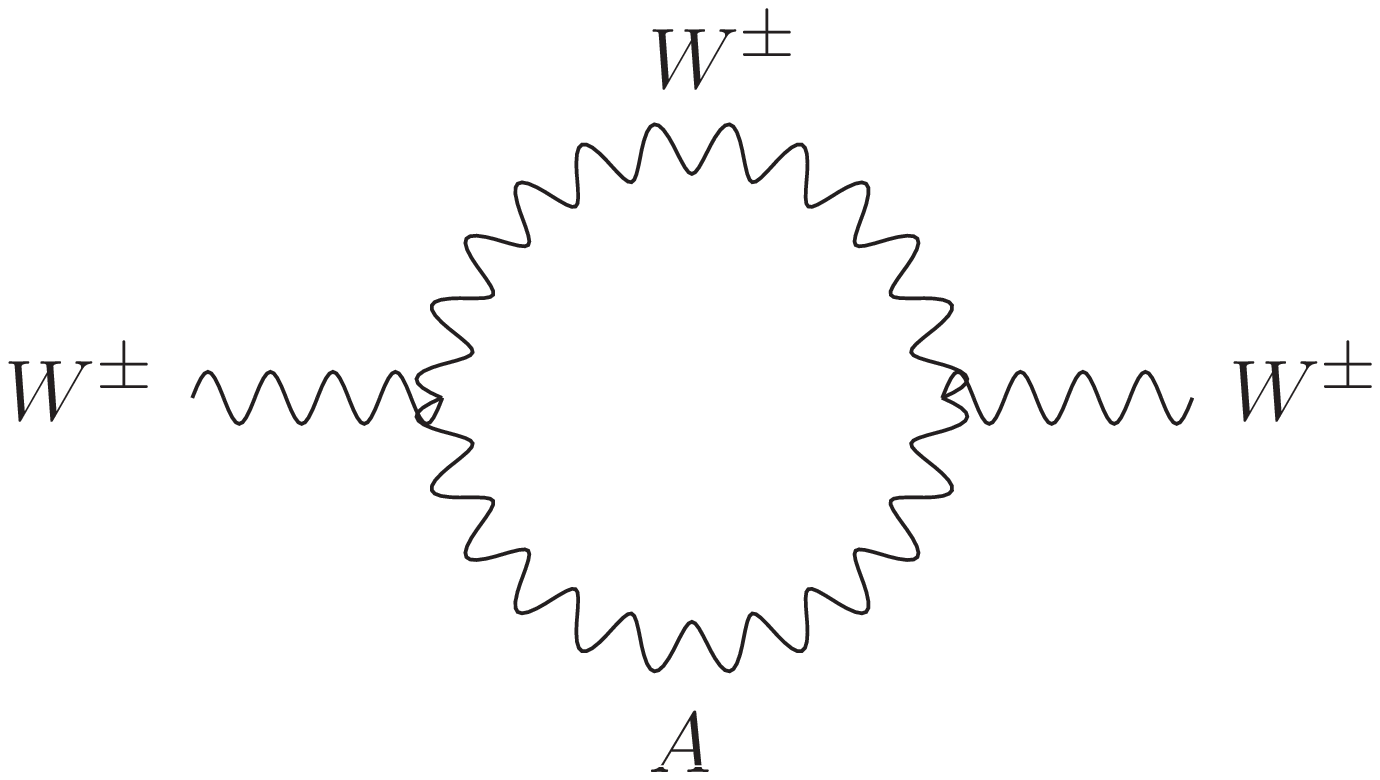}}}}_{=: P_Z^{(1)} \left ( \mathfrak f^{(2)} \right )} ,
\underbrace{\vcenter{\hbox{\includegraphics[height=2cm]{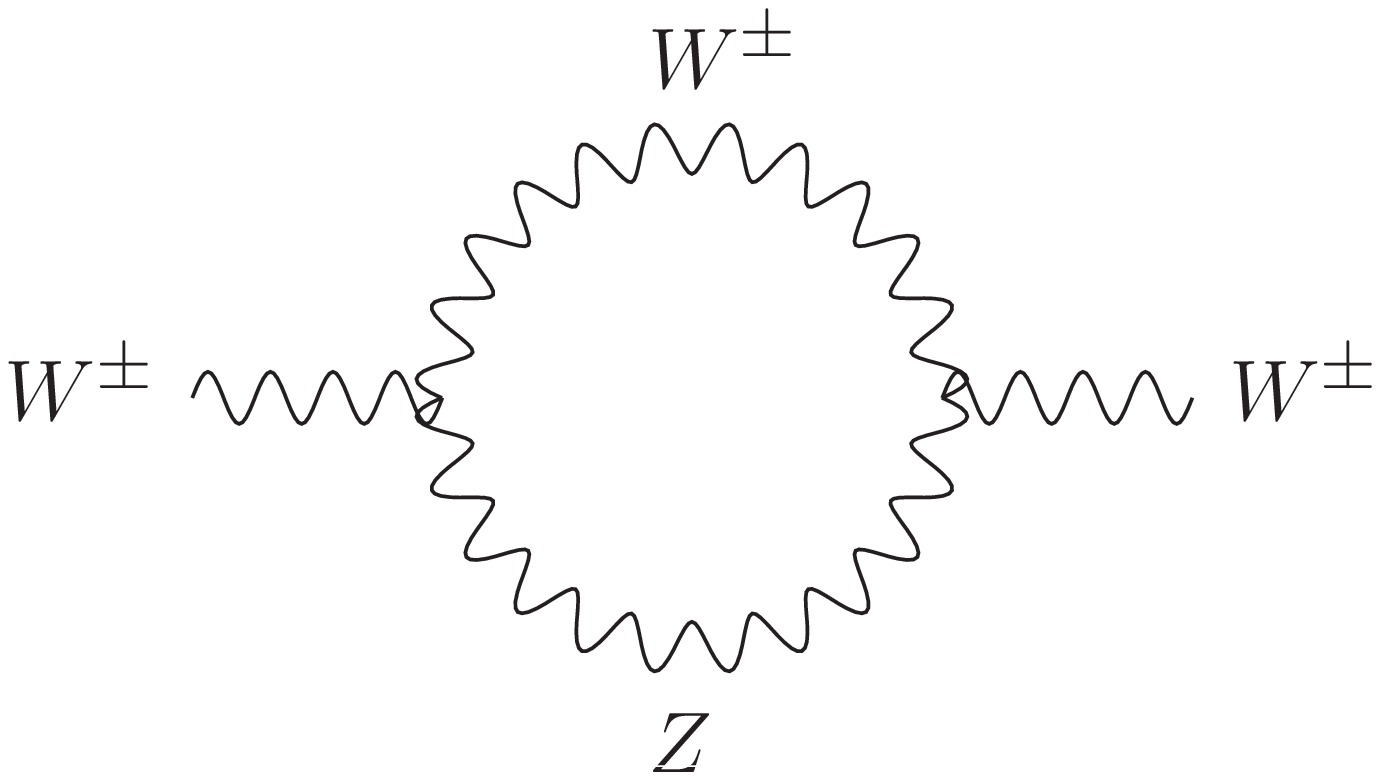}}}}_{=: P_Z^{(2)} \left ( \mathfrak f^{(2)} \right )}
\right \}
\end{equation}
\begin{equation}
\begin{split}
	\mathscr P_Z \left (
\underbrace{\vcenter{\hbox{\includegraphics[height=2cm]{1L_Sc_PW3.eps}}}}_{=: \mathfrak f^{(3)}}
\right )
= & \left \{
\underbrace{\vcenter{\hbox{\includegraphics[height=2cm]{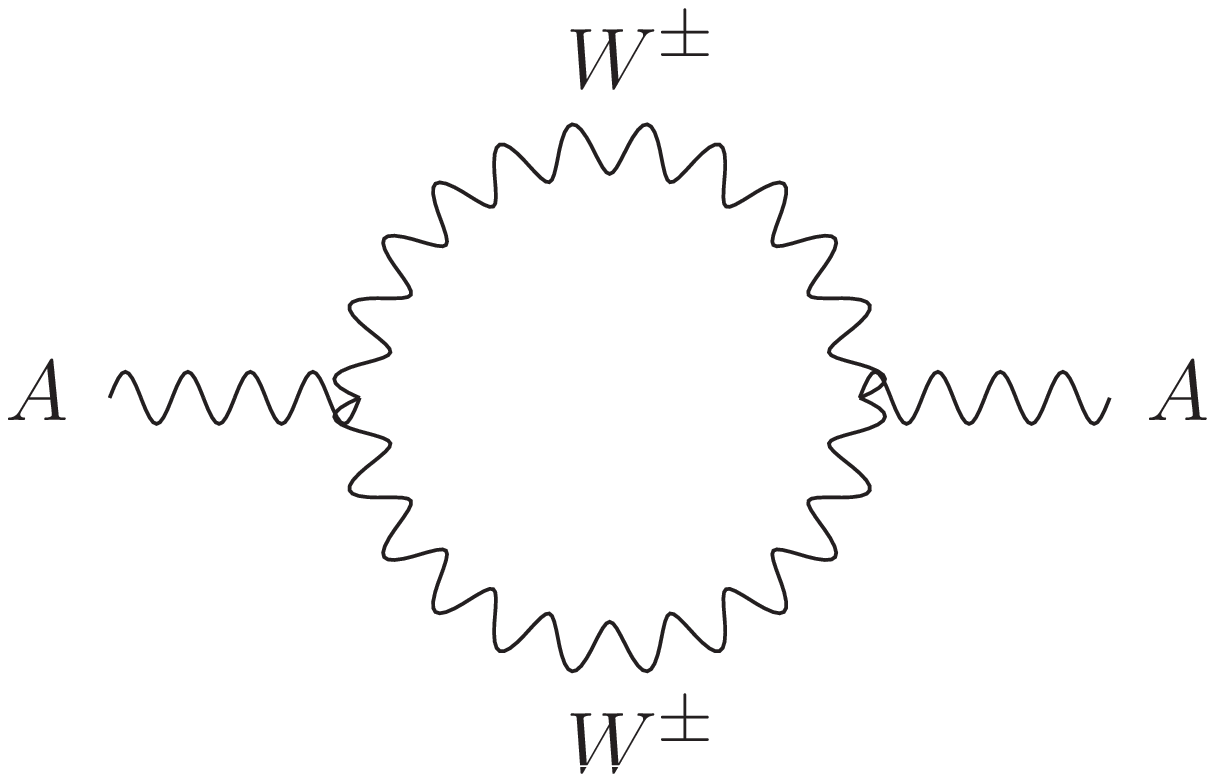}}}}_{=: P_Z^{(1)} \left ( \mathfrak f^{(3)} \right )} ,
\underbrace{\vcenter{\hbox{\includegraphics[height=2cm]{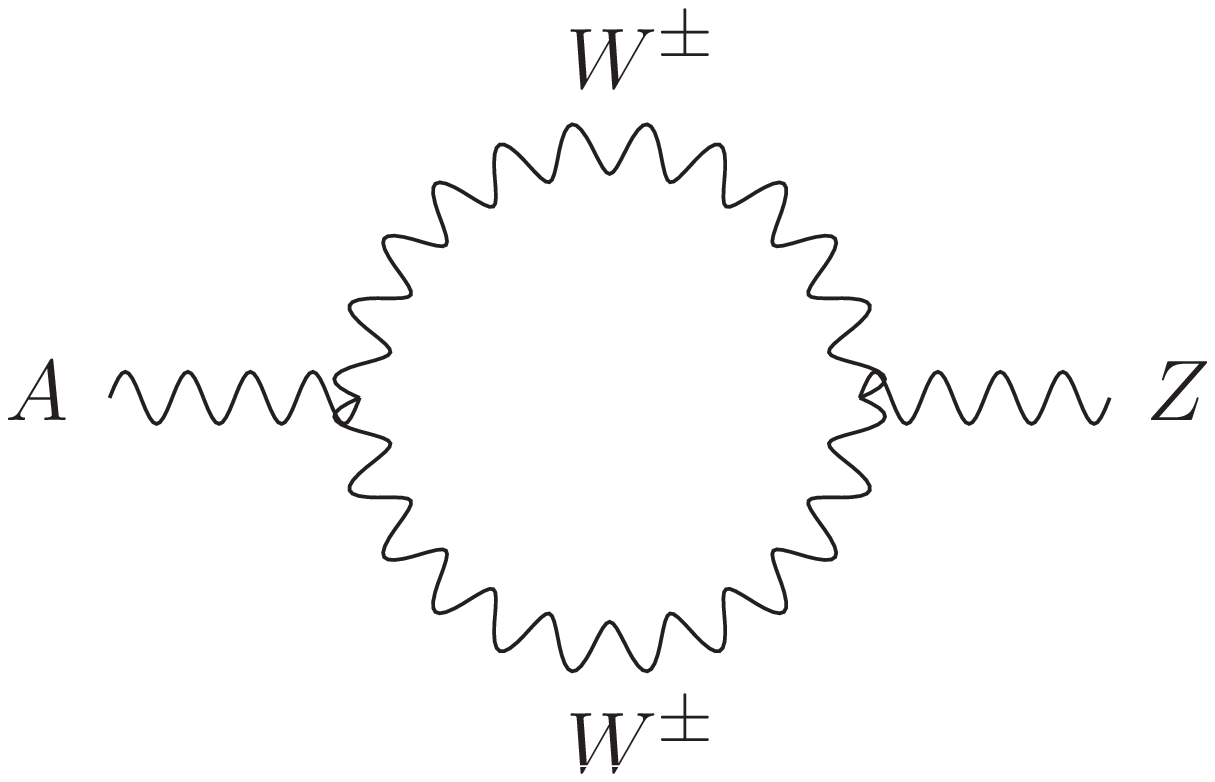}}}}_{=: P_Z^{(2)} \left ( \mathfrak f^{(3)} \right )} \right . ,\\
& \phantom{ \{ } \left . \underbrace{\vcenter{\hbox{\includegraphics[height=2cm]{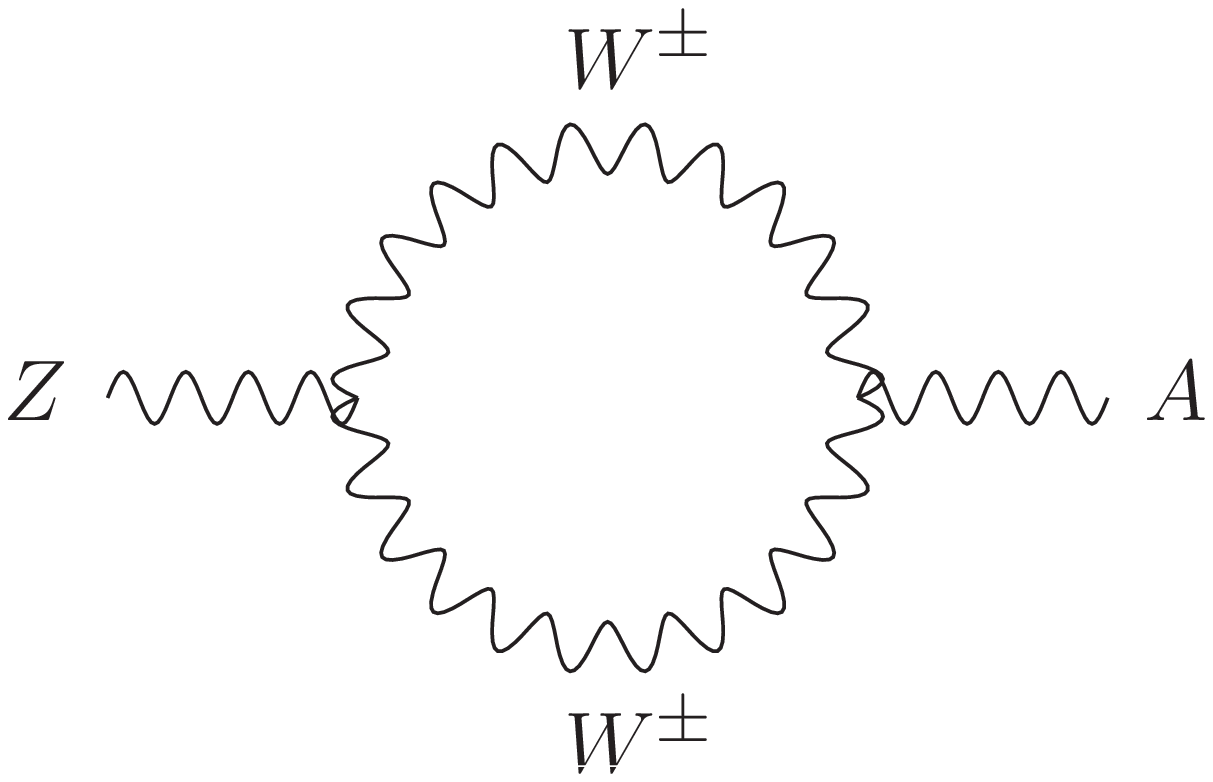}}}}_{=: P_Z^{(3)} \left ( \mathfrak f^{(3)} \right )} ,
\underbrace{\vcenter{\hbox{\includegraphics[height=2cm]{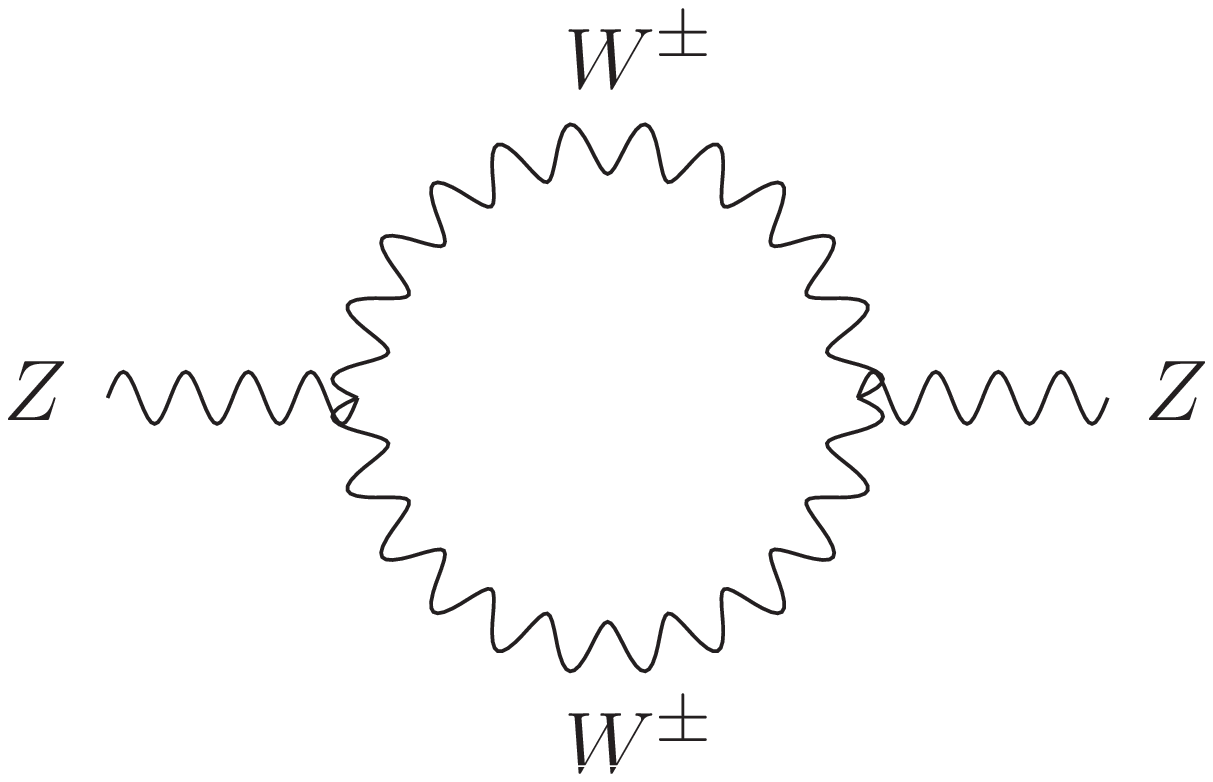}}}}_{=: P_Z^{(4)} \left ( \mathfrak f^{(3)} \right )}
\right \}
\end{split}
\end{equation}
\end{subequations}
Moreover, the above defined factors read
\begin{subequations}
\begin{equation}
\begin{split}
	\operatorname{sym} \left ( \vcenter{\hbox{\includegraphics[height=2cm]{1L_Sc.eps}}} \right ) = & \operatorname{sym} \left ( \vcenter{\hbox{\includegraphics[height=2cm]{1L_Sc_PW3_PZ1.eps}}} \right ) \\ = & \operatorname{sym} \left ( \vcenter{\hbox{\includegraphics[height=2cm]{1L_Sc_PW3_PZ2.eps}}} \right ) \\ = & \operatorname{sym} \left ( \vcenter{\hbox{\includegraphics[height=2cm]{1L_Sc_PW3_PZ3.eps}}} \right ) \\ = & \operatorname{sym} \left ( \vcenter{\hbox{\includegraphics[height=2cm]{1L_Sc_PW3_PZ4.eps}}} \right ) = 2 \, ,
\end{split}
\end{equation}
\begin{equation}
\begin{split}
	\operatorname{sym} \left ( \vcenter{\hbox{\includegraphics[height=2cm]{1L_Sc_PW1_PZ1.eps}}} \right ) = & \operatorname{sym} \left ( \vcenter{\hbox{\includegraphics[height=2cm]{1L_Sc_PW1_PZ2.eps}}} \right ) \\ = & \operatorname{sym} \left ( \vcenter{\hbox{\includegraphics[height=2cm]{1L_Sc_PW2_PZ1.eps}}} \right ) \\ = & \operatorname{sym} \left ( \vcenter{\hbox{\includegraphics[height=2cm]{1L_Sc_PW2_PZ2.eps}}} \right ) = 1 \, ,
\end{split}
\end{equation}
\begin{equation}
\begin{split}
	\operatorname{iso} \left ( \vcenter{\hbox{\includegraphics[height=2cm]{1L_Sc_PW1_PZ1.eps}}} \right ) = & \operatorname{iso} \left ( \vcenter{\hbox{\includegraphics[height=2cm]{1L_Sc_PW1_PZ2.eps}}} \right ) \\ = & \operatorname{iso} \left ( \vcenter{\hbox{\includegraphics[height=2cm]{1L_Sc_PW2_PZ1.eps}}} \right ) \\ = & \operatorname{iso} \left ( \vcenter{\hbox{\includegraphics[height=2cm]{1L_Sc_PW2_PZ2.eps}}} \right ) = 2
\end{split}
\end{equation}
and
\begin{equation}
\begin{split}
	\operatorname{iso} \left ( \vcenter{\hbox{\includegraphics[height=2cm]{1L_Sc_PW3_PZ1.eps}}} \right ) = & \operatorname{iso} \left ( \vcenter{\hbox{\includegraphics[height=2cm]{1L_Sc_PW3_PZ2.eps}}} \right ) \\ = & \operatorname{iso} \left ( \vcenter{\hbox{\includegraphics[height=2cm]{1L_Sc_PW3_PZ3.eps}}} \right ) \\ = & \operatorname{iso} \left ( \vcenter{\hbox{\includegraphics[height=2cm]{1L_Sc_PW3_PZ4.eps}}} \right ) = 1 \, ,
\end{split}
\end{equation}
such that
\begin{equation}
	\frac{\operatorname{sym} \left ( \Gamma \right )}{\operatorname{sym} \left ( \Gamma_{\text{labeled}} \right ) \operatorname{iso} \left ( \Gamma_{\text{labeled}} \right )} = 1 \, , \qquad \forall P_Z \in \mathscr P_Z \left ( \mathfrak f \right ) \, , \quad \forall \mathfrak f \in \mathfrak F_2 \left ( \Gamma \right ) \, ,
\end{equation}
\end{subequations}
where we remind the reader that we consider the one-loop self-energy graph
\begin{equation}
	\Gamma := \vcenter{\hbox{\includegraphics[height=2.5cm]{1L_Sc_labeled.eps}}} \, .
\end{equation}
\end{exmp}

Then, we have:

\vspace{\baselineskip}

\begin{thm}[\cite{Prinz}] \label{thm:corolla_polynomial-electroweak_gauge_bosons}
Let \(\Gamma\) be a 3-regular scalar QFT Feynman graph. Then the gauge bosons and their corresponding ghosts of the electroweak sector of the Standard Model can be included into the definition of the Corolla polynomial (\defnref{defn:corolla_polynomial}) by multiplying the parametric integrand \(I (\Gamma)\) of the corresponding scalar QFT amplitude by \(J (\Gamma; m_W, m_Z)\), defined as
\begin{equation} \label{eqn:j}
\begin{split}
	J \left ( \Gamma; m_W, m_Z \right ) := & \left ( \vphantom{\phantom{ ( + } \times i^{\left \vert \Gamma^{[0]} \right \vert + \left \vert \Gamma^{[1]} \right \vert} \left ( \underline{g} \cos \theta_W \right ) ^{\left \vert P_Z^{[0]} \right \vert} \underline{e}^{\left \vert \Gamma^{[0]} \setminus P_Z^{[0]} \right \vert} e^{-\left ( \sum_{e \in \mathfrak f^{[1]}} A_e m_W^2 + \sum_{e \in P_Z^{[1]}} A_e m_Z^2 \right )}} i^{\left \vert \Gamma^{[1]} \right \vert} \underline{g_s}^{\left \vert \Gamma^{[0]} \right \vert} \operatorname{color} (\Gamma) \right . \\ & \left . \phantom{ ( } + \sum_{\mathfrak f \in \mathfrak F_2 \left ( \Gamma \right )} \sum_{P_Z \in \mathscr{P}_Z (\mathfrak f)} \frac{\operatorname{sym} \left ( \Gamma \right )}{\operatorname{sym} \left ( \Gamma_{\text{\emph{labeled}}} \right ) \operatorname{iso} \left ( \Gamma_{\text{\emph{labeled}}} \right )} \right . \\ & \left . \phantom{ ( + } \times i^{\left \vert \Gamma^{[0]} \right \vert + \left \vert \Gamma^{[1]} \right \vert} \left ( \underline{g} \cos \theta_W \right ) ^{\left \vert P_Z^{[0]} \right \vert} \underline{e}^{\left \vert \Gamma^{[0]} \setminus P_Z^{[0]} \right \vert} e^{-\left ( \sum_{e \in \mathfrak f^{[1]}} A_e m_W^2 + \sum_{e \in P_Z^{[1]}} A_e m_Z^2 \right )} \right ) \, ,
\end{split}
\end{equation}
such that the corresponding parametric integrand for gauge theory including gluons, ghosts, electroweak gauge bosons and their corresponding ghosts \(\tilde{I}_{\mathcal F} (\Gamma)\) can be written as\footnote{Note that here the Corolla differential \(\mathcal D (\Gamma)\) is used instead of \(\mathcal D_{\text{QCD}} (\Gamma)\), since the additional constants are moved into the definition of \(J (\Gamma; m_W, m_Z)\).}
\begin{equation} \label{eqn:amplitude_electroweak_gauge_bosons}
	\tilde{I}_{\mathcal F} (\Gamma) = \mathcal D (\Gamma) I (\Gamma) J \left ( \Gamma; m_W, m_Z \right ) \, .
\end{equation}
\end{thm}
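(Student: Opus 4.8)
The plan is to establish the identity by exhibiting, term by term, a correspondence between the decorations of \( \Gamma \) generated by \( \mathcal D (\Gamma) I (\Gamma) J (\Gamma; m_W, m_Z) \) and the Feynman graphs of the electroweak gauge boson sector together with their ghosts that have \( \Gamma \) as underlying topology, and by checking that all scalar prefactors, masses and combinatorial factors come out correctly. The guiding observation is that the cubic Feynman rules \( ggg \), \( W^+ W^- Z \) and \( W^+ W^- A \) --- as well as the corresponding ghost vertices --- all carry the \emph{same} Lorentz and momentum tensor structure, differing only in their coupling constant (\( \underline{g_s} \), \( \underline{g} \cos \theta_W \), respectively \( \underline{e} \)) and in the presence of a non-trivial internal group-theoretic factor; for the electroweak vertices the latter is trivial, the would-be charge factors being mere numbers that one absorbs into the coupling, and \( W^\pm \)-number conservation --- the analogue of colour conservation --- will be implemented combinatorially rather than algebraically. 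Consequently the Corolla differential \( \mathcal D (\Gamma) \) of \defnref{defn:corolla_differential}, which by \thmref{thm:gauge_theory_amplitude_pure_yang-mills} reproduces exactly this universal structure (the alternating sign \( (-1)^i \) supplying the minus per ghost loop), is already the correct operator; only the scalar dressing must be furnished, and this is the content of \( J (\Gamma; m_W, m_Z) \), which accordingly also carries the factor \( \operatorname{color} (\Gamma) \) in its purely gluonic summand.

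I would then treat the two summands of \( J (\Gamma; m_W, m_Z) \) in turn. Multiplying \( \mathcal D (\Gamma) I (\Gamma) \) by the first summand \( i^{\left \vert \Gamma^{[1]} \right \vert} \underline{g_s}^{\left \vert \Gamma^{[0]} \right \vert} \operatorname{color} (\Gamma) \) gives \( \mathcal D_{\text{QCD}} (\Gamma) I (\Gamma) \) by the very definition of \( \mathcal C_{\text{QCD}} (\Gamma) \) in \defnref{defn:corolla_polynomial}, and hence, by \thmref{thm:gauge_theory_amplitude_pure_yang-mills}, accounts for all pure Yang--Mills graphs built on \( \Gamma \) with massless gluons and ghosts. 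For the second summand one uses \( W^\pm \)-number conservation: at a \( 3 \)-valent vertex the number of \( W^\pm \)-lines is even, hence \( 0 \) (which forces the purely gluonic vertex already treated) or \( 2 \). In the latter case the \( W^\pm \)-lines attached at all vertices form a \( 2 \)-regular spanning subgraph, i.e.\ a \( 2 \)-factor \( \mathfrak f \in \mathfrak F_2 (\Gamma) \) in the sense of \defnref{defn:factor} (extended to external legs, so that open \( W^\pm \)-lines between two external \( W^\pm \)-legs are admitted), and the unique remaining edge met by each vertex must carry a \( Z \)- or a photon-label; the choice of the \( Z \)-labelled subset is precisely the choice of \( P_Z \in \mathscr P_Z (\mathfrak f) = \wp \left ( \Gamma^{[1]} \setminus \mathfrak f^{[1]} \right ) \) of \defnref{defn:F_2_P_Z}. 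This identifies the pairs \( (\mathfrak f, P_Z) \) with the electroweak gauge-boson labellings of \( \Gamma \) and explains the double sum in \eqref{eqn:j}.

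For a fixed labelling \( (\mathfrak f, P_Z) \) I would check the attached scalar data. Since a non-\( \mathfrak f \) edge is the third line of each of its endpoints and a vertex meets exactly one non-\( \mathfrak f \) edge, the vertices incident to a \( Z \)-line are precisely the \( W^+ W^- Z \) vertices and there are \( \left \vert P_Z^{[0]} \right \vert \) of them, while the remaining \( \left \vert \Gamma^{[0]} \setminus P_Z^{[0]} \right \vert \) vertices are \( W^+ W^- A \) vertices; this yields the coupling factor \( \left ( \underline{g} \cos \theta_W \right ) ^{\left \vert P_Z^{[0]} \right \vert} \underline{e}^{\left \vert \Gamma^{[0]} \setminus P_Z^{[0]} \right \vert} \), and the explicit \( i \)'s in the vertex and propagator rules of \cite{Romao_Silva} together with the Schwinger parametrisation of \thmref{thm:parametric_integrand_non-standard} assemble into \( i^{\left \vert \Gamma^{[0]} \right \vert + \left \vert \Gamma^{[1]} \right \vert} \). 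The exponential \( e^{-\left ( \sum_{e \in \mathfrak f^{[1]}} A_e m_W^2 + \sum_{e \in P_Z^{[1]}} A_e m_Z^2 \right )} \) is just \( e^{-\sum_{e \in \Gamma^{[1]}} A_e m_e^2} \) for the electroweak mass assignment (\( m_W \) on \( W \)-lines, \( m_Z \) on \( Z \)-lines, \( 0 \) on photon lines); since the Symanzik polynomials of \thmref{thm:parametric_integrand_non-standard} are mass-independent and \( \mathcal D (\Gamma) \) acts only through \( \partial / \partial \xi \) and \( 1/A_e \), this factor commutes with \( \mathcal D (\Gamma) \) and converts the massless integrand \( I (\Gamma) \) into the correctly massive one on every propagator. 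Finally, \thmref{thm:gauge_theory_amplitude_pure_yang-mills} carries an overall \( 1/\operatorname{sym} (\Gamma) \); a given decorated graph \( \Gamma_{\text{labeled}} \) is produced \( \operatorname{iso} \left ( \Gamma_{\text{labeled}} \right ) \) times by the double sum (once for each pair in its \( \operatorname{aut} (\Gamma) \)-orbit), so the weight \( \operatorname{sym} \left ( \Gamma \right ) / \left ( \operatorname{sym} \left ( \Gamma_{\text{labeled}} \right ) \operatorname{iso} \left ( \Gamma_{\text{labeled}} \right ) \right ) \) is chosen precisely so that the net combinatorial factor of that graph is \( 1 / \operatorname{sym} \left ( \Gamma_{\text{labeled}} \right ) \), as it must be (cf.\ \exref{exmp:1L_sc-PW_PZ}).

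The main obstacle, to which the bulk of the detailed argument is devoted, is the ghost sector: the higher summands of \( \mathcal C (\Gamma) \) in \defnref{defn:corolla_polynomial} turn families of pairwise disjoint cycles into ghost loops, and one must verify that \emph{every} ghosted cycle laid over a \( 2 \)-factor \( \mathfrak f \) corresponds to a genuine electroweak ghost configuration --- charged-ghost loops with a \( Z \) or photon at each vertex (\( \bar{c}^\pm c^\pm Z \), \( \bar{c}^\pm c^\pm A \) vertices), together with mixed configurations in which a ghosted \( Z \)- or photon-line carries a \( c_Z \) or \( c_A \) ghost coupling into a charged-ghost \( W \)-line (\( \bar{c}^\pm c_Z W^\mp \), \( \bar{c}^\pm c_A W^\mp \) vertices) --- and that the per-edge mass assignment and the per-vertex coupling assignment encoded in \( J (\Gamma; m_W, m_Z) \) remain valid no matter which cycles are ghosted. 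The point is that in the Feynman gauge the \( c^\pm \), \( c_Z \) and \( c_A \) ghosts propagate with exactly the masses of \( W \), \( Z \) and the photon and couple with exactly their coupling constants, so that \( J (\Gamma; m_W, m_Z) \) needs no knowledge of the ghosting; granting this, \eqref{eqn:amplitude_electroweak_gauge_bosons} follows by collecting the matched contributions.
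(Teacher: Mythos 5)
Your proposal is correct and follows essentially the same route as the paper's proof: you invoke the identical tensor structure of the cubic gauge-boson vertices so that \( \mathcal D (\Gamma) \) needs no modification, encode \( W^\pm \)-number conservation via \( 2 \)-factors and the \( Z/A \) choice via \( \mathscr P_Z (\mathfrak f) \), collect the couplings, powers of \( i \), mass exponentials and symmetry weights into \( J \), and observe that \( J \) commutes with \( \mathcal D (\Gamma) \) and is insensitive to which cycles are ghosted because the ghosts carry the same masses and couplings as their gauge bosons. Your treatment is, if anything, slightly more explicit than the paper's on the commutation of \( J \) with the Corolla differential and on the per-labelling bookkeeping.
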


\begin{proof}
Recall that the Feynman rules do not allow gluons to couple to the gauge bosons of the electroweak sector of the Standard Model. Therefore, we can consider these two parts of the Standard Model separately. Additionally note that the Feynman rules for the gauge bosons of the electroweak sector of the Standard Model have precisely the same tensor structure as the Feynman rules for the gluons. The only difference lies in different factors of \(i\), different coupling constants, the structure constants for the gluons and the mass terms for the massive gauge bosons and their corresponding ghosts of the electroweak sector of the Standard Model. These are taken in account by the factors \(i^{\left \vert \Gamma^{[1]} \right \vert} \underline{g_s}^{\left \vert \Gamma^{[0]} \right \vert} \operatorname{color} (\Gamma)\) and \(i^{\left \vert \Gamma^{[0]} \right \vert + \left \vert \Gamma^{[1]} \right \vert} \left ( \underline{g} \cos \theta_W \right ) ^{\left \vert P_Z^{[0]} \right \vert} \underline{e}^{\left \vert \Gamma^{[0]} \setminus P_Z^{[0]} \right \vert}\), respectively. Moreover, notice that due to the conservation of \(W^\pm\)-particles each 3- or 4-valent vertex has to consist of an even number of \(W^\pm\)-particles. In particular, since the 4-valent vertices are created by the Corolla polynomial in the same fashion than for the gluons we only need to consider the 3-valent Feynman rules, i.e. we have to sum over all possibilities to label edges in \(\Gamma\) in such a way, that each 3-valent vertex consists of exactly two half-edges with label \(W^\pm\) and one unlabeled half-edge. Additionally the edge type has to be respected while gluing these 3-valent vertices together. This is precisely fulfilled by the sets \(\mathfrak f \in \mathfrak F_2 \left ( \Gamma \right )\). Furthermore, notice that the Feynman rules of the electroweak sector of the Standard Model allow each unlabeled edge of \(\Gamma\) to be turned into either a \(Z\)-particle or an \(A\)-particle. This is precisely fulfilled by the sets \(P_Z \in \mathscr P_Z (\mathfrak f)\) for each set \(\mathfrak f \in \mathfrak F_2 \left ( \Gamma \right )\). Moreover, the corresponding ghost edges have the same masses than their corresponding gauge bosons, meaning that each component \(\mathcal D^i (\Gamma) I (\Gamma)\) gets the same factor \(J \left ( \Gamma; m_W, m_Z \right )\) and thus the whole contribution created by the Corolla differential can be multiplied by the factor given in \mbox{Equation \eqref{eqn:j}}, such that \mbox{Equation \eqref{eqn:amplitude_electroweak_gauge_bosons}} holds. Since no derivatives of \(\mathcal D (\Gamma)\) act on the corresponding mass terms in the parametric representation of a scalar QFT, \(I (\Gamma)\) can be simply multiplied by \(J \left ( \Gamma; m_W, m_Z \right )\). Finally, the possibly different symmetry factors of \(\Gamma_{\text{labeled}}\) compared to \(\Gamma\) are compensated by the factor \(\operatorname{sym} \left ( \Gamma \right ) / \operatorname{sym} \left ( \Gamma_{\text{labeled}} \right )\) and the redundant number of isomorphic graphs is divided out by \(\operatorname{iso} \left ( \Gamma_{\text{labeled}} \right )\).
\end{proof}

\vspace{\baselineskip}

\begin{lem}[\cite{Prinz}] \label{lem:symsymiso}
Let the assumptions of \thmref{thm:corolla_polynomial-electroweak_gauge_bosons} hold. Then we have
\begin{equation}
	\frac{\operatorname{sym} \left ( \Gamma \right )}{\operatorname{sym} \left ( \Gamma_{\text{\emph{labeled}}} \right ) \operatorname{iso} \left ( \Gamma_{\text{\emph{labeled}}} \right )} = 1, \qquad \forall \Gamma \, .
\end{equation}
\end{lem}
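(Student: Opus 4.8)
The plan is to read the three quantities $\operatorname{sym}(\Gamma)$, $\operatorname{sym}(\Gamma_{\text{labeled}})$ and $\operatorname{iso}(\Gamma_{\text{labeled}})$ off a single group action and then apply the orbit--stabilizer theorem. Let $S := \set{(\mathfrak f, P_Z) \mid \mathfrak f \in \mathfrak F_2(\Gamma),\; P_Z \in \mathscr P_Z(\mathfrak f)}$ be the set of all $W/Z/A$-labelings of $\Gamma$ indexing the double sum in \eqref{eqn:j}, and for $\ell \in S$ write $\Gamma_\ell$ for the associated labeled graph; distinct $\ell$ give distinct $\Gamma_\ell$, since $\ell$ is recovered from $\Gamma_\ell$ as the partition of the edge set into its $W$-, $Z$- and $A$-parts. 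First I would check that $\operatorname{aut}(\Gamma)$ acts on $S$: any $g \in \operatorname{aut}(\Gamma)$ preserves valences and incidences (\defnref{defn:symmetry_factor}), hence maps a $2$-factor $\mathfrak f$ to a $2$-factor $g(\mathfrak f)$, and since $\mathscr P_Z(g(\mathfrak f))$ is by \defnref{defn:F_2_P_Z} the \emph{full} power set of the edges outside $g(\mathfrak f)$, it maps $P_Z \in \mathscr P_Z(\mathfrak f)$ to $g(P_Z) \in \mathscr P_Z(g(\mathfrak f))$; so $g\cdot(\mathfrak f,P_Z) := (g(\mathfrak f),g(P_Z))$ is a well-defined action.

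Next I would establish two identifications for a fixed $\ell \in S$. \emph{(i) The orbit equals the isomorphism class.} All labelings in $S$ share the same underlying graph $\Gamma$, so a labeled-graph isomorphism from $\Gamma_\ell$ to $\Gamma_{\ell'}$ is precisely an automorphism of $\Gamma$ (the identity on external edges, respecting $\mathcal R_V$ and $\mathcal R_E$, as in \defnref{defn:symmetry_factor}) that additionally carries the labels of $\ell$ onto those of $\ell'$, i.e. an element $g\in\operatorname{aut}(\Gamma)$ with $g\cdot\ell = \ell'$; hence $\abs{\operatorname{aut}(\Gamma)\cdot\ell}$ is exactly the number of elements of $S$ whose labeled graph is isomorphic to $\Gamma_\ell$, which by \defnref{defn:F_2_P_Z} is $\operatorname{iso}(\Gamma_{\text{labeled}})$. \emph{(ii) The stabilizer equals $\operatorname{aut}(\Gamma_{\text{labeled}})$.} An element $g\in\operatorname{aut}(\Gamma)$ fixes $\ell$ if and only if it maps the $W$-, $Z$- and $A$-edge sets to themselves, which is exactly the condition for $g$ to be an automorphism of the labeled graph $\Gamma_\ell$; thus $\operatorname{Stab}_{\operatorname{aut}(\Gamma)}(\ell) = \operatorname{aut}(\Gamma_\ell)$ and $\abs{\operatorname{Stab}_{\operatorname{aut}(\Gamma)}(\ell)} = \operatorname{sym}(\Gamma_{\text{labeled}})$. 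Here one uses the convention of \mbox{Subsection \ref{ssec:oriented_edges}} that we work with unoriented edges, so that a $W$-labeling is a bare subset of edges with no $W^+/W^-$ orientation datum to track.

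The orbit--stabilizer theorem then gives $\operatorname{sym}(\Gamma) = \abs{\operatorname{aut}(\Gamma)} = \abs{\operatorname{aut}(\Gamma)\cdot\ell}\cdot\abs{\operatorname{Stab}_{\operatorname{aut}(\Gamma)}(\ell)} = \operatorname{iso}(\Gamma_{\text{labeled}})\,\operatorname{sym}(\Gamma_{\text{labeled}})$, and dividing yields the assertion; since $\ell \in S$ was arbitrary this holds for every $\Gamma$ and every choice of $\mathfrak f$ and $P_Z$, consistent with the statement of \thmref{thm:corolla_polynomial-electroweak_gauge_bosons}. The main obstacle is identification (i): one must argue carefully that ``isomorphic as edge-labeled graphs'' is the same relation as ``lying in a single $\operatorname{aut}(\Gamma)$-orbit'', which amounts to checking that an isomorphism between two labelings of the \emph{same} graph $\Gamma$ is nothing more than a label-compatible automorphism of $\Gamma$, with due attention to the conventions of \defnref{defn:symmetry_factor} (fixing external edges, respecting $\mathcal R_V$ and $\mathcal R_E$). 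Everything else is routine bookkeeping.
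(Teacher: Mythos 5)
Your proof is correct, and it is a genuinely cleaner route than the one the paper takes. The paper argues by physical analogy: it compares the symmetry factors of electroweak gauge-boson graphs to those of QED graphs, asserts that the \(Z\)/\(A\) labeling does not change symmetry factors, invokes the QED folklore that the number of orientations times the symmetry factor is \(1\), and defers the ghost-loop and edge-shrinking cases to further hand-waving and to an external lemma of Sars. You instead package everything into a single application of the orbit--stabilizer theorem for the action of \(\operatorname{aut}(\Gamma)\) on the indexing set \(\set{(\mathfrak f, P_Z)}\): the orbit of a labeling is its isomorphism class (giving \(\operatorname{iso}(\Gamma_{\text{labeled}})\)), the stabilizer is the automorphism group of the labeled graph (giving \(\operatorname{sym}(\Gamma_{\text{labeled}})\)), and the product is \(\left\vert\operatorname{aut}(\Gamma)\right\vert = \operatorname{sym}(\Gamma)\). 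This is self-contained, makes the mechanism transparent, and is exactly consistent with the worked one-loop example (the cycle \(\set{1,2}\) has stabilizer of order \(2\) and orbit of size \(1\); each path labeling has trivial stabilizer and orbit of size \(2\)). You correctly identify the one genuine subtlety, namely that ``isomorphic as labeled graphs'' must be read as ``related by an element of \(\operatorname{aut}(\Gamma)\) in the sense of the paper's Definition of automorphisms (identity on external legs, respecting \(\mathcal R_V\) and \(\mathcal R_E\))''; the paper never pins this down, but it is the only reading compatible with its example, so your stipulation is the right one. Two small remarks: your observation that distinct pairs \((\mathfrak f, P_Z)\) give distinct labeled graphs is what guarantees that orbit size and the count \(\operatorname{iso}\) agree, so it is worth keeping explicit; and the paper's closing remarks about ghost edges and \(3\)-to-\(4\)-valent shrinking concern the consistency of the factor inside the full Corolla differential (really part of the surrounding theorem), not the combinatorial identity itself, so their absence from your argument is not a gap in a proof of the lemma as stated.
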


\begin{proof}
Note that the symmetry factors of pure electroweak gauge boson Feynman graphs are similar to the symmetry factors of quantum electrodynamic Feynman graphs, since every vertex possesses exactly two orientable edges of the same type and one unoriented edge of another type. Since we have chosen to work in this paper with unoriented \(W^\pm\)-labeled edges, the symmetry factors of pure electroweak gauge boson Feynman graphs are not all equal to 1, as in quantum electrodynamics for unoriented fermion edges \cite{Kreimer}. Moreover, note that the labeling with \(Z\)-labels and \(A\)-labels does not change the symmetry-factors, i.e. the element \(\mathfrak f \in \mathfrak F_2 \left ( \Gamma \right )\) has the same symmetry factor as all elements \(P_Z \in \mathscr P_Z (\mathfrak f)\). This is true because the symmetry factor can be defined as the number of ways half-edges of adjacent vertices can be interchanged without changing the graph. But since two edges of a vertex are of the same particle type, the remaining particle type is fixed and hence it does not matter if it is of \(Z\)-type or \(A\)-type, the only contribution comes from the unoriented two edges of the same particle type. Furthermore, we remark that as in quantum electrodynamics the sum of all possible orientations of unoriented edges (which is given by the factor \(\operatorname{iso} \left ( \Gamma_{\text{labeled}} \right )\)) times their symmetry factor \(\operatorname{sym} \left ( \Gamma_{\text{labeled}} \right )\) is equal to 1 \cite{Kreimer}. Moreover, when also ghost edges come into play, the symmetry factors may also change, but in the same way changes the number of isomorphic graphs with the same argument. Finally, in \cite[Lemma 5.1]{Sars} it is proven that the symmetry factors work out correct while passing from 3-valent to 4-valent vertices by shrinking internal edges.
\end{proof}

\vspace{\baselineskip}

\begin{cor}[\cite{Prinz}] \label{cor:inclusion_ew_gauge_bosons}
In particular, \(J \left ( \Gamma; m_W, m_Z \right )\) simplifies to
\begin{equation}
\begin{split}
	J \left ( \Gamma; m_W, m_Z \right ) := & \left ( \vphantom{\phantom{ ( + } \times e^{-\left ( \sum_{e \in \mathfrak f^{[1]}} A_e m_W^2 + \sum_{e \in P_Z^{[1]}} A_e m_Z^2 \right )}} i^{\left \vert \Gamma^{[1]} \right \vert} \underline{g_s}^{\left \vert \Gamma^{[0]} \right \vert} \operatorname{color} (\Gamma) \right . \\ & \left . \phantom{ ( } + \sum_{\mathfrak f \in \mathfrak F_2 \left ( \Gamma \right )} \sum_{P_Z \in \mathscr{P}_Z (\mathfrak f)} i^{\left \vert \Gamma^{[0]} \right \vert + \left \vert \Gamma^{[1]} \right \vert} \left ( \underline{g} \cos \theta_W \right ) ^{\left \vert P_Z^{[0]} \right \vert} \underline{e}^{\left \vert \Gamma^{[0]} \setminus P_Z^{[0]} \right \vert} \right . \\ & \left . \phantom{ ( + } \times e^{-\left ( \sum_{e \in \mathfrak f^{[1]}} A_e m_W^2 + \sum_{e \in P_Z^{[1]}} A_e m_Z^2 \right )} \right ) \, .
\end{split}
\end{equation}
\end{cor}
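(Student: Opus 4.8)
The plan is to obtain the simplified expression for $J(\Gamma; m_W, m_Z)$ directly from its definition in \thmref{thm:corolla_polynomial-electroweak_gauge_bosons} by substituting the identity of Lemma~\ref{lem:symsymiso}. First I would recall the structure of Equation~\eqref{eqn:j}: it is the sum of a ghost- and $W^\pm$-free contribution $i^{\left \vert \Gamma^{[1]} \right \vert}\underline{g_s}^{\left \vert \Gamma^{[0]} \right \vert}\operatorname{color}(\Gamma)$, which carries no symmetry prefactor at all, and a double sum over $\mathfrak f \in \mathfrak F_2(\Gamma)$ and $P_Z \in \mathscr P_Z(\mathfrak f)$ in which each summand is weighted by the ratio $\operatorname{sym}(\Gamma)/\bigl(\operatorname{sym}(\Gamma_{\text{labeled}})\operatorname{iso}(\Gamma_{\text{labeled}})\bigr)$ together with the $i$-powers, coupling-constant powers and mass exponentials that are specific to the labeled graph $\Gamma_{\text{labeled}}$ determined by the pair $(\mathfrak f, P_Z)$.

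Next I would invoke Lemma~\ref{lem:symsymiso}. Its hypotheses are precisely those of \thmref{thm:corolla_polynomial-electroweak_gauge_bosons}, so the ratio $\operatorname{sym}(\Gamma)/\bigl(\operatorname{sym}(\Gamma_{\text{labeled}})\operatorname{iso}(\Gamma_{\text{labeled}})\bigr)$ equals $1$ for every labeled graph $\Gamma_{\text{labeled}}$ occurring in the double sum. Replacing each such ratio by $1$ in \eqref{eqn:j}, and leaving every other factor unchanged, produces exactly the asserted formula; this is a term-by-term substitution with no computation involved.

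The only point meriting attention --- and the nearest thing here to an obstacle --- is to confirm that Lemma~\ref{lem:symsymiso} applies uniformly to every summand, i.e. that each pair $(\mathfrak f, P_Z)$ with $\mathfrak f \in \mathfrak F_2(\Gamma)$ and $P_Z \in \mathscr P_Z(\mathfrak f)$ indeed yields a labeled graph $\Gamma_{\text{labeled}}$ of the kind covered by the lemma's universal quantifier, namely a pure electroweak gauge boson graph, possibly carrying ghost edges, whose only nontrivial automorphisms arise from the unoriented $W^\pm$-labeled edge pairs as analysed in the proof of the lemma. Since this is guaranteed by \defnref{defn:F_2_P_Z} together with the construction in the proof of \thmref{thm:corolla_polynomial-electroweak_gauge_bosons}, no further work is needed and the corollary follows immediately.
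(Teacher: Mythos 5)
Your proposal is correct and follows exactly the paper's route: the paper's own proof is simply the citation ``\thmref{thm:corolla_polynomial-electroweak_gauge_bosons} and \lemref{lem:symsymiso}'', i.e.\ substituting the ratio \(\operatorname{sym}(\Gamma)/\bigl(\operatorname{sym}(\Gamma_{\text{labeled}})\operatorname{iso}(\Gamma_{\text{labeled}})\bigr)=1\) term by term into \mbox{Equation \eqref{eqn:j}}. Your additional check that the lemma applies uniformly to every summand of the double sum is a sensible (if unstated in the paper) verification and does not change the argument.
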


\begin{proof}
\thmref{thm:corolla_polynomial-electroweak_gauge_bosons} and \lemref{lem:symsymiso}.
\end{proof}

\vspace{\baselineskip}

\begin{exmp}
Consider
\begin{equation}
\Gamma := 
\vcenter{\hbox{\includegraphics[height=2.5cm]{1L_Sc_labeled.eps}}} \, .
\end{equation}
Then, we have (cf. \exref{exmp:1L_sc-PW_PZ}):
\begin{equation}
\begin{split}
	J \left ( \vcenter{\hbox{\includegraphics[height=2.5cm]{1L_Sc_labeled.eps}}}; m_W, m_Z \right ) = & \left ( \vphantom{\phantom{ ( } + e^{- \left ( (A_1 + A_2) m_W^2 + (A_3 + A_4) m_Z^2 \right )} \left ( \underline{g} \cos \theta_W \right )^2} - \underline{g_s}^2 f^{c_3 c_2 c_1} f^{c_4 c_2 c_1} \right . \\
& \left . \phantom{ ( } + e^{- \left ( (A_3 + A_2 + A_4) m_W^2 \right )} \underline{e}^2 \right . \\
& \left . \phantom{ ( } + e^{- \left ( (A_3 + A_2 + A_4) m_W^2 + A_1 m_Z^2 \right )} \left ( \underline{g} \cos \theta_W \right )^2 \right . \\
& \left . \phantom{ ( } + e^{- \left ( (A_3 + A_1 + A_4) m_W^2 \right )} \underline{e}^2 \right . \\
& \left . \phantom{ ( } + e^{- \left ( (A_3 + A_1 + A_4) m_W^2 + A_2 m_Z^2 \right )} \left ( \underline{g} \cos \theta_W \right )^2 \right . \\
& \left . \phantom{ ( } + e^{- \left ( (A_1 + A_2) m_W^2 \right )} \underline{e}^2 \right . \\
& \left . \phantom{ ( } + e^{- \left ( (A_1 + A_2) m_W^2 + A_4 m_Z^2 \right )} \underline{e} \underline{g} \cos \theta_W \right . \\
& \left . \phantom{ ( } + e^{- \left ( (A_1 + A_2) m_W^2 + A_3 m_Z^2 \right )} \underline{e} \underline{g} \cos \theta_W \right . \\
& \left . \phantom{ ( } + e^{- \left ( (A_1 + A_2) m_W^2 + (A_3 + A_4) m_Z^2 \right )} \left ( \underline{g} \cos \theta_W \right )^2 \right )
\end{split}
\end{equation}
\end{exmp}

\subsection{Inclusion of the scalar Bosons of the Electroweak Sector} \label{ssec:corolla_polynomial_scalar_particles}

For the inclusion of the scalar bosons from the electroweak sector of the Standard Model we have to add four scalar particles, bringing new possible tensor structures in their Feynman rules with them. Concretely, we have to add the Higgs boson \(h\) and the Goldstone bosons \(\varphi^\pm\) and \(\varphi_Z\).

\vspace{\baselineskip}

\begin{out}
Again, the idea will be to define two nested sums over a 3-regular scalar QFT Feynman graph in the following way: The first sum creates all possible ways to attach scalar particle labels to edges of \(\Gamma\) that are allowed to become a scalar edge (i.e. edges that will not become a ghost edge) and the second sum will create all possible Feynman graphs with 4-valent scalar vertices by shrinking suitable scalar labeled edges. Furthermore, we define a third sum which creates all particle labelings which are allowed by the Standard Model Feynman rules. The full details are explained in the proof after \thmref{thm:corolla_polynomial-electroweak_scalar_bosons}.
\end{out}

\vspace{\baselineskip}

Therefore, we define:

\vspace{\baselineskip}

\begin{defn} \label{defn:scalarcorolla}
Let \(\Gamma\) be a 3-regular scalar QFT Feynman graph. Then:
\begin{enumerate}
\item Again, as in \defnref{defn:F_2_P_Z}, let \(\wp \left ( \Gamma^{[1]} \right )\) be the power set of all external edges and internal edges of \(\Gamma\) with all adjacent vertices added. In particular, we are interested in the set \(\wp \left (\Gamma^{[1]} \setminus \left ( \bigcup_{k=1}^i C_i^{[1]} \right ) \right )\), the power set of all edges of \(\Gamma\) that will not be turned into ghost edges and are free to possibly become a Higgs or Goldstone edge. We denote the elements of \(\wp \left (\Gamma^{[1]} \setminus \left ( \bigcup_{k=1}^i C_i^{[1]} \right ) \right )\) by \(P_{\text{H/G}}\).
\item Let \(\wp_{(4)} \left (P_{\text{H/G}} \right )\) be the power set of all internal edges of \(\Gamma\) in the set \(P_{\text{H/G}}\) having \(2\) or \(4\) neighbors, that do not share a vertex with an edge which will be turned into a ghost edge (also edges that will be turned into fermions edges are not allowed either, if included). Also adjacent edges are not allowed to be in the same set \(P_{(4)} \in \wp_{(4)} \left ( P_{\text{H/G}} \right )\). The set of sets \(\wp_{(4)} \left ( P_{\text{H/G}} \right )\) consists therefore of all edges in the set \(P_{\text{H/G}}\) that are free to create a valid \(4\)-valent vertex (either 2-scalar-2-gauge boson or 4-scalar boson vertices). Again, add all adjacent vertices in each edge set.
\item Let \(e \in P_{(4)} \left ( P_{\text{H/G}} \right )\). Then we define the set of adjacent half-edges to \(e\) as \(H_{(4)} (e) := \set{h_1, h_2, h_3, h_4}\). In particular, we are interested in the set \(H_{(2)} (e) \subset H_{(4)} (e)\), defined as \[H_{(2)} (e) := \begin{cases} \set{h_1, h_2} & \text{if } h_1, h_2 \notin P_{\text{H/G}} \text{ and } h_3, h_4 \in P_{\text{H/G}} \\ \emptyset & \text{if } h_1, h_2, h_3, h_4 \in P_{\text{H/G}} \end{cases} \] for some arbitrary numbering \(1, \ldots, 4\), since in the case of a 2-scalar-2-gauge boson vertex we have to add a metric tensor with indices depending on the gauge boson edges (where the product below in the modified Corolla polynomial in \defnref{defn:corolla_polynomial-electroweak_scalar_bosons} over the empty set is defined to be 0). For the 4-scalar boson vertex we just receive a coupling constant, cf. part 5. of this definition.
\item Let \(\mathscr L \left ( \Gamma, P_{\text{H/G}}, P_{(4)} \right )\) denote the set of all possible allowed particle type labelings of the graph \(\Gamma\), with scalar particle edges \(P_{\text{H/G}}\) and 4-valent scalar vertices \(P_{(4)}\), compatible with the Standard Model Feynman rules (cf. \cite[Appendix A]{Prinz} and \cite{Romao_Silva}), such that\footnote{We denote here the gauge bosons as well as their corresponding ghosts, by \(\set{W^\pm, Z, A}\) since in the following we are only interested in their pairwise coincident masses.} \[ L(e) \in \left \{ W^\pm, Z, A, h, \varphi^\pm, \varphi_Z \right \} \, , \qquad e \in \Gamma^{[1]} \, . \] In the previous \mbox{Subsection \ref{ssec:corolla_polynomial_gauge_bosons_electroweak}} this was created by the sets of sets \(\mathfrak F_2 \left ( \Gamma \right )\) and \(\mathscr P_Z \left ( \mathfrak f \right )\) for the special cases \(P_{\text{H/G}} = P_{(4)} = \emptyset\), defined in \defnref{defn:F_2_P_Z}.
\item Let \(\operatorname{coupling} \left ( P_{\text{H/G}}, P_{(4)} , L \right )\) be the product over all factors of \(\pm i\) and all coupling constants given by the Feynman rules for \(\Gamma\) with labeling \(L\) (cf. \cite[Appendix A]{Prinz} and \cite{Romao_Silva}). The coupling constants include all 3-valent vertices and 4-scalar boson vertices (the 4-gauge boson vertex coupling constants are created from the corresponding 3-valent ones by the Corolla polynomial via graph homology as usual).
\end{enumerate}
\end{defn}

\vspace{\baselineskip}

We start with examples for 1., 2. and 3. from \defnref{defn:scalarcorolla}:

\vspace{\baselineskip}

\begin{exmp}
Again, we consider the one-loop graph
\begin{subequations}
\begin{equation}
\Gamma := 
\vcenter{\hbox{\includegraphics[height=2.5cm]{1L_Sc_labeled.eps}}}
\end{equation}
and
\begin{equation}
\Gamma^\prime := 
\vcenter{\hbox{\includegraphics[height=2.5cm]{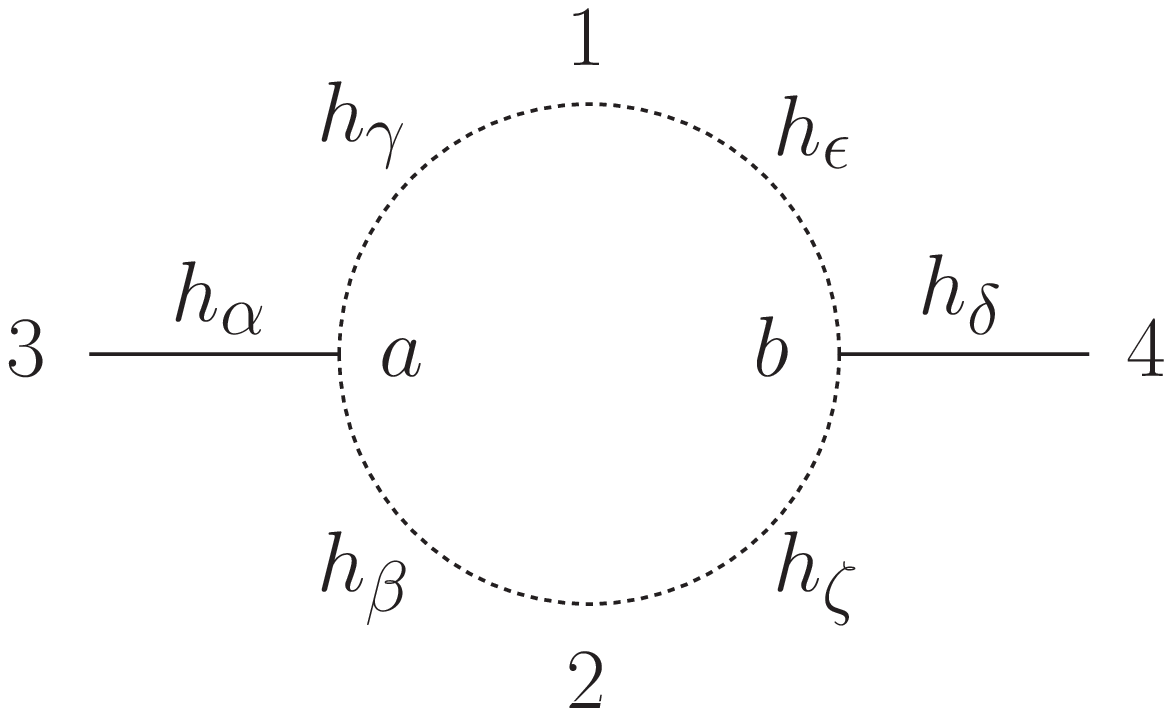}}} \, .
\end{equation}
\end{subequations}
Then, we have:
\begin{subequations}
\begin{equation}
\begin{split}
\wp \left ( \Gamma^{[1]} \right ) = & \left \{ \underbrace{\emptyset}_{P_{\text{H/G}}^{(1)} (\Gamma)}, \underbrace{\set{a,b;1}}_{P_{\text{H/G}}^{(2)} (\Gamma)}, \underbrace{\set{a,b;2}}_{P_{\text{H/G}}^{(3)} (\Gamma)}, \underbrace{\set{a;3}}_{P_{\text{H/G}}^{(4)} (\Gamma)}, \underbrace{\set{b;4}}_{P_{\text{H/G}}^{(5)} (\Gamma)}, \underbrace{\set{a,b;1,2}}_{P_{\text{H/G}}^{(6)} (\Gamma)}, \underbrace{\set{a,b;1,3}}_{P_{\text{H/G}}^{(7)} (\Gamma)}, \right . \\ & \phantom{ \{ } \left . \underbrace{\set{a,b;1,4}}_{P_{\text{H/G}}^{(8)} (\Gamma)}, \underbrace{\set{a,b;2,3}}_{P_{\text{H/G}}^{(9)} (\Gamma)}, \underbrace{\set{a,b;2,4}}_{P_{\text{H/G}}^{(10)} (\Gamma)}, \underbrace{\set{a,b;3,4}}_{P_{\text{H/G}}^{(11)} (\Gamma)}, \underbrace{\set{a,b;1,2,3}}_{P_{\text{H/G}}^{(12)} (\Gamma)}, \right . \\ & \phantom{ \{ } \left . \underbrace{\set{a,b;1,2,4}}_{P_{\text{H/G}}^{(13)} (\Gamma)}, \underbrace{\set{a,b;1,3,4}}_{P_{\text{H/G}}^{(14)} (\Gamma)}, \underbrace{\set{a,b;2,3,4}}_{P_{\text{H/G}}^{(15)} (\Gamma)}, \underbrace{\set{a,b;1,2,3,4}}_{P_{\text{H/G}}^{(16)} (\Gamma)} \right \}
\end{split}
\end{equation}
\begin{equation}
\wp \left ( {\Gamma^\prime}^{[1]} \right ) = \wp \left ( \Gamma^{[1]} \setminus C_1^{[1]} \right ) = \left \{ \underbrace{\emptyset}_{P_{\text{H/G}}^{(1) \prime} (\Gamma^\prime)}, \underbrace{\set{a;3}}_{P_{\text{H/G}}^{(2) \prime} (\Gamma^\prime)}, \underbrace{\set{b;4}}_{P_{\text{H/G}}^{(3) \prime} (\Gamma^\prime)}, \underbrace{\set{a,b;3,4}}_{P_{\text{H/G}}^{(4) \prime} (\Gamma^\prime)} \right \}
\end{equation}
\end{subequations}
and
\begin{subequations}
{\allowdisplaybreaks
\begin{align}
\wp_{(4)} \left ( P_{\text{H/G}}^{(l)} \right ) = & \left \{ \underbrace{\emptyset}_{P_{(4)}^{(1)} \left (P_{\text{H/G}}^{(l)} \right )} \right \} \, , \qquad 1 \leq l \leq 13 \\
\wp_{(4)} \left ( P_{\text{H/G}}^{(14)} \right ) = & \left \{ \underbrace{\emptyset}_{P_{(4)}^{(1)} \left (P_{\text{H/G}}^{(14)} \right )}, \underbrace{\set{a,b;1}}_{P_{(4)}^{(2)} \left (P_{\text{H/G}}^{(14)} \right )} \right \} \\
\wp_{(4)} \left ( P_{\text{H/G}}^{(15)} \right ) = & \left \{ \underbrace{\emptyset}_{P_{(4)}^{(1)} \left (P_{\text{H/G}}^{(15)} \right )}, \underbrace{\set{a,b;2}}_{P_{(4)}^{(2)} \left (P_{\text{H/G}}^{(15)} \right )} \right \} \\
\wp_{(4)} \left ( P_{\text{H/G}}^{(16)} \right ) = & \left \{ \underbrace{\emptyset}_{P_{(4)}^{(1)} \left (P_{\text{H/G}}^{(16)} \right )}, \underbrace{\set{a,b;1}}_{P_{(4)}^{(2)} \left (P_{\text{H/G}}^{(16)} \right )}, \underbrace{\set{a,b;2}}_{P_{(4)}^{(3)} \left (P_{\text{H/G}}^{(16)} \right )} \right \} \\
\wp_{(4)} \left ( P_{\text{H/G}}^{(l) \prime} \right ) = & \left \{ \underbrace{\emptyset}_{P_{(4)}^{(1)} \left (P_{\text{H/G}}^{(l) \prime} \right )} \right \} \, , \qquad 1 \leq l \leq 4
\end{align}
}
\end{subequations}

And for our sets \(H_{(2)} (e)\), \(e \in P_{(4)} \left ( P_{\text{H/G}} \right )\), we have (we just list the non-empty sets):
\begin{subequations}
{\allowdisplaybreaks
\begin{align}
1 \in P_{(4)}^{(1)} \left (P_{\text{H/G}}^{(6)} \right ) : \qquad H_{(2)} (1) = & \set{h_{\alpha} , h_{\delta}} \\
2 \in P_{(4)}^{(1)} \left (P_{\text{H/G}}^{(6)} \right ) : \qquad H_{(2)} (2) = & \set{h_{\alpha} , h_{\delta}} \\
1 \in P_{(4)}^{(1)} \left (P_{\text{H/G}}^{(14)} \right ) : \qquad H_{(2)} (1) = & \set{h_{\beta} , h_{\zeta}} \\
2 \in P_{(4)}^{(1)} \left (P_{\text{H/G}}^{(15)} \right ) : \qquad H_{(2)} (2) = & \set{h_{\gamma} , h_{\epsilon}}
\end{align}
}
\end{subequations}

These create, after applying the Corolla differential for the electroweak sector of the Standard Model (cf. \defnref{defn:corolla_polynomial-electroweak_scalar_bosons} and \thmref{thm:corolla_polynomial-electroweak_scalar_bosons}) to the scalar parametric integrand \(I (\Gamma) \), the tensor structure contributions of the following gauge theory graphs (which is indicated by the arrow ``\(\rightsquigarrow\)''):
\begin{subequations}
{\allowdisplaybreaks
\begin{align}
P_{(4)}^{(1)} \left (P_{\text{H/G}}^{(1)} \right ) \rightsquigarrow & \vcenter{\hbox{\includegraphics[height=1.5cm]{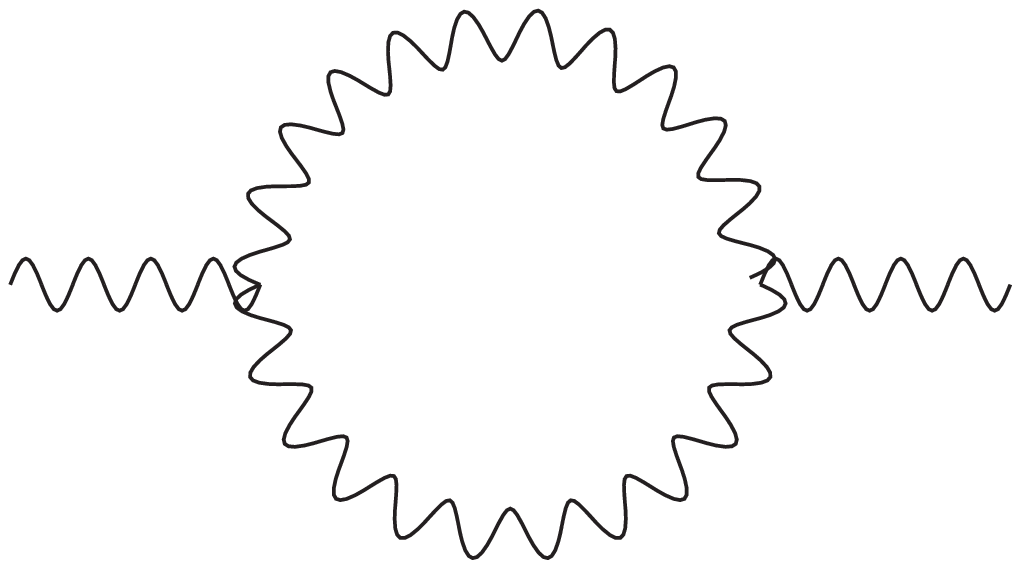}}} \\
P_{(4)}^{(1)} \left (P_{\text{H/G}}^{(2)} \right ) \rightsquigarrow & \vcenter{\hbox{\includegraphics[height=1.5cm]{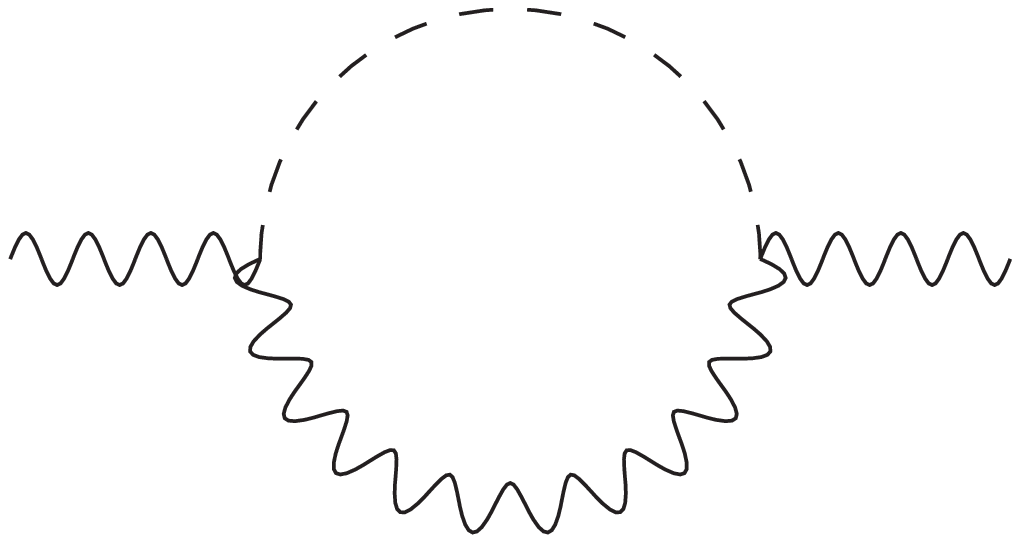}}} \\
P_{(4)}^{(1)} \left (P_{\text{H/G}}^{(3)} \right ) \rightsquigarrow & \vcenter{\hbox{\includegraphics[height=1.5cm]{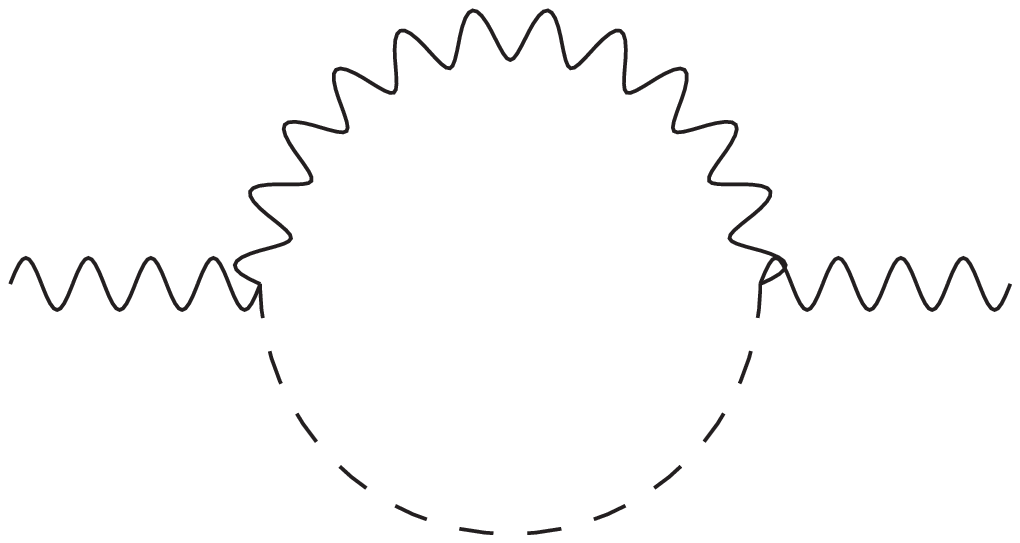}}} \\
P_{(4)}^{(1)} \left (P_{\text{H/G}}^{(4)} \right ) \rightsquigarrow & \vcenter{\hbox{\includegraphics[height=1.5cm]{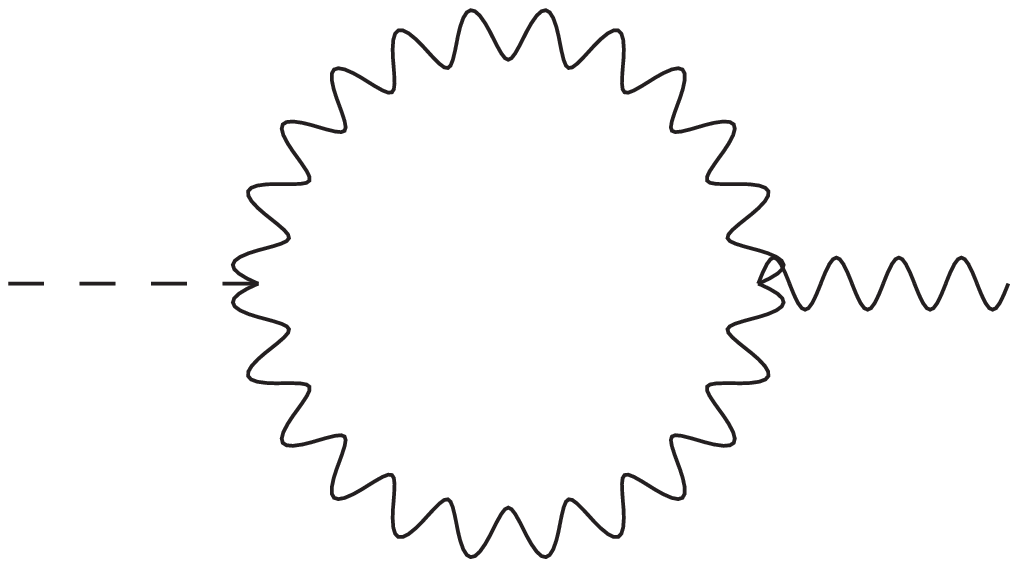}}} \\
P_{(4)}^{(1)} \left (P_{\text{H/G}}^{(5)} \right ) \rightsquigarrow & \vcenter{\hbox{\includegraphics[height=1.5cm]{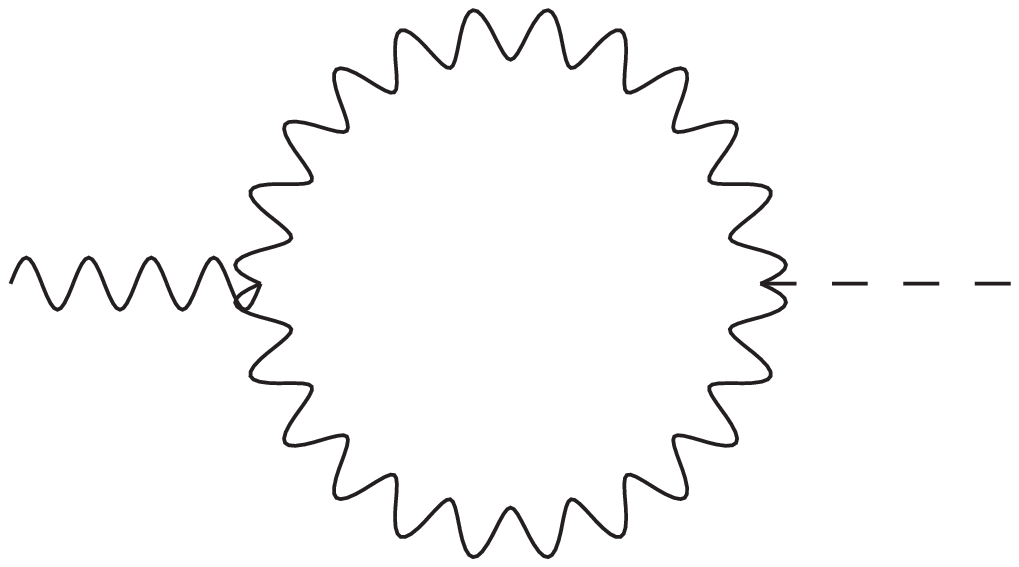}}} \\
P_{(4)}^{(1)} \left (P_{\text{H/G}}^{(6)} \right ) \rightsquigarrow & \vcenter{\hbox{\includegraphics[height=1.5cm]{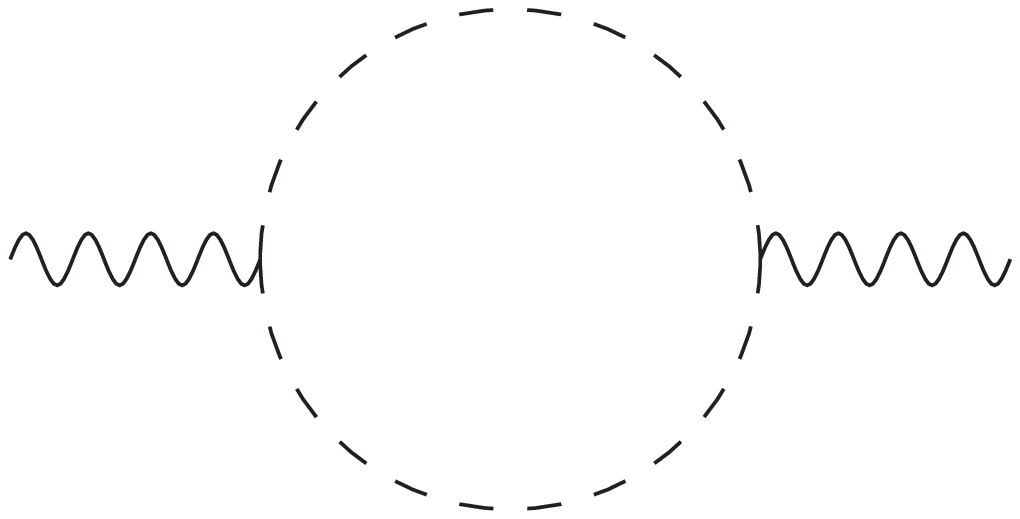}}} \\
P_{(4)}^{(1)} \left (P_{\text{H/G}}^{(7)} \right ) \rightsquigarrow & \vcenter{\hbox{\includegraphics[height=1.5cm]{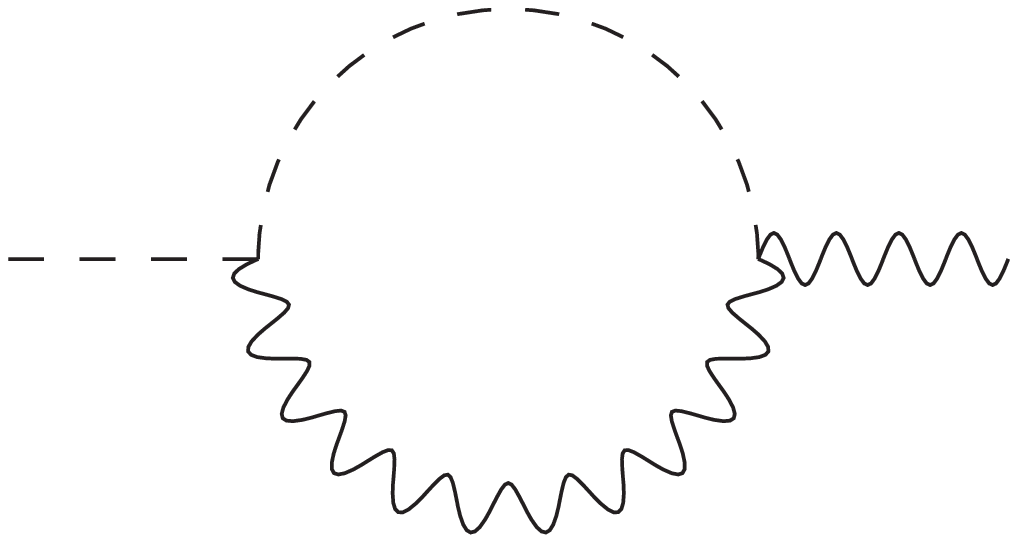}}} \\
P_{(4)}^{(1)} \left (P_{\text{H/G}}^{(8)} \right ) \rightsquigarrow & \vcenter{\hbox{\includegraphics[height=1.5cm]{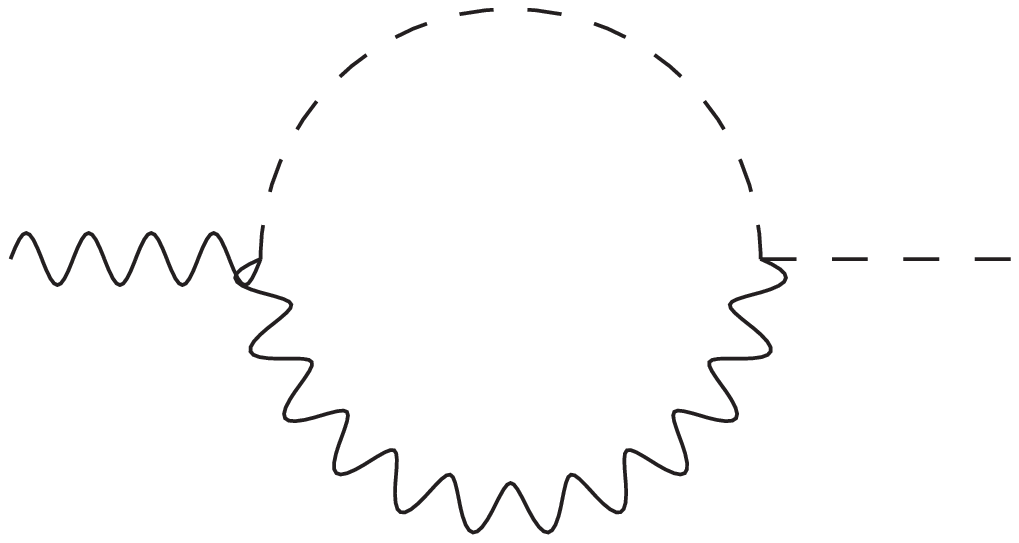}}} \\
P_{(4)}^{(1)} \left (P_{\text{H/G}}^{(9)} \right ) \rightsquigarrow & \vcenter{\hbox{\includegraphics[height=1.5cm]{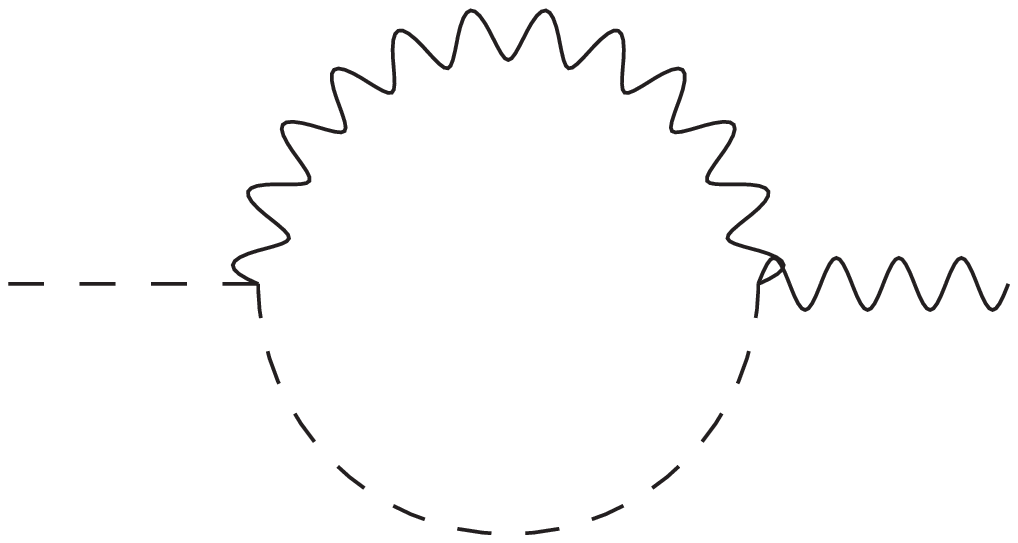}}} \\
P_{(4)}^{(1)} \left (P_{\text{H/G}}^{(10)} \right ) \rightsquigarrow & \vcenter{\hbox{\includegraphics[height=1.5cm]{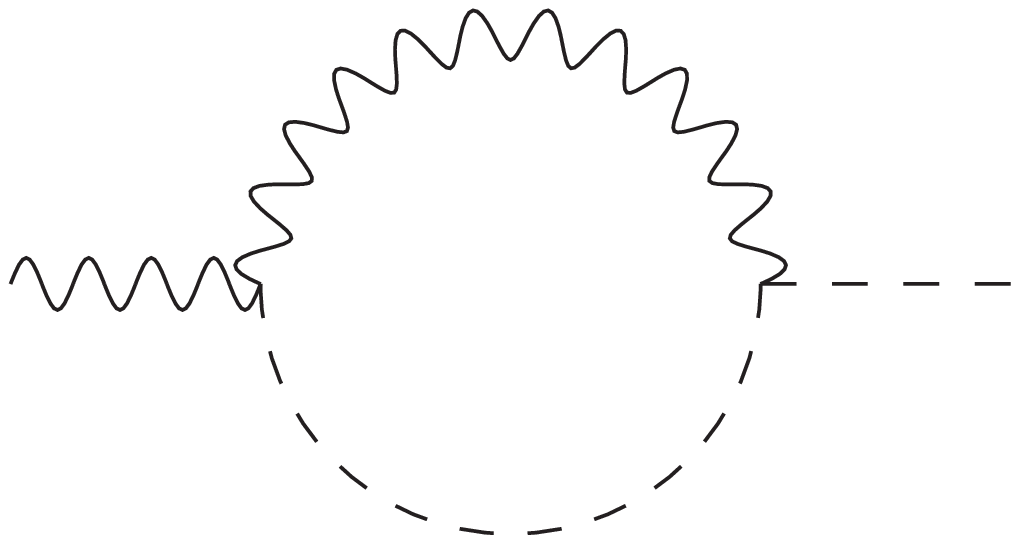}}} \\
P_{(4)}^{(1)} \left (P_{\text{H/G}}^{(11)} \right ) \rightsquigarrow & \vcenter{\hbox{\includegraphics[height=1.5cm]{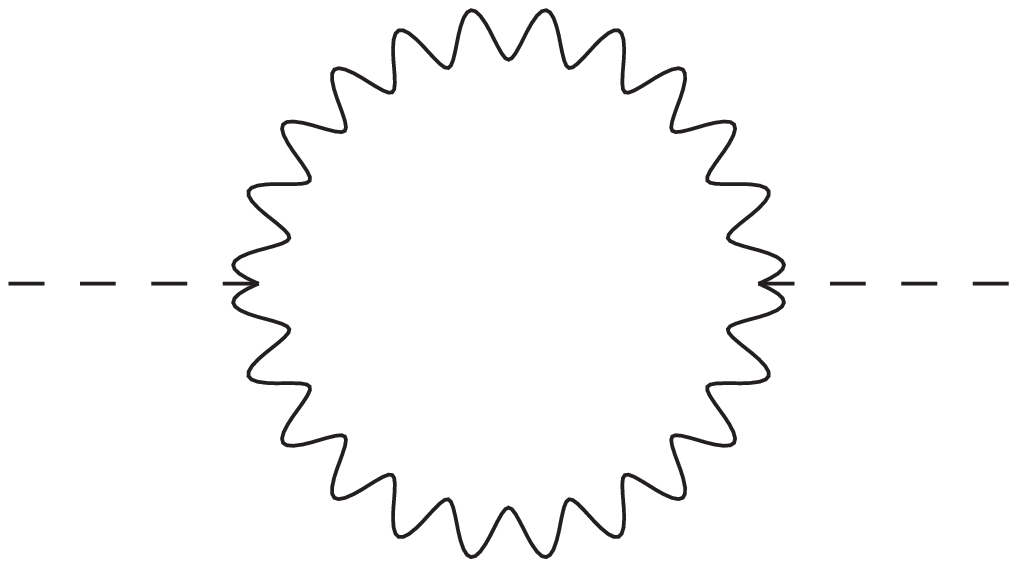}}} \\
P_{(4)}^{(1)} \left (P_{\text{H/G}}^{(12)} \right ) \rightsquigarrow & \vcenter{\hbox{\includegraphics[height=1.5cm]{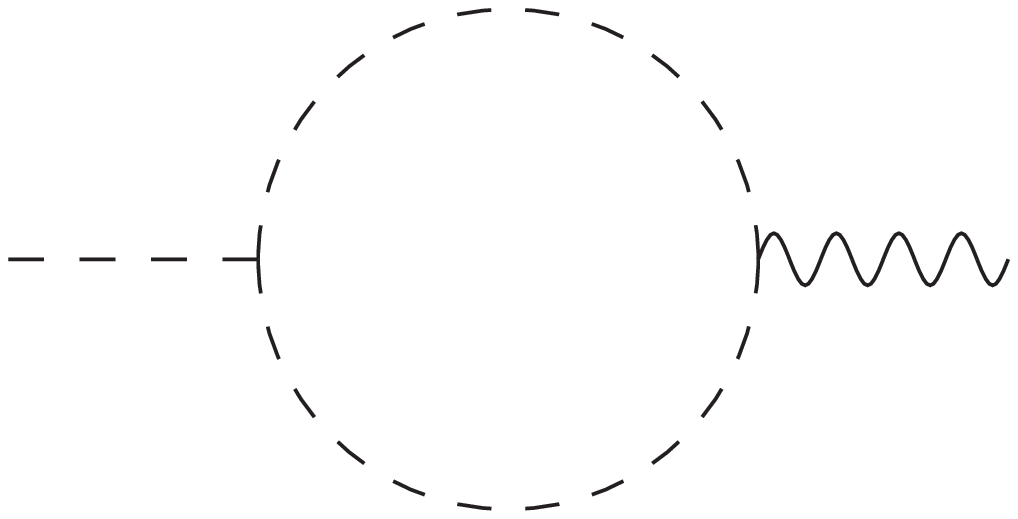}}} \\
P_{(4)}^{(1)} \left (P_{\text{H/G}}^{(13)} \right ) \rightsquigarrow & \vcenter{\hbox{\includegraphics[height=1.5cm]{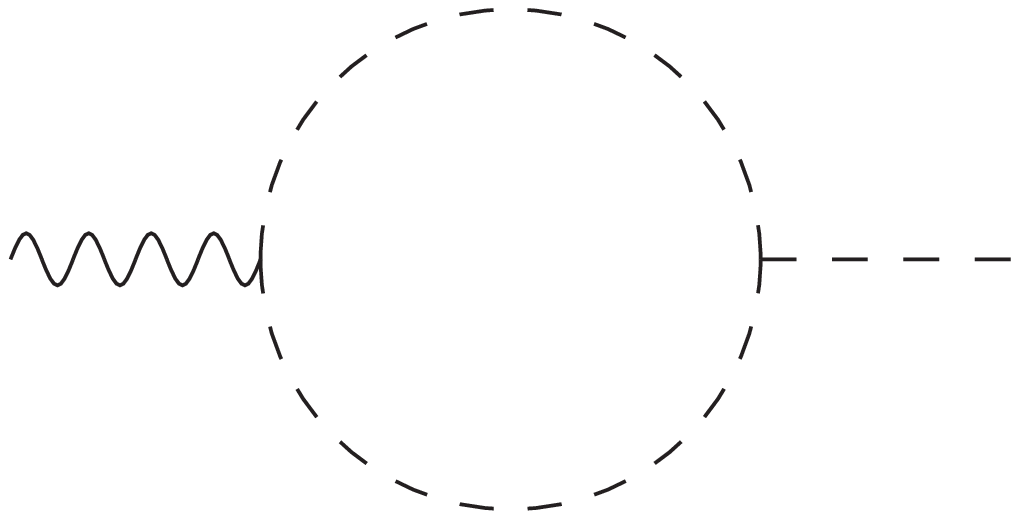}}} \\
P_{(4)}^{(1)} \left (P_{\text{H/G}}^{(14)} \right ) \rightsquigarrow & \vcenter{\hbox{\includegraphics[height=1.5cm]{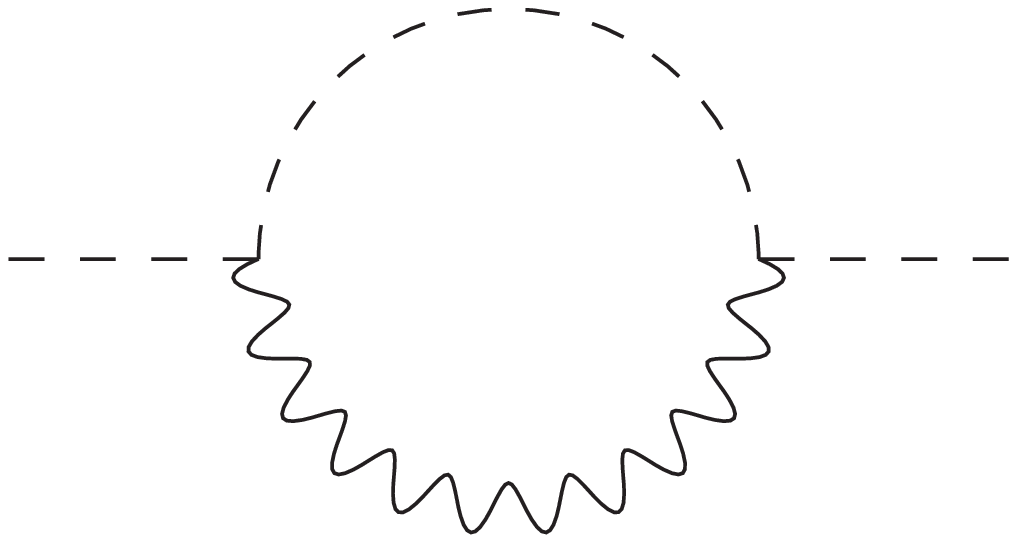}}} \\
P_{(4)}^{(2)} \left (P_{\text{H/G}}^{(14)} \right ) \rightsquigarrow & \vcenter{\hbox{\includegraphics[height=1.5cm]{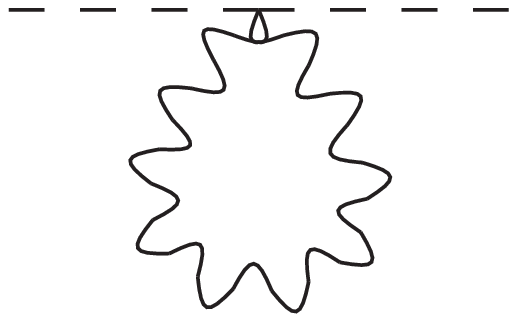}}} \\
P_{(4)}^{(1)} \left (P_{\text{H/G}}^{(15)} \right ) \rightsquigarrow & \vcenter{\hbox{\includegraphics[height=1.5cm]{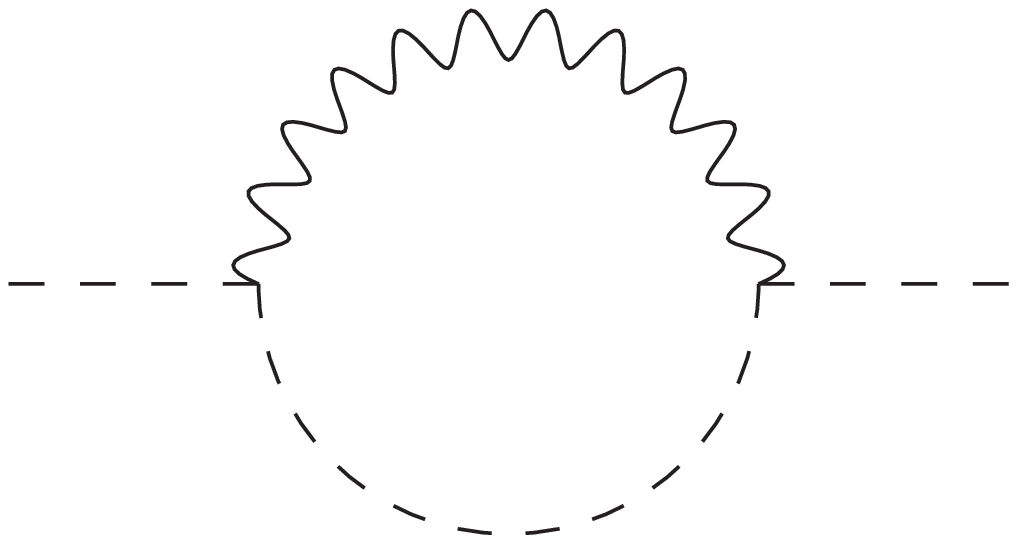}}} \\
P_{(4)}^{(2)} \left (P_{\text{H/G}}^{(15)} \right ) \rightsquigarrow & \vcenter{\hbox{\includegraphics[height=1.5cm]{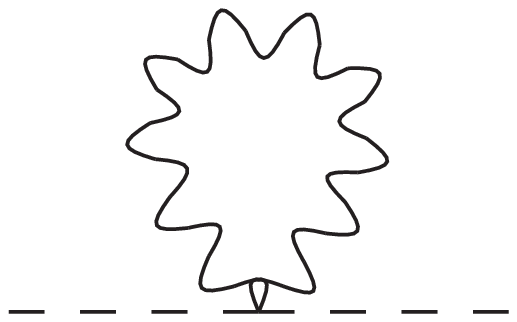}}} \\
P_{(4)}^{(1)} \left (P_{\text{H/G}}^{(16)} \right ) \rightsquigarrow & \vcenter{\hbox{\includegraphics[height=1.5cm]{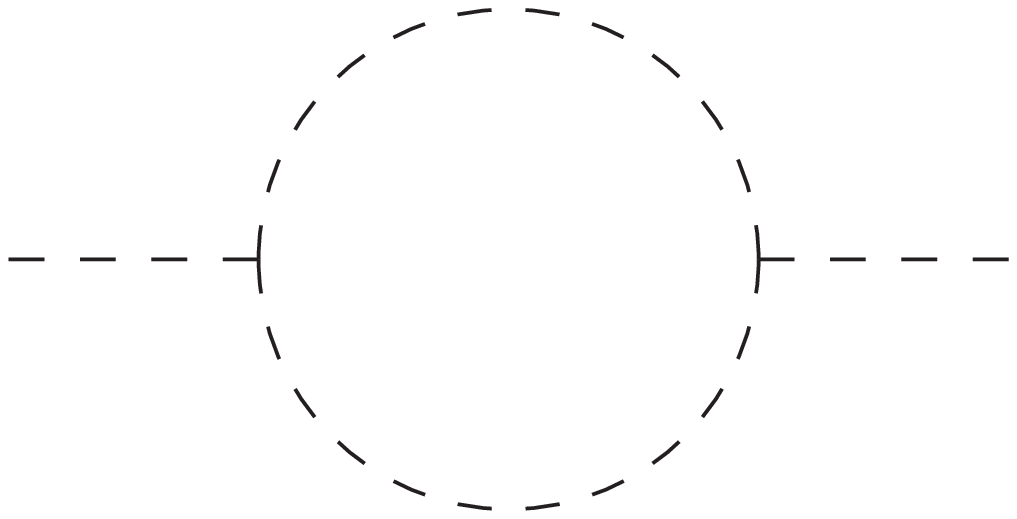}}} \\
P_{(4)}^{(2)} \left (P_{\text{H/G}}^{(16)} \right ) \rightsquigarrow & \vcenter{\hbox{\includegraphics[height=1.5cm]{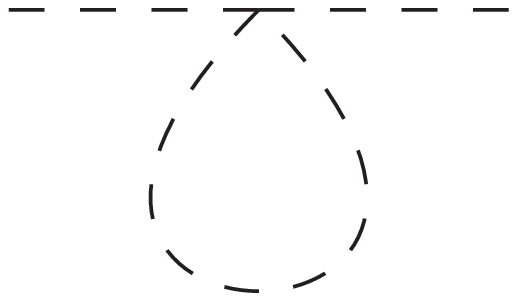}}} \\
P_{(4)}^{(3)} \left (P_{\text{H/G}}^{(16)} \right ) \rightsquigarrow & \vcenter{\hbox{\includegraphics[height=1.5cm]{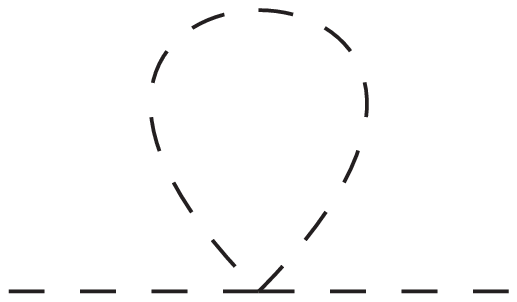}}} \\
P_{(4)}^{(1)} \left (P_{\text{H/G}}^{(1) \prime} \right ) \rightsquigarrow & \vcenter{\hbox{\includegraphics[height=1.5cm]{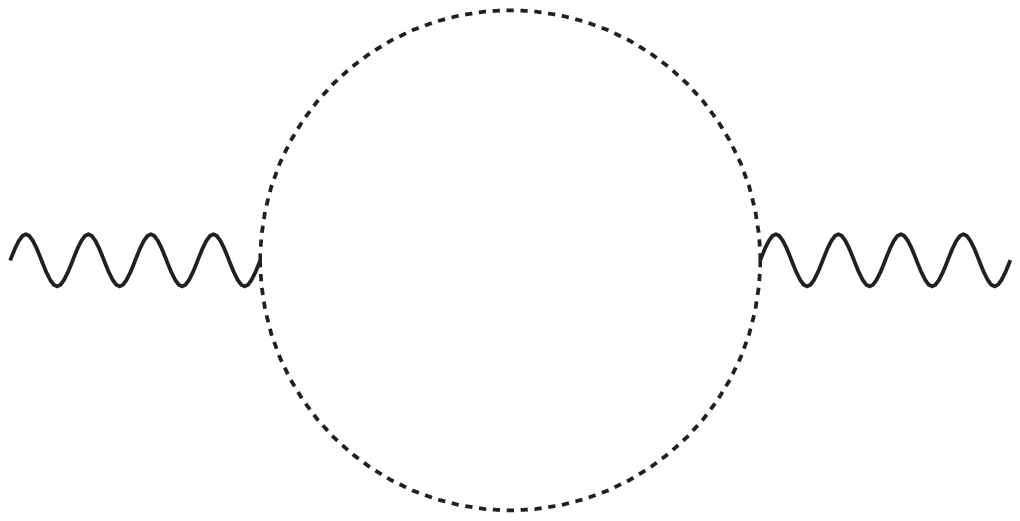}}} \\
P_{(4)}^{(1)} \left (P_{\text{H/G}}^{(2) \prime} \right ) \rightsquigarrow & \vcenter{\hbox{\includegraphics[height=1.5cm]{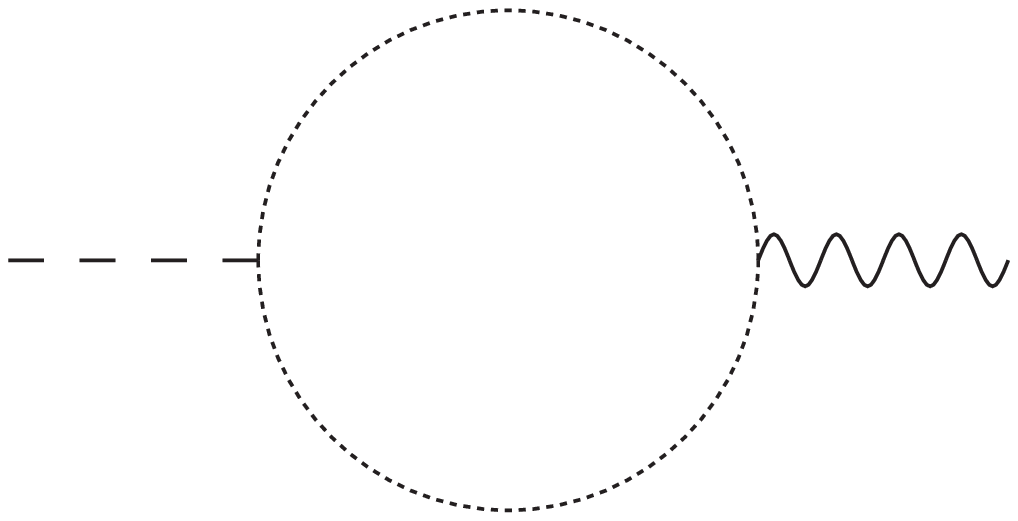}}} \\
P_{(4)}^{(1)} \left (P_{\text{H/G}}^{(3) \prime} \right ) \rightsquigarrow & \vcenter{\hbox{\includegraphics[height=1.5cm]{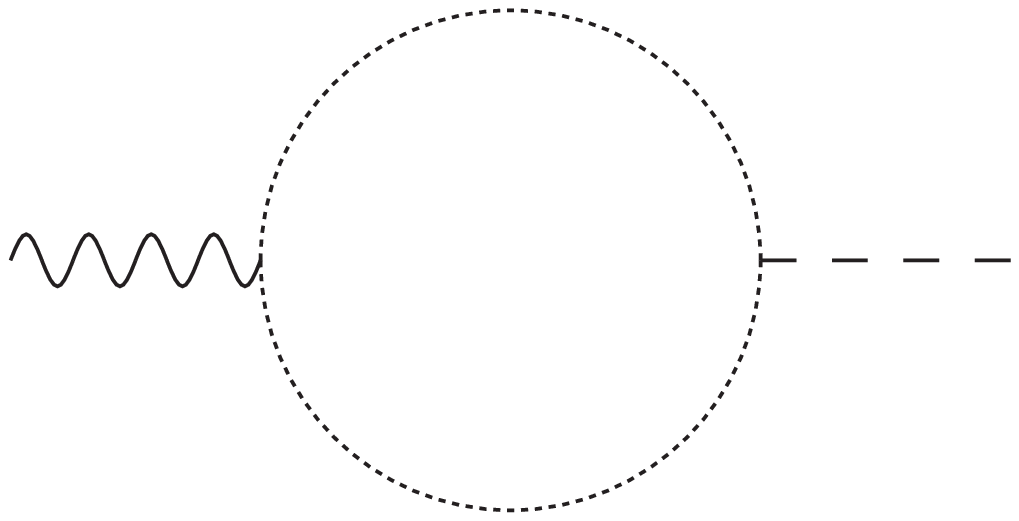}}} \\
P_{(4)}^{(1)} \left (P_{\text{H/G}}^{(4) \prime} \right ) \rightsquigarrow & \vcenter{\hbox{\includegraphics[height=1.5cm]{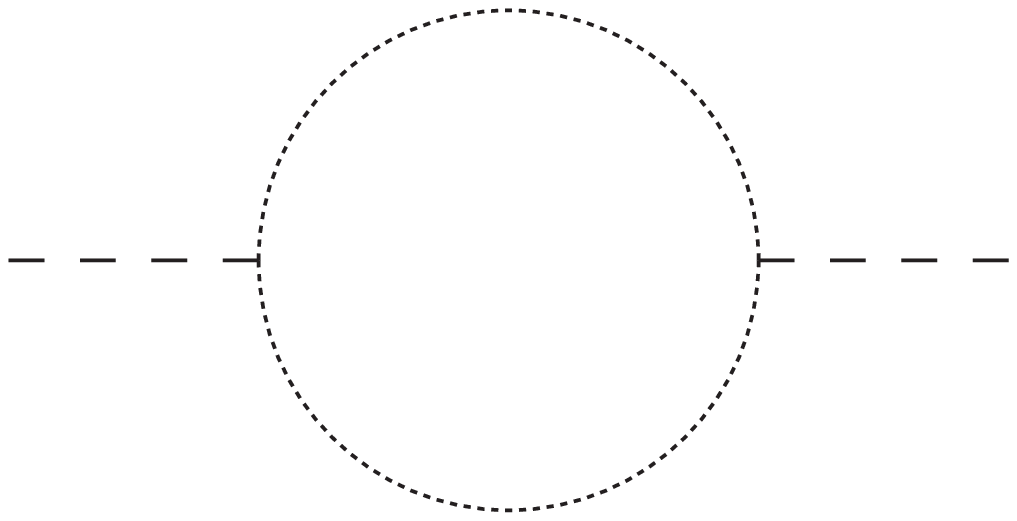}}}
\end{align}
}
\end{subequations}
\end{exmp}

\vspace{\baselineskip}

Furthermore, we have the following examples for 4. and 5. of \defnref{defn:scalarcorolla}:

\vspace{\baselineskip}

\begin{exmp}
We have:
\begin{equation}
\begin{split}
	\mathscr L \left ( \vcenter{\hbox{\includegraphics[height=2cm]{1L_Sc.eps}}} , \set{1} , \emptyset \right ) = & \left \{
\vcenter{\hbox{\includegraphics[height=2cm]{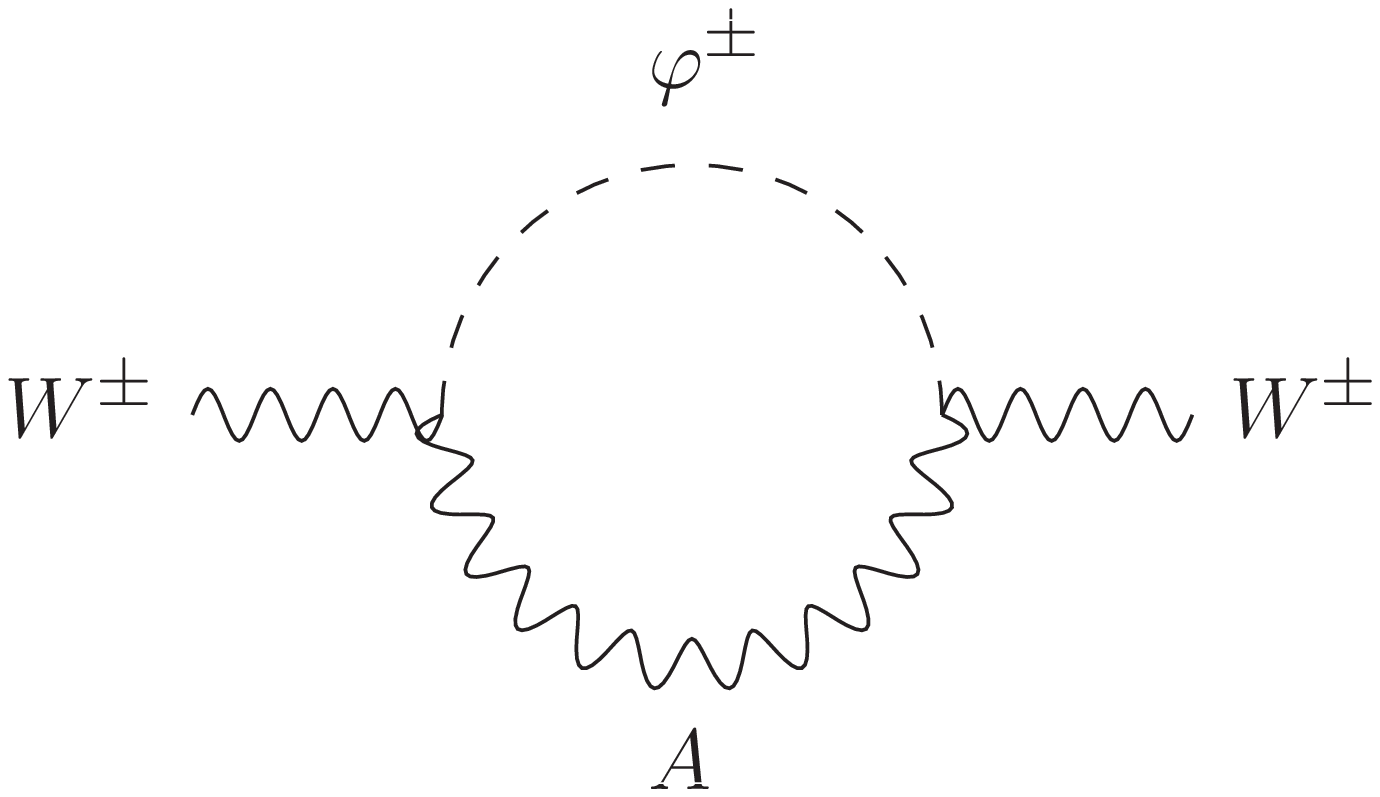}}} ,
\vcenter{\hbox{\includegraphics[height=2cm]{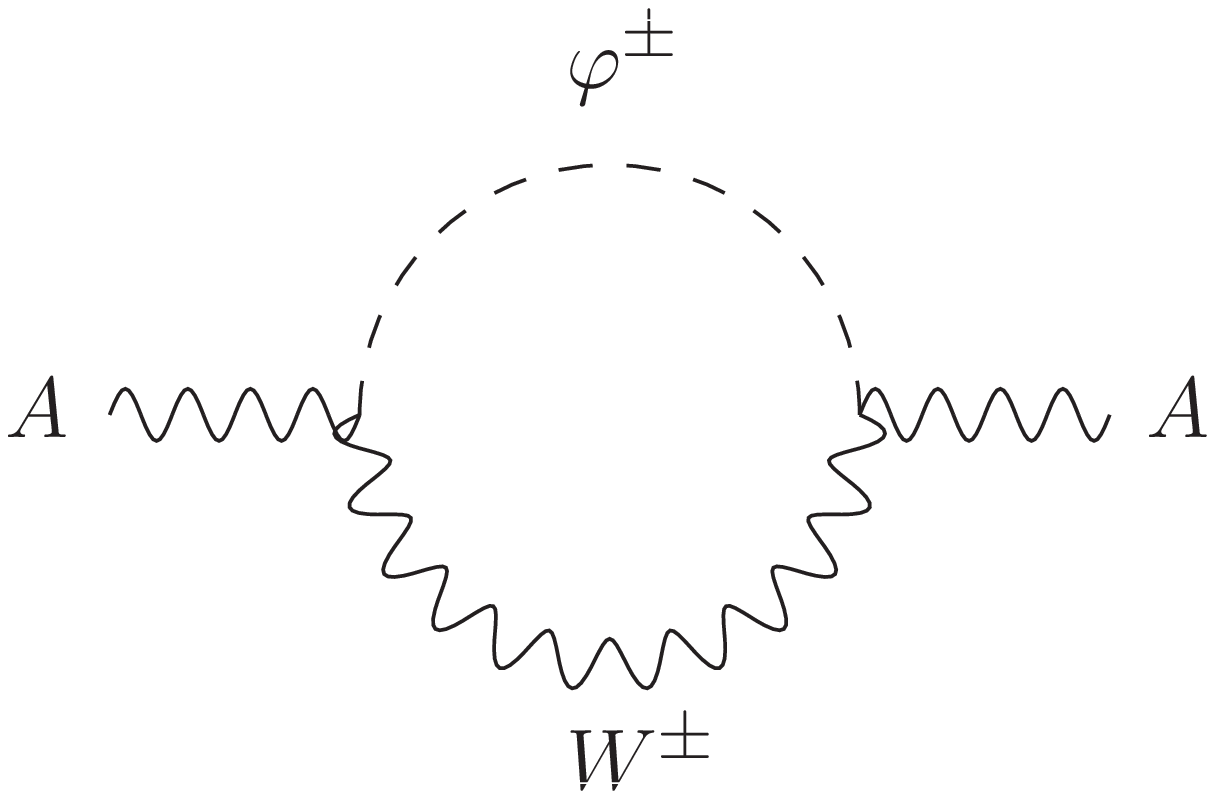}}} , \right . \\
& \left . \vcenter{\hbox{\includegraphics[height=2cm]{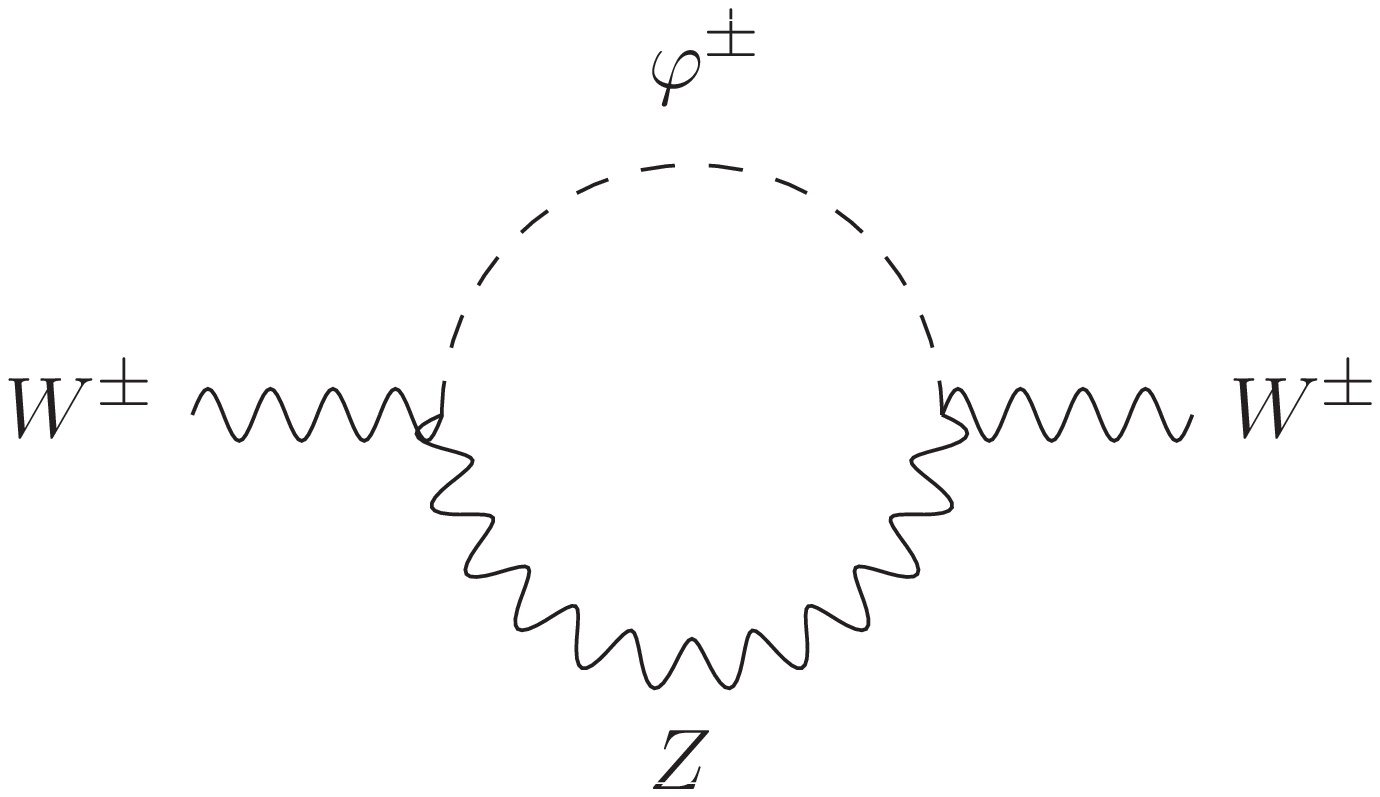}}} ,
\vcenter{\hbox{\includegraphics[height=2cm]{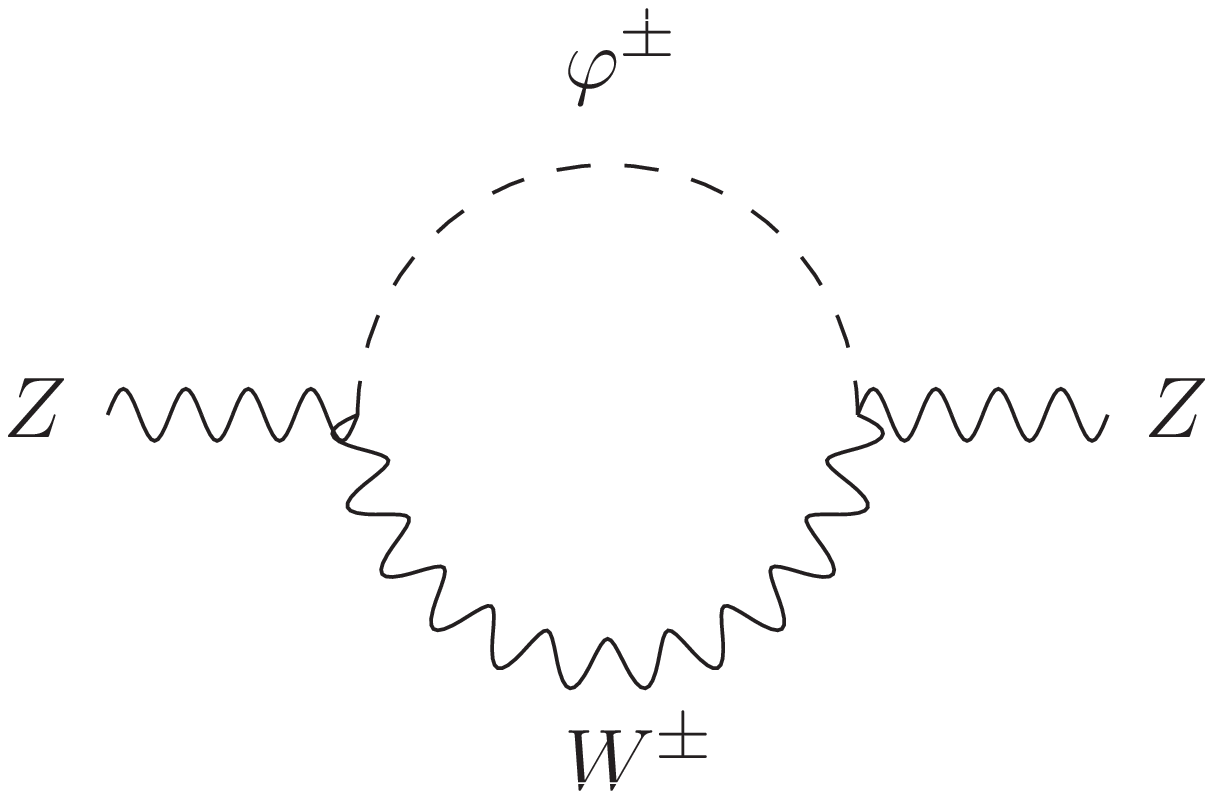}}} , \right . \\
& \left . \vcenter{\hbox{\includegraphics[height=2cm]{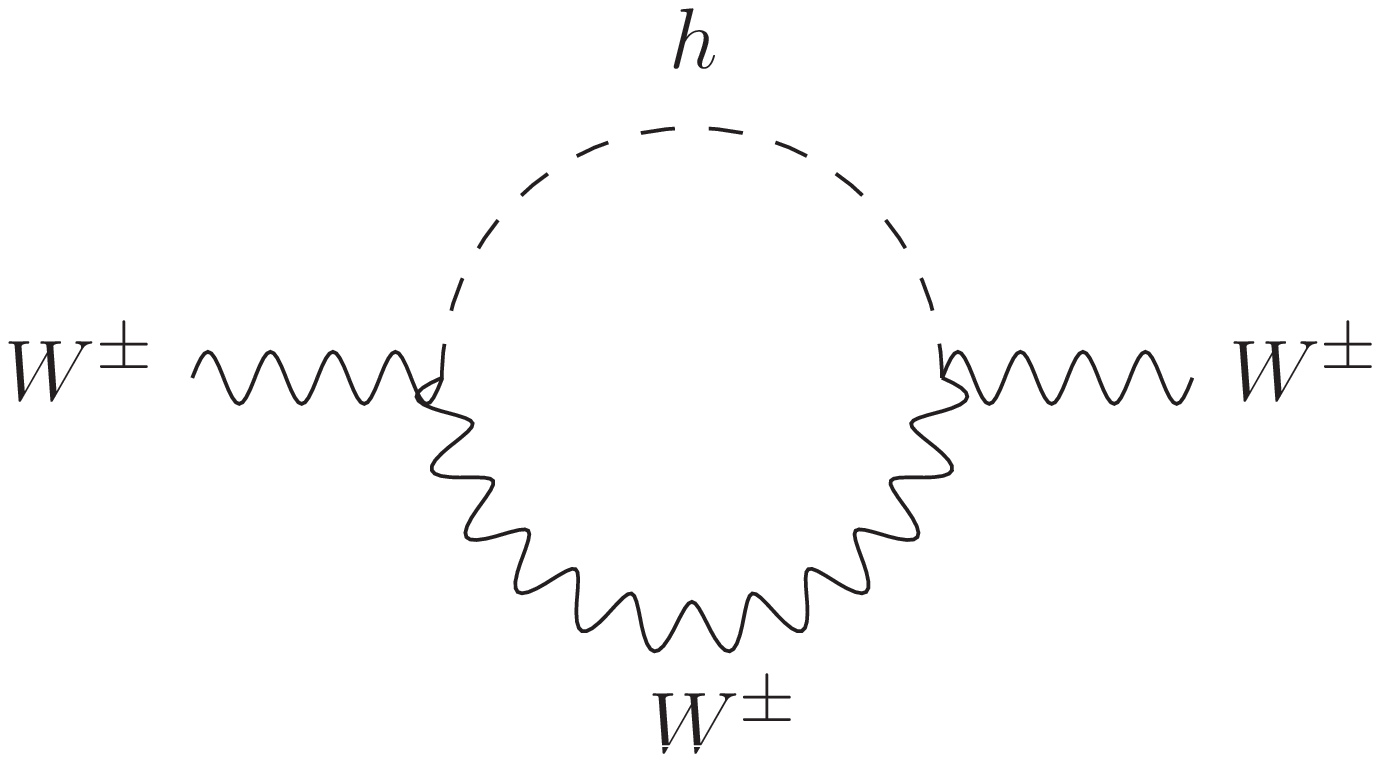}}} ,
\vcenter{\hbox{\includegraphics[height=2cm]{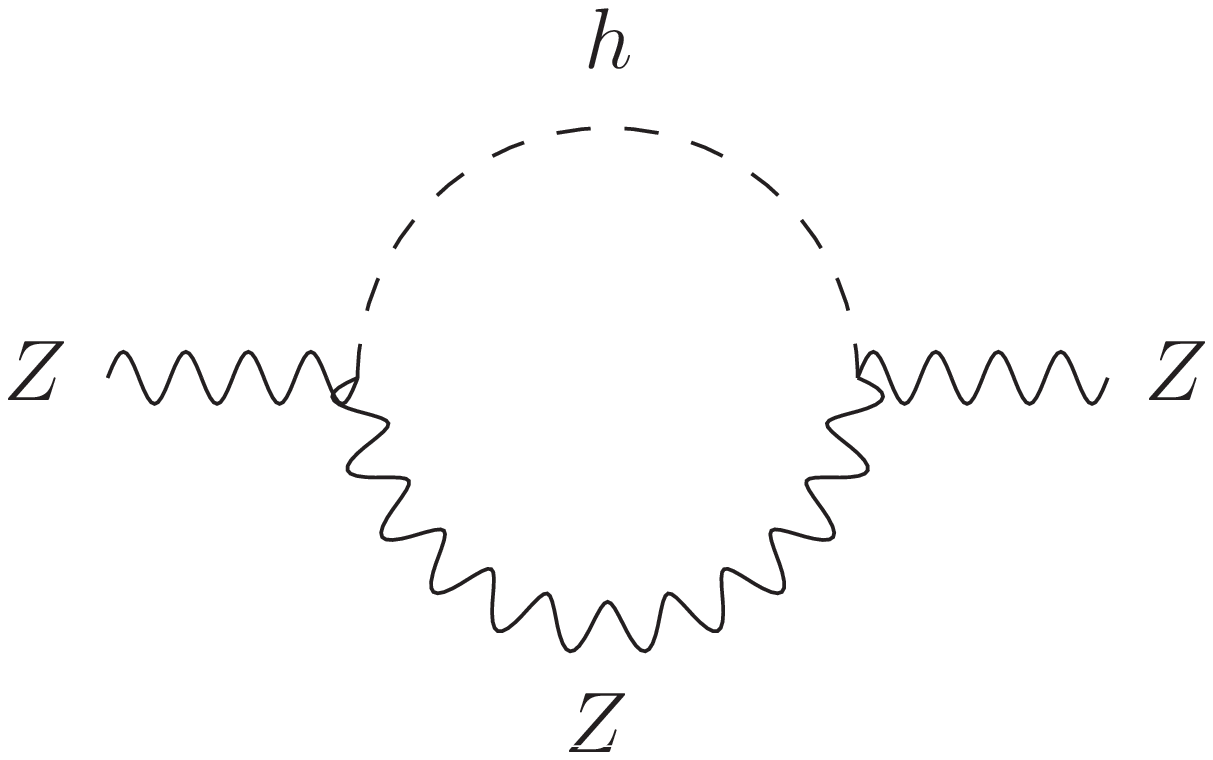}}}
\right \}
\end{split}
\end{equation}
and:
\begin{subequations}
{\allowdisplaybreaks
\begin{align}
\operatorname{coupling} \left ( \set{1} , \emptyset , \vcenter{\hbox{\includegraphics[height=2cm]{1L_Sc_labeled_1.eps}}} \right ) = & \underline{e}^2 m_W^2 \\
\operatorname{coupling} \left ( \set{1} , \emptyset , \vcenter{\hbox{\includegraphics[height=2cm]{1L_Sc_labeled_2.eps}}} \right ) = & \underline{e}^2 m_W^2 \\
\operatorname{coupling} \left ( \set{1} , \emptyset , \vcenter{\hbox{\includegraphics[height=2cm]{1L_Sc_labeled_3.eps}}} \right ) = & \underline{g}^2 m_Z^2 \sin^4 \theta_W \\
\operatorname{coupling} \left ( \set{1} , \emptyset , \vcenter{\hbox{\includegraphics[height=2cm]{1L_Sc_labeled_4.eps}}} \right ) = & \underline{g}^2 m_Z^2 \sin^4 \theta_W \\
\operatorname{coupling} \left ( \set{1} , \emptyset , \vcenter{\hbox{\includegraphics[height=2cm]{1L_Sc_labeled_5.eps}}} \right ) = & \underline{g}^2 m_W^2 \\
\operatorname{coupling} \left ( \set{1} , \emptyset , \vcenter{\hbox{\includegraphics[height=2cm]{1L_Sc_labeled_6.eps}}} \right ) = & \frac{\underline{g}^2}{\cos^2 \theta_W} m_Z^2
\end{align}
}
\end{subequations}
\end{exmp}

\vspace{\baselineskip}

Then we can define the Corolla polynomial for the gauge bosons and the scalar particles of the electroweak sector of the Standard Model as follows:

\vspace{\baselineskip}

\begin{defn}[Corolla polynomial for the electroweak sector of the Standard Model (omitting fermions)] \label{defn:corolla_polynomial-electroweak_scalar_bosons}
Let \(\Gamma\) be a 3-regular scalar QFT Feynman graph. Then we define the various summands of the Corolla polynomial for the electroweak sector of the Standard Model (omitting fermions) by
\begin{subequations}
\begin{equation}
\begin{split}
	\mathcal C ^0 _{\text{EW}} (\Gamma) := & \sum_{P_{\text{H/G}} \in \wp \left ( \Gamma^{[1]} \right )} \sum_{P_{(4)} \in \wp_{(4)} \left (P_{\text{H/G}} \right )} \sum_{L \in \mathscr L \left ( \Gamma, P_{\text{H/G}}, P_{(4)} \right )} \\ & \left [ \left ( \frac{\operatorname{sym} \left ( \Gamma \right )}{\operatorname{sym} (L) \operatorname{iso} (L)} \right ) \operatorname{coupling} \left ( P_{\text{H/G}}, P_{(4)} , L \right ) \right . \\
& \phantom{ [ } \left . e^{\left ( \sum_{e \in P_{(4)}^{[1]}} A_e \xi_e^{\prime 2} - \sum_{e \in \Gamma^{[1]} \setminus P_{(4)}^{[1]}} A_e m^2_{L(e)} \right )} \right . \\
& \phantom{ [ } \left . \left ( \prod_{\substack{v \in \Gamma^{[0]},\\v \notin P_{\text{H/G}}^{[0]}}} \mathscr V_v \right ) \left ( \prod_{\substack{h \in P_{\text{H/G}}^{[1/2]} \setminus P_{(4)}^{[1/2]}, \\ h_+, h_- \notin P_{\text{H/G}}^{[1/2]} \setminus P_{(4)}^{[1/2]}}} b_h \right ) \right . \\ & \phantom{ [ } \left . \left ( \prod_{\substack{h_+, h_- \in P_{\text{H/G}}^{[1/2]} \setminus P_{(4)}^{[1/2]} , \\ h \notin P_{\text{H/G}}^{[1/2]} \setminus P_{(4)}^{[1/2]}}} \left ( a_{h_+} + a_{h_-} \right ) \right ) \left ( \prod_{\substack{e \in P_{(4)}^{[1]} , \\ h_1, h_2 \in H_{(2)} (e)}} b_{h_1} b_{h_2} \right ) \right ]\, ,
\end{split}
\end{equation}
\begin{equation}
\begin{split}
	\mathcal C ^i _{\text{EW}} (\Gamma) := & \sum_{\substack{C_1, C_2, \ldots, C_i \in \mathscr{C}_\Gamma,\\C_j \text{ pairwise disjoint}}} \sum_{P_{\text{H/G}} \in \wp \left ( \Gamma^{[1]} \setminus \bigcup_{k=1}^i C_k^{[1]} \right )} \sum_{P_{(4)} \in \wp_{(4)} \left (P_{\text{H/G}} \right )} \sum_{L \in \mathscr L \left ( \Gamma, P_{\text{H/G}}, P_{(4)} \right )} \\ & \left [ \left ( \frac{\operatorname{sym} \left ( \Gamma \right )}{\operatorname{sym} (L) \operatorname{iso} (L)} \right ) \operatorname{coupling} \left ( P_{\text{H/G}}, P_{(4)} , L \right ) \right . \\
& \phantom{ [ } \left . e^{\left ( \sum_{e \in P_{(4)}^{[1]}} A_e \xi_e^{\prime 2} - \sum_{e \in \Gamma^{[1]} \setminus P_{(4)}^{[1]}} A_e m^2_{L(e)} \right )} \right . \\ & \phantom{ [ } \left . \left ( \prod_{j = 1}^i \mathscr G_{C_j} \right ) \left ( \prod_{\substack{v \in \Gamma^{[0]},\\v \notin \bigcup_{k=1}^i C_k^{[0]} \cup P_{\text{H/G}}^{[0]}}} \mathscr V_v \right ) \left ( \prod_{\substack{h \in P_{\text{H/G}}^{[1/2]} \setminus P_{(4)}^{[1/2]}, \\ h_+, h_- \notin P_{\text{H/G}}^{[1/2]} \setminus P_{(4)}^{[1/2]}}} b_h \right ) \right . \\ & \phantom{ [ } \left . \left ( \prod_{\substack{h_+, h_- \in P_{\text{H/G}}^{[1/2]} \setminus P_{(4)}^{[1/2]} , \\ h \notin P_{\text{H/G}}^{[1/2]} \setminus P_{(4)}^{[1/2]}}} \left ( a_{h_+} + a_{h_-} \right ) \right ) \left ( \prod_{\substack{e \in P_{(4)}^{[1]} , \\ h_1, h_2 \in H_{(2)} (e)}} b_{h_1} b_{h_2} \right ) \right ]
\end{split}
\end{equation}
and the Corolla polynomial by
\begin{equation}
	\mathcal C _{\text{EW}} (\Gamma) := \sum_{i=0}^\infty (-1)^i \mathcal C^i _{\text{EW}} (\Gamma) \, .
\end{equation}
\end{subequations}
\end{defn}

\vspace{\baselineskip}

\begin{thm}[\cite{Prinz}] \label{thm:corolla_polynomial-electroweak_scalar_bosons}
Let \(\Gamma\) be a connected\footnote{Again, we emphasize that we really mean 1-connected graphs, not only 2-connected (as they are called in mathematics) or 1 PI (as they are called in physics) graphs.} 3-regular scalar QFT Feynman graph. Then, the parametric integrand of all Feynman graphs from quantum chromodynamics and the electroweak sector of the Standard Model which are related to \(\Gamma\) via graph and cycle homology including all possible particle type labelings allowed by the Standard Model Feynman rules (omitting fermions) \(\tilde{I}_{\mathcal F} (\Gamma)\) can be obtained via acting with the sum of the two Corolla differentials for quantum chromodynamics and the electroweak sector of the Standard Model \(\left ( \mathcal D_{\text{\emph{QCD}}} (\Gamma) + \mathcal D_{\text{\emph{EW}}} (\Gamma) \right )\) on the corresponding parametric integrand \(I (\Gamma)\) of \(\Gamma\)\footnote{Recall \(\mathcal D_{\text{QCD}} (\Gamma) :=  i^{\left \vert \Gamma^{[1]} \right \vert} \underline{g_s}^{\left \vert \Gamma^{[0]} \right \vert} \operatorname{color} (\Gamma) \mathcal D (\Gamma)\) from \defnref{defn:corolla_polynomial} and \defnref{defn:corolla_differential}.},
\begin{equation}
	\tilde{I}_{\mathcal F} (\Gamma) = \left ( \mathcal D_{\text{\emph{QCD}}} (\Gamma) + \mathcal D_{\text{\emph{EW}}} (\Gamma) \right ) I(\Gamma) \, .
\end{equation}
\end{thm}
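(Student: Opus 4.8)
The plan is to split $\tilde I_{\mathcal F}(\Gamma)$ into a gluonic part and an electroweak part and to handle each separately. The starting observation, already used in the proof of \thmref{thm:corolla_polynomial-electroweak_gauge_bosons}, is that gluons do not couple to the bosons of the electroweak sector; since fermions are omitted, every consistent particle-type labeling of $\Gamma$ (and of each graph obtained from it by graph and cycle homology) is therefore either purely gluonic, possibly with gluon ghosts, or purely electroweak. Hence $\tilde I_{\mathcal F}(\Gamma)$ is the sum of these two disjoint families of contributions, which already matches the additive shape $\bigl(\mathcal D_{\text{QCD}}(\Gamma) + \mathcal D_{\text{EW}}(\Gamma)\bigr) I(\Gamma)$. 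The gluonic family equals $\mathcal D_{\text{QCD}}(\Gamma) I(\Gamma)$ by \thmref{thm:gauge_theory_amplitude_pure_yang-mills}, so the whole content is to identify $\mathcal D_{\text{EW}}(\Gamma) I(\Gamma)$ with the electroweak family.

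Next I would unpack the nested sums in \defnref{defn:corolla_polynomial-electroweak_scalar_bosons}. The outer sum over pairwise disjoint cycles $C_1,\dots,C_i$ carrying the factors $\mathscr G_{C_j}$, together with the alternating sign $(-1)^i$, is structurally identical to the pure Yang--Mills construction, so after the substitutions $a_{h_k}\mapsto\mathscr A_{h_k}$, $b_h\mapsto\mathscr B_h$ of \defnref{defn:corolla_differential} it reproduces the closed ghost loops with their correct signs exactly as in \thmref{thm:gauge_theory_amplitude_pure_yang-mills}. In the terms with $P_{\text{H/G}}=P_{(4)}=\emptyset$ the remaining factors reduce to $\mathscr V_v$ over all non-ghost vertices and the $L$-sum ranges over the valid labelings enumerated by $\mathfrak F_2(\Gamma)$ and $\mathscr P_Z(\mathfrak f)$, so these terms recover \thmref{thm:corolla_polynomial-electroweak_gauge_bosons} and need no new argument. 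It remains to treat the genuinely scalar terms: the sum over $P_{\text{H/G}}$ chooses which non-ghost edges become Higgs or Goldstone edges, the sum over $P_{(4)}$ chooses which of those are contracted into a $4$-valent scalar vertex (in exact analogy with the creation of the $4$-valent gluon vertex in \thmref{thm:gauge_theory_amplitude_pure_yang-mills}), and the sum over $L\in\mathscr L(\Gamma,P_{\text{H/G}},P_{(4)})$ runs over all type assignments consistent with the Feynman rules of \cite[Appendix A]{Prinz} and \cite{Romao_Silva}.

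For a fixed tuple $(C_j,P_{\text{H/G}},P_{(4)},L)$ the corresponding monomial in $\mathcal C^i_{\text{EW}}(\Gamma)$ is, by construction, a product of: $\mathscr V_v$ over purely-gauge trilinear vertices, one $b_h$ and one $(a_{h_+}+a_{h_-})$ for each scalar-scalar-gauge trilinear vertex, and a $b_{h_1}b_{h_2}$ for each $2$-scalar-$2$-gauge quadrilinear vertex; the factor $\operatorname{coupling}(P_{\text{H/G}},P_{(4)},L)$ carries all the $\pm i$'s and coupling constants, and the exponential $e^{(\sum_{e\in P_{(4)}^{[1]}}A_e\xi_e^{\prime 2}-\sum_{e\notin P_{(4)}^{[1]}}A_e m^2_{L(e)})}$ cancels the propagators on the contracted edges and installs the physical masses $m^2_{L(e)}$ on the surviving ones. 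One then checks that, under the replacements of \defnref{defn:corolla_differential} and acting on $I(\Gamma)$, this monomial produces precisely the tensor numerator of the labeled gauge/scalar Feynman graph; here \remref{rem:feynman_gauge} (which reduces all gauge-boson propagator numerators to metric tensors) and the extended Einstein convention of \mbox{Subsection \ref{ssec:einstein_summation_convention}} are what let the whole structure be built from scalar-QFT propagators. The combinatorial bookkeeping is closed by \lemref{lem:symsymiso}, which gives $\operatorname{sym}(\Gamma)/\bigl(\operatorname{sym}(L)\operatorname{iso}(L)\bigr)=1$ for every $L$ so that the $L$-sum does not over- or undercount, and by \cite[Lemma 5.1]{Sars}, which guarantees that the symmetry factors stay correct under the edge contractions passing from $3$- to $4$-valent vertices.

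The main obstacle is this last vertex-by-vertex verification. One must go through the complete list of trilinear and quadrilinear boson vertices of the electroweak sector --- the triple-gauge vertex, the gauge-gauge-scalar vertex, the derivative-coupled scalar-scalar-gauge vertex, the triple-scalar vertex, and the various $4$-valent vertices --- and confirm that the prescribed monomial in the $a$'s and $b$'s, after $a_{h_k}\mapsto\mathscr A_{h_k}$ and $b_h\mapsto\mathscr B_h$, reproduces exactly the corresponding rule of \cite[Appendix A]{Prinz}. The delicate cases are the momentum-dependent scalar-scalar-gauge vertices, where a single first-order operator $\mathscr A_{h_k}$ must generate the right momentum factor with the right sign, and the $2$-scalar-$2$-gauge vertices, where the metric structure has to emerge from the $b_{h_1}b_{h_2}$ term only after the Leibniz-rule expansion that implements the edge contraction and the subsequent residue in the associated Schwinger parameter.
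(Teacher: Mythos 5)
Your proposal is correct and follows essentially the same route as the paper's proof: separate the QCD and electroweak contributions, dispose of the gluonic part via Theorem~\ref{thm:gauge_theory_amplitude_pure_yang-mills}, and then argue that the nested sums of Definition~\ref{defn:corolla_polynomial-electroweak_scalar_bosons} enumerate exactly the admissible ghost cycles, scalar edge labelings, edge contractions and particle-type assignments, with the exponential installing the masses and removing the Schwinger parameters of contracted edges; like the paper, you leave the vertex-by-vertex matching against the Feynman rules of \cite[Appendix A]{Prinz} implicit. One caveat: Lemma~\ref{lem:symsymiso} is proved only for the pure gauge-boson labelings of Theorem~\ref{thm:corolla_polynomial-electroweak_gauge_bosons}, and the paper does not extend the identity \(\operatorname{sym}(\Gamma)/(\operatorname{sym}(L)\operatorname{iso}(L))=1\) to general scalar labelings \(L\) --- which is why this factor is kept explicitly in Definition~\ref{defn:corolla_polynomial-electroweak_scalar_bosons} --- so your invocation of the lemma ``for every \(L\)'' overstates its scope, although this does not invalidate the argument since the compensating factor is built into the definition.
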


\begin{proof}
First note that the scalar particles \(h\), \(\varphi^\pm\) and \(\varphi_Z\) only couple to gauge bosons of the electroweak sector of the Standard Model so that we are allowed to treat it separately from quantum chromodynamics. The contribution for quantum chromodynamics via the action from \(\mathcal D_{\text{QCD}} (\Gamma)\) on \(I(\Gamma)\) was already proven in \thmref{thm:gauge_theory_amplitude_pure_yang-mills}. For the contribution of the electroweak sector of the Standard Model notice that for a given graph \(\Gamma\) every edge that is not going to be turned into a ghost edge is allowed to become a scalar edge (possibly with a restriction on the edge labeling). This is precisely created by the power set of all edges of the graph \(\Gamma\) without the edges that will be turned into ghost edges, i.e. \(\wp \left ( \Gamma^{[1]} \setminus \bigcup_{k=1}^i C_k^{[1]} \right )\) for the contribution with \(i\) ghost loops. Moreover, the creation of 4-valent scalar vertices (2-scalar-2-gauge boson or 4-scalar boson vertices) can be constructed via shrinking suitable internal scalar labeled edges between two 3-valent vertices. Concrete, the requirements are having \(2\) or \(4\) scalar edge neighbors, i.e. edges in the set \(P_{\text{H/G}}\). Furthermore, they are not allowed to share a vertex with an edge which will be turned into a ghost edge (again, we remark that also edges which will be turned into fermion edges are not allowed either, if included). And finally, also adjacent edges are not allowed to shrink simultaneously. Therefore, \(\wp_{(4)} \left ( P_{\text{H/G}} \right )\) consists of all sets of edges that are allowed to shrink in the set \(P_{\text{H/G}}\) in order to produce valid 4-valent scalar vertices. Furthermore, the Feynman rules for the 3-valent 1-scalar-2-gauge boson and 2-scalar-1-gauge boson vertices are created by the additional products of the Corolla polynomial for the electroweak sector of the Standard Model (cf. \defnref{defn:corolla_polynomial-electroweak_scalar_bosons} compared to \defnref{defn:corolla_polynomial}). Moreover, since the original Corolla polynomial acts only on the part(s) of \(\Gamma\) which will be turned into pure gauge boson vertices or their corresponding ghosts, the 4-valent 4-gauge boson vertices are created just like with the standard Corolla differential and can be received as the corresponding residues in the shrinked edge Schwinger parameters, cf. \thmref{thm:gauge_theory_amplitude_pure_yang-mills}. Additionally, the right symmetry factors are obtained by the multiplication with the fraction \(\operatorname{sym} (\Gamma) / \operatorname{sym} (L)\) and possible redundancies are divided out by the factor \(\operatorname{iso} (L)\). The corresponding particle masses are included via the exponential term which also removes the Schwinger parameters for shrinked scalar labeled edges. Finally all corresponding coupling constants for the labeling \(L \in \mathscr L \left ( \Gamma , P_{\text{H/G}}, P_{(4)} \right )\) are given in \(\operatorname{coupling} \left ( P_{\text{H/G}}, P_{(4)} , L \right )\).
\end{proof}

\section{Conclusion} \label{sec:conclusion}
The aim of this paper was to review the generalization of the Corolla polynomial defined in \cite{Kreimer_Yeats,Kreimer_Sars_Suijlekom,Sars} to the bosons of the electroweak sector of the Standard Model, which was first worked out in \cite{Prinz}. Therefore, all the relevant graph theoretic notions (cf. \mbox{Section \ref{sec:graph_theoretic_notions}}), the parametric representation of scalar quantum field theories (cf. \mbox{Section \ref{sec:parametric_representations_of_scalar_quantum_field_theories}}) and the Corolla polynomial and differential for pure Yang-Mills theory (cf. \mbox{Subsection \ref{ssec:corolla_polynomial_pure_yang-mills}}) were reviewed. The inclusion of the bosons of the electroweak sector of the Standard Model was made possible in two steps. First by working out the combinatorics of labeling a 3-regular scalar QFT Feynman graph with labels of the gauge bosons of the electroweak sector of the Standard Model (cf. \mbox{Subsection \ref{ssec:corolla_polynomial_gauge_bosons_electroweak}}). Then secondly by working out the additional tensor structures arising from the inclusion of the Feynman rules for the scalar particles of the electroweak sector of the Standard Model (cf. \mbox{Subsection \ref{ssec:corolla_polynomial_scalar_particles}}). We showed that at least for the gauge bosons of the electroweak sector of the Standard Model the symmetry factors work out correct (cf. \colref{cor:inclusion_ew_gauge_bosons}). Furthermore, a relatively compact notation was given for the inclusion of the scalar particles of the Standard Model, compared to the standard Standard Model Feynman rules (cf. \defnref{defn:corolla_polynomial-electroweak_scalar_bosons}, \defnref{defn:corolla_differential}, \thmref{thm:corolla_polynomial-electroweak_scalar_bosons}, \cite[Appendix A]{Prinz} and \cite{Romao_Silva}).

As projects for future work, several topics related to the Corolla polynomial are of particular interest: First of all, although it is in principle clear \cite{Kreimer_Sars_Suijlekom}, the combinatorics for the inclusion of fermions to the Corolla polynomial for the electroweak sector of the Standard Model should be explicitly worked out. Secondly, as we believe that this approach is useful for computer calculations --- since derivations can be done with much less computational afford than integrations --- it would be useful to bring this approach on a computer. And thirdly, since it is believed that the Corolla polynomial can also be generalized to the case of quantum gravity \cite{Kreimer_Sars_Suijlekom}, it would be interesting to see what we could learn from this approach and its underlying combinatorics.

\section*{Acknowledgments}
It is my pleasure to thank Prof. Dr. Dirk Kreimer, Dr. Christian Bogner and the rest of the group for the welcoming atmosphere, their support and illuminating and helpful discussions!

\bibliography{References}{}
\bibliographystyle{babunsrt}

\end{document}